\theoremstyle{plain}
\newcounter{colN}
\newtheorem{theorem}{Theorem}
\newtheorem{lemma}{Lemma}
\newtheorem{definition}{Definition}
\newtheorem{example}{Example}
\newcommand{\ket}[1]{|#1\rangle}
\newcommand{\bra}[1]{\langle #1|}
\newcommand{\ketbra}[2]{|#1\rangle\langle #2|}
\newcommand{\sgn}{\mathrm{sgn}}
\newcommand{\Tr}{\mathrm{Tr}}
\newcommand{\U}{\mathrm{U}}
\newcommand{\cm}{\mathcal M}
\newcommand{\cn}{\mathcal N}
\newcommand{\ch}{\mathcal H}
\newcommand{\nats}{\mathbb{N}}
\newcommand{\comp}{\mathbb{C}}
\newcommand{\Zl}{\mathbb{Z}}
\newcommand{\spatial}{\mathrm{spatial}}
\newcommand{\spin}{\mathrm{spin}}
\def\Rl{{\mathchoice
{\setbox0=\hbox{$\displaystyle\rm R$}\hbox{\hbox to0pt
{\kern0.4\wd0\vrule height0.9\ht0\hss}\box0}}
{\setbox0=\hbox{$\textstyle\rm R$}\hbox{\hbox to0pt
{\kern0.4\wd0\vrule height0.9\ht0\hss}\box0}}
{\setbox0=\hbox{$\scriptstyle\rm R$}\hbox{\hbox to0pt
{\kern0.4\wd0\vrule height0.9\ht0\hss}\box0}}
{\setbox0=\hbox{$\scriptscriptstyle\rm R$}\hbox{\hbox to0pt
{\kern0.4\wd0\vrule height0.9\ht0\hss}\box0}}}}
\def\Cl{{\mathchoice
{\setbox0=\hbox{$\displaystyle\rm C$}\hbox{\hbox to0pt
{\kern0.4\wd0\vrule height0.9\ht0\hss}\box0}}
{\setbox0=\hbox{$\textstyle\rm C$}\hbox{\hbox to0pt
{\kern0.4\wd0\vrule height0.9\ht0\hss}\box0}}
{\setbox0=\hbox{$\scriptstyle\rm C$}\hbox{\hbox to0pt
{\kern0.4\wd0\vrule height0.9\ht0\hss}\box0}}
{\setbox0=\hbox{$\scriptscriptstyle\rm C$}\hbox{\hbox to0pt
{\kern0.4\wd0\vrule height0.9\ht0\hss}\box0}}}}
\def\I{{\mathbb I}}
\newcommand*{\balancecolsandclearpage}{%
  \close@column@grid
  \clearpage
  \twocolumngrid
}
\def\tocdepth@fullmunge{%
\let\l@section@saved\l@section
\let\l@section\@gobble@tw@
\let\l@subsection@saved\l@subsection
\let\l@subsection\@gobble@tw@
}%
\def\tocdepth@fullrestore{%
\let\l@section\l@section@saved
\let\l@subsection\l@subsection@saved
}%
\newcommand{\hidetoc}[0]{\addtocontents{toc}{\string\tocdepth@fullmunge}}
\newcommand{\restoretoc}[0]{\addtocontents{toc}{\string\tocdepth@fullrestore}}
\newcommand{\IQOQI}{Institute for Quantum Optics and Quantum Information,\\ Austrian Academy of Sciences, Boltzmanngasse 3, A-1090 Vienna, Austria}
\newcommand{\Peri}{Perimeter Institute for Theoretical Physics, 31 Caroline Street North, Waterloo, ON N2L 2Y5, Canada}
\newcommand{\VCQ}{Vienna Center for Quantum Science and Technology (VCQ), Faculty of Physics,\\ University of Vienna, Boltzmanngasse 5, A-1090 Vienna, Austria}
\newcommand{\nocontentsline}[3]{}
\newcommand{\tocless}[2]{\bgroup\let\addcontentsline=\nocontentsline#1{#2}\egroup}
\renewcommand*\l@subsection{\@dottedtocline{1}{1.5em}{2em}}
\crefname{figure}{Figure}{figures}
\begin{document}

{
\makeatletter
\def\frontmatter@thefootnote{%
  \altaffilletter@sw{\@fnsymbol}{\@fnsymbol}{\csname c@\@mpfn\endcsname}%
}%
\makeatother

\title{Invariance under quantum permutations rules out parastatistics}

\author{Manuel Mekonnen}
\email{manuel.mekonnen@oeaw.ac.at}
\affiliation{\IQOQI{}}
\affiliation{\VCQ{}}
\author{Thomas D.\ Galley}
\affiliation{\IQOQI{}}
\affiliation{\VCQ{}}
\author{Markus P.\ M\"uller}
\affiliation{\IQOQI{}}
\affiliation{\VCQ{}}
\affiliation{\Peri{}}

\date{April 30, 2026}

\begin{abstract}
Quantum systems invariant under particle exchange are either Bosons or Fermions, even though quantum theory in principle admits more general behavior under permutations. But why do we not observe such \textit{paraparticles} in nature? The analysis of this question was previously limited primarily to specific quantum field theory models. Here we give two distinct model-independent arguments that rule out parastatistics, i.e.\ fundamentally indistinguishable quantum systems transforming under higher-dimensional representations of the symmetric group, which draw on quantum information theory and recent research on internal quantum reference frames. First, we introduce a notion of \textit{complete invariance}: quantum systems should not only preserve their local state under permutations, but also the quantum information they carry about other systems, in analogy to the notion of complete positivity in quantum information theory. Second, we demand that quantum systems are invariant under \textit{quantum permutations}, i.e.\ permutations conditioned on values of permutation-invariant observables. For both, we show that the respective principle is fulfilled if and only if the particle is a Boson or Fermion. Our results show how  quantum reference frames can shed light on a longstanding problem of quantum physics, they underline the crucial role played by the compositional structure of quantum information, and demonstrate the explanatory power but also subtle limitations of recently proposed quantum covariance principles.
\end{abstract}

\maketitle

\section*{Introduction}

It is a fundamental empirical finding that all known particles in our universe are either Bosons or Fermions. But why is this the case? A standard but incomplete argument sometimes presented in the textbooks is as follows. Relabelling of identical particles should preserve all physical predictions. Thus, given a wavefunction $\psi(x_1,\ldots,x_N)$ for $N$ particles, exchanging, say, the first two particles should only result in a global phase,
\begin{align}
    \psi(x_2,x_1,x_3,\ldots,x_N)=e^{i\theta}\psi(x_1,x_2,\ldots,x_N),
    \label{main:eq:symmetrPost}
\end{align}
which implies that the wavefunction transforms under a one-dimensional representation of the permutation group. It is either symmetric, $e^{i\theta}=+1$ (as for Bosons), or antisymmetric, $e^{i\theta}=-1$ (as for Fermions). This requirement is sometimes known as the symmetrization postulate~\cite{Messiah_Greenberg}.

However, it has been understood since the 1950s~\cite{Green} that quantum theory admits more general behaviors of quantum systems under particle exchange. After all, general quantum states are represented by density operators, not state vectors, and these can preserve their form under actions of the permutation group which are more general than multiplication by a complex phase. 
This would lead to neither Bosonic nor Fermionic statistics, but \textit{parastatistics}. Indeed, hypothetical \textit{paraparticles} have been extensively studied
over the last few decades~\cite{Green,Taylor1,Taylor2,Taylor3,Bialynicki-Birula,Greenberg,Ohnuki,LandshoffStapp,Goyal,Sanchez}.

\begin{figure}[hbt]
\centering 
\includegraphics[width=\columnwidth]{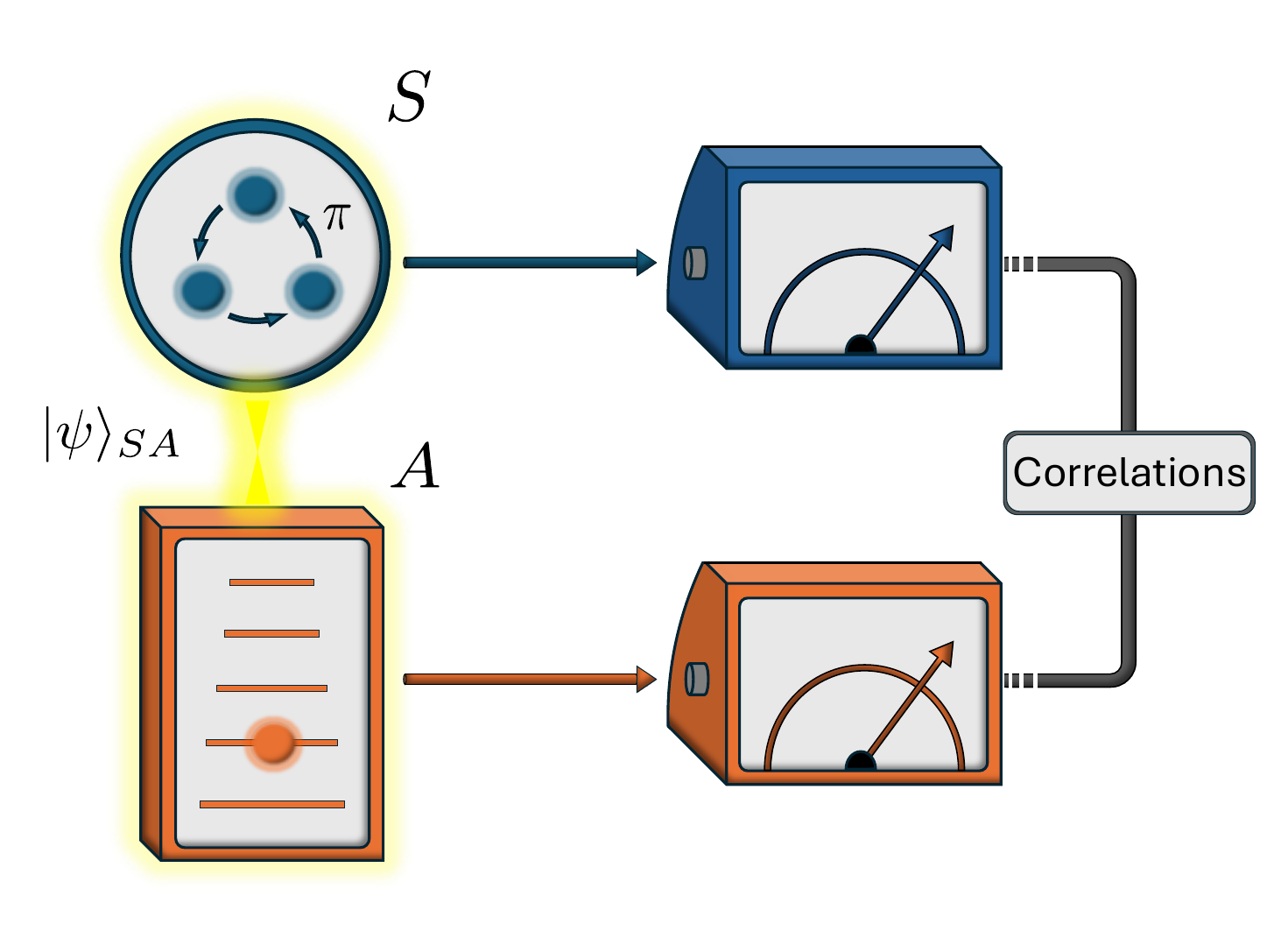}
\caption{\textbf{Complete invariance.} All physical predictions are invariant under permutations $\pi$ of the system of particles $S$. This includes the statistics and correlations of measurements of $S$ and any ancillary system $A$, even if $SA$ is potentially in a pure entangled state $\ket{\psi}_{SA}$.}
\label{main:fig:completeinvariance}
\end{figure}

A widely accepted consequence of this research has been termed the \textit{equivalence thesis}: every consistent theory of paraparticles is physically equivalent to some theory of regular Bosons or Fermions~\cite{Baker}. However, this conclusion has only been obtained under significant assumptions~\cite{Toppan}, such as a ``locality-inspired charge recombination principle'' formulated in the framework of quantum field theory (QFT). Indeed, recent works have theoretically reinforced the consistent possibility of parastatistics ~\cite{ToppanFirst,Toppan2021Parabosons}, including in certain quantum many-body systems~\cite{WangHazzard,Wang2025}, which evade the usual QFT formalism and its locality principles. It is thus natural to ask: is parastatistics excluded by natural arguments that do not rely on (but still apply in a special case to) the framework of QFT?

\begin{figure}[hbt]
\centering 
\includegraphics[width=\columnwidth]{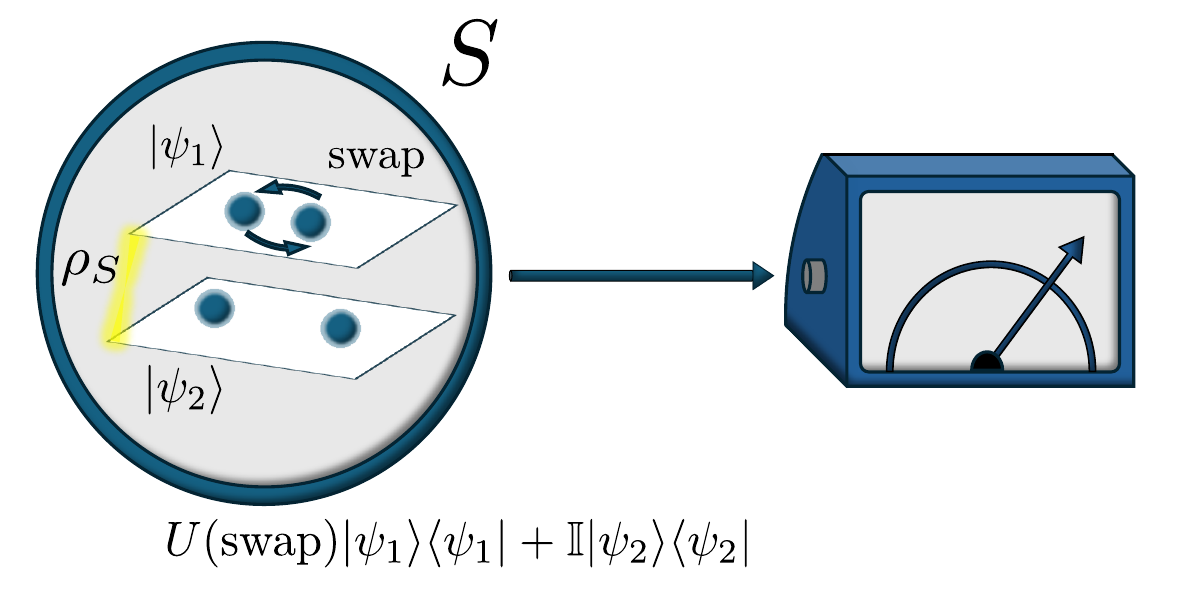}
\caption{\textbf{Invariance under quantum permutations.} The system of particles $S$ with state $\rho_S$ in some superposition of $|\psi_1\rangle$ and $|\psi_2\rangle$ is permuted ``branchwise'', i.e.\ conditioned on the value of a permutation-invariant observable. For example, we could apply $U(\rm swap)$ and swap the particles or apply the identity $\mathbb{I}$ and do nothing, depending on their distance. This must be possible in a way that preserves the statistics of all  measurements on the system. We also consider what kinds of correlations with ancillary systems $A$ are allowed by global invariance of $SA$.}
\label{main:fig:qrf-invariance}
\end{figure}

We answer this question in the affirmative by reconsidering the basic question of what it even means that a quantum system is permutation-invariant. The standard notion says that all observables of the multi-particle system should remain invariant if the particles are exchanged according to a fixed permutation $\pi$. Implementing this unitarily, $\pi\mapsto U(\pi)$, leads to Bosonic and Fermionic, but also parastatistics. This term has been used to refer to other important families of particles with exotic exchange statistics such as anyons, which transform under the braid group, \textit{emergent} parastatistics from ordinary Bosons and Fermions, and paraparticles where the invariant permutation action does more than just swapping modes, as in graded systems described by Lie superalgebras~\cite{ToppanFirst}.

In this work, we rule out parastatistics arising from fundamentally indistinguishable quantum systems transforming under multidimensional irreducible representations of the symmetric group. While not applying to parastatistics in emergent, anyonic and graded systems (as we will discuss), our results are rather general because they apply to all kinds of quantum systems that can be permuted, not only to particles with spatiotemporal locations. To this end, our approach defines permutation invariance directly at the level of Hilbert space operators, and not on the level of classical configuration space~\cite{Landsman}. We take guidance from quantum information theory and research on quantum reference frames, suggesting two important aspects of permutation-invariance that have, to the best of our knowledge, not been considered in this context.

The first aspect is that a quantum system $S$ (say, of $N$ indistinguishable particles) is more than its local, permutation-invariant state: it may contain \textit{quantum information about other systems} $A$ in the form of entanglement, which should also remain invariant under permutations on $S$ (see Figure~\ref{main:fig:completeinvariance}) -- 
even if $SA$ is potentially in a pure entangled state. As we show, if we assume that any total state on $SA$ is invariant under permuting $S$ locally, then this more general form of invariance, which we call \textit{complete invariance}, singles out Bosonic and Fermionic statistics.

A second aspect is that all physical predictions should remain invariant even if the system of particles $S$ is permuted branch-wise, i.e.\ if the permutation $\pi$ is chosen conditionally on the value of some permutation-invariant observable. For example, we could decide to swap two particles, or to do nothing, depending on the value of their distance $d$, see Figure~\ref{main:fig:qrf-invariance}. While classically equivalent to standard permutation-invariance, we show that invariance under such \textit{quantum permutations} is respected by Bosonic and Fermionic, but not by parastatistics.

Furthermore, since we are able to understand quantum permutations as quantum reference frame (QRF) transformations, our results relate in a surprising way to the field of QRFs. There, the idea is to extend the notion of reference frames, and of transformations between them, to quantum theory~\cite{AharonovSusskindCSR,WWW,AharonovKaufherr,Rovelli1991,Bartlett2007,GourSpekkens,Angelo,Palmer2014,Loveridge2017,Loveridge2018,Giacomini,HametteGalley,GiacominiBrukner2020,Trinity,KrummHoehnMueller,CastroRuiz2021,Hamette2021,Ahmad2022,HoehnKrummMueller,GiacominiBrukner2022,Ludescher2022,QRFIndefiniteMetric,GlowackiThesis,Carette2023,Kabel2024}. In this picture, invariance under standard permutations reflects the lack of a fundamental reference frame for labeling particles, while quantum permutation-invariance expresses the absence of a fundamental quantum reference frame for labeling.

\section*{Results}
\subsection*{Standard permutation invariance and parastatistics}
\refstepcounter{section}\label{main:sec:StandPermInvAndParastat}

The notion of indistinguishability of particles in quantum physics describes the fact that particles with identical properties cannot be identified uniquely \textit{in principle}, contrary to classical intuition. This means that shuffling $N$ particles, either passively by changing the labels in their description, or actively by exchanging them via prescribed physical operations, must preserve all physical predictions.

Capturing this in mathematical terms, we associate to each way of shuffling $N$ systems an element $\pi$ in the permutation group $\mathcal{S}_N$. On a given Hilbert space $\mathcal{H}$, we then have some unitary representation $\pi \mapsto U(\pi)$ of this group, depending on the formalism we use to describe indistinguishable particles. For example, in first quantization, a single particle is described by a Hilbert space $\mathcal{H}_1$ such that the total Hilbert space is given by $\mathcal{H}=\mathcal{H}_1^{\otimes N}$. An element $U(\pi)$ then acts by permuting these $N$ tensor factors. For instance, in the case of $N=2$, the swap operation $U(\rm swap)$ would map a state $|\psi \rangle \otimes |\phi \rangle$ to $|\phi \rangle \otimes |\psi \rangle$. In second quantization, and in particular in the formalism of Green's paraparticles that we include in our analysis, this tensor product structure is lost, and so are the aforementioned permutations of tensor factors. Indeed, attempts to define analogous transformations lead to non-unitary and non-physical maps (see Subsection \hyperref[supp:sec:StandPermSecQuant]{F} in Supplementary Note 4 ). Instead, individual systems to be shuffled are associated to modes $k_i$, representing a property such as position or momentum, which can be occupied by some number $n_{k_i}$ of indistinguishable particles. Bosons (Fermions) are described by a Fock space $\mathcal{F}$ generated by creation operators $a_{k_i}^\dagger$ that fulfill the usual (anti)commutation relations, or more general trilinear relations in the case of paraparticles~\cite{Messiah_Greenberg,Green}, see Subsection \hyperref[supp:sec:ParaStatSecQuant]{B} in Supplementary Note 2 . The $N$-particle subspace $\mathcal{F}_N$ of the Fock space is spanned by vectors
\begin{align}
\label{main:eq:FockStates}
    |k_1,...,k_N\rangle_\mathcal{F} := a_{k_1}^\dagger...a_{k_N}^\dagger|0\rangle_\mathcal{F},
\end{align}
where $|0\rangle_\mathcal{F}$ is the vacuum state. Here, an element $U(\pi)$ represents a physical transport operation~\cite{Roos} that permutes the modes, while it acts mathematically by permuting the labels of the modes, and thus the associated creation operators, via $ U(\pi)a_{k_i}^\dagger U(\pi)^ \dagger = a_{k_{\pi(i)}}^\dagger$.

Regardless of the formalism, the concept of indistinguishability can then be formalized by requiring permutation-invariance for all physical states: A state $\rho$ of indistinguishable particles is physically allowed if
\begin{align}
\label{main:eq:StandPermInv}
U(\pi)\rho U(\pi)^\dagger=\rho    
\end{align}
holds for any permutation $\pi$. In second quantization we will only consider situations where we have $N$ distinct modes with each mode occupied by one particle. On the one hand, this is a necessary restriction to make: if the modes have different occupation numbers then the physical systems they represent are simply distinguishable. On the other hand, this restriction still allows for the possibility of physically interesting parastatistical behavior, such as relative $(2\pi/3)$-phase shifts under coherently controlled permutations (see Example \ref{supp:ex:detectPara} in Supplementary Note 2) that are impossible for Bosons and Fermions~\cite{Roos}.

As a direct consequence of Schur's lemma (see the \hyperref[methods:sec:RepTheoryPermInv]{Methods} for details), permutation-invariant states decompose into a direct sum 
\begin{align}
\rho=\bigoplus_{\lambda}\sigma_{\lambda},
\label{eqMTSuperselection}
\end{align}
where the (subnormalized) density matrices $\sigma_\lambda$ are each contained in invariant subspaces $\mathcal{H}_\lambda$, known as superselection sectors. The different sectors, labeled by the distinct irreducible representations (irreps) $\lambda$ of $U(\pi)$, correspond to different particle types. Bosonic (Fermionic) states are contained in the symmetric (antisymmetric) subspace of $\mathcal{H}$ and fulfill $U(\pi)|b\rangle = |b\rangle$ ($U(\pi)|f\rangle = \sgn(\pi)|f\rangle$), thus transforming according to the trivial (sign) irrep of $\mathcal{S}_N$. All other irreps of the permutation group denote exchange statistics which is neither Bosonic nor Fermionic, known as \textit{parastatistics} and obeyed e.g.\ by \textit{paraparticles}. For examples of Bosonic, Fermionic and parastatistical states in first and second quantization formalisms, see Supplementary Note \hyperref[supp:sec:DescPara]{2}.

\subsection*{Complete Permutation Invariance}
\refstepcounter{section}\label{main:sec:CompleteInvariance}

So far, we have been considering any 
system $S$ of $N$ indistinguishable particles in isolation: implementing invariance on the level of states, i.e.\ demanding $U(\pi)_S\rho_S U(\pi)_S^\dagger=\rho_S$, led to the form of $\rho=\rho_S$ above, which includes the possibility of para\-statistics. However, $S$ will typically be part of a larger quantum system $SA$, consisting of itself and some ancillary system $A$, which may be a different type of system on which the permutations do not act. Then, the quantum state $\rho_S$ will be the local reduced state of an \textit{extension} $\rho_{SA}$, i.e.\ of a quantum state that satisfies $\rho_S={\rm Tr}_A\rho_{SA}$. The assumption that this global state is invariant under local permutations of $S$ means
\begin{align}
\label{eqMTInvLocalPerm}
\big (U(\pi)_S\otimes\I_A \big ) \rho_{SA} \big (U(\pi)_S^\dagger\otimes\I_A\big ) = \rho_{SA}.
\end{align}
In analogy with the notion of complete positivity from quantum information theory, we will say that $\rho_S$ is \textit{completely invariant} if Eq.~(\ref{eqMTInvLocalPerm}) holds for \textit{all} possible extensions $\rho_{SA}$. Complete invariance implies invariance, but the converse is not in general true.

It turns out that complete invariance follows already from the existence of a single pure state $\rho_{SA}=|\psi\rangle\langle\psi|_{SA}$, i.e.\ a \textit{purification}~\cite{NielsenChuang} of $\rho_S$, that satisfies Eq.~(\ref{eqMTInvLocalPerm}). In this case, Eq.~(\ref{eqMTInvLocalPerm}) can be interpreted as saying that not only is the local state of $S$ preserved under permutations (as in standard invariance), but so is the quantum information that it carries about other systems $A$.

It turns out to be the case that complete invariance is a crucial difference between ordinary statistics and parastatistics. Requiring this property rules out the latter:
\begin{theorem}
\label{main:th:CompleteInvariance}
A quantum state $\rho_S$ is completely invariant under permutations if and only if it is fully supported on either the Bosonic or the Fermionic subspace.
\end{theorem}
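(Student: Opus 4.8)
The plan is to reduce the condition of complete invariance -- a universal quantifier over all extensions $\rho_{SA}$ -- to a transparent spectral condition on $\rho_S$ itself, and then read off the answer from the fact that $\mathcal{S}_N$ has exactly two one-dimensional irreps.

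First I would argue that it suffices to test invariance on a single purification. Any extension $\rho_{SA}$ can itself be purified to a vector $|\psi\rangle_{SAB}$, which is again a purification of $\rho_S$; since $U(\pi)_S\otimes\I$ commutes with any isometry acting on the purifying system alone, all purifications are simultaneously invariant or not, and invariance of a purification forces invariance of any extension obtained by tracing out part of the environment (the prefactor squares to one). Hence complete invariance is equivalent to the invariance of the canonical purification $|\psi_0\rangle=\sum_i\sqrt{p_i}\,|i\rangle_S|i\rangle_A$ built from the eigendecomposition $\rho_S=\sum_i p_i|i\rangle\langle i|$ with $p_i>0$.

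Next I would unpack what invariance of a pure state means. The condition~(\ref{eqMTInvLocalPerm}) for $\rho_{SA}=|\psi_0\rangle\langle\psi_0|$ says that $(U(\pi)_S\otimes\I)|\psi_0\rangle$ is proportional to $|\psi_0\rangle$, i.e.\ $(U(\pi)_S\otimes\I)|\psi_0\rangle=c_\pi|\psi_0\rangle$ for a phase $c_\pi$. Because $U$ is a representation, $\pi\mapsto c_\pi$ is a homomorphism $\mathcal{S}_N\to\U(1)$, hence a one-dimensional character, and for $N\ge 2$ the only options are the trivial character and $\sgn$. Expanding $|\psi_0\rangle$ and using orthonormality of the $|i\rangle_A$, the eigenvector equation separates into $U(\pi)_S|i\rangle=c_\pi|i\rangle$ for every $i$, with one common character $c_\pi$. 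Thus every eigenvector of $\rho_S$ carries the trivial irrep (Bosonic) or every one carries $\sgn$ (Fermionic), and the two cases cannot be mixed -- exactly the claim that $\rho_S$ is fully supported on the Bosonic or on the Fermionic subspace. For the converse I would note that if $\rho_S$ is supported on the Bosonic (Fermionic) subspace, then $U(\pi)_S$ restricted to $\mathrm{supp}(\rho_S)$ equals $\I$ (respectively $\sgn(\pi)\I$); since $\mathrm{supp}(\rho_{SA})\subseteq\mathrm{supp}(\rho_S)\otimes\mathcal{H}_A$ for any extension, conjugation by $U(\pi)_S\otimes\I$ acts trivially, so every extension is invariant.

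The main obstacle I anticipate is the first reduction: justifying carefully that the universal quantifier over all extensions collapses to the single canonical purification, and that the induced phases $c_\pi$ assemble into a genuine group character rather than a merely projective one. Once this is in place the conclusion is forced by elementary representation theory of $\mathcal{S}_N$: the restriction of $U(\pi)$ to a higher-dimensional parastatistical sector $\mathcal{H}_\lambda$ has no one-dimensional invariant subspace by irreducibility, so no eigenvector of $\rho_S$ lying in such a sector can be a joint eigenvector of the $U(\pi)_S$, and complete invariance must therefore fail whenever $\rho_S$ has support on parastatistics.
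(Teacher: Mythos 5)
Your proposal is correct, and it reaches the theorem by a genuinely different route than the paper. The paper's argument (Lemma~\ref{methods:lem:LemCompleteInvariance}, resting on the Schur-type decomposition of Lemma~\ref{supp:lem:LemSchurConsequence}) runs everything through the isotypic structure: any extension invariant under $U(\pi)_S\otimes\I_A$ must have the form $\bigoplus_\lambda p_\lambda\frac{\I_\lambda}{d_\lambda}\otimes\rho_{\lambda,A}$, purity then forces $d_\lambda=1$, and the converse is the projector argument you also give. You do two things differently. First, you collapse the quantifier over all extensions to the single canonical purification by invoking uniqueness of purifications up to partial isometries on the purifying system; the paper never takes this step, since it deduces invariance of arbitrary extensions from the Bosonic/Fermionic support rather than from purification equivalence. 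Note that your step proves the equivalence of conditions (i)--(iii) of Lemma~\ref{methods:lem:LemCompleteInvariance} for \emph{any} symmetry group with no representation theory at all, which is a nice structural insight in its own right. Second, you replace the isotypic decomposition by the elementary observation that $(U(\pi)_S\otimes\I_A)\ket{\psi_0}=c_\pi\ket{\psi_0}$ makes $\pi\mapsto c_\pi$ a one-dimensional character of $S_N$ (here using that $\pi\mapsto U(\pi)$ is a genuine, not merely projective, representation---this is precisely what dissolves the obstacle you flag at the end), and the Schmidt form of $\ket{\psi_0}$ transfers this single character to every eigenvector of $\rho_S$. Your route is more elementary and self-contained: it needs only purification uniqueness and the fact that the one-dimensional characters of $S_N$ are the trivial and sign characters. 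What the paper's machinery buys in exchange is reusability and extra information: Lemma~\ref{supp:lem:LemSchurConsequence} yields the explicit form of all (not necessarily pure) invariant extensions, which the paper needs again for the quantum-permutation results (Theorem~\ref{main:th:QuantPermInvRulesOutPara}, Theorem~\ref{methods:th:MainTheorem2}, and the tables), and it transfers verbatim to arbitrary compact groups in the Appendix---though your character argument would generalize there as well. One phrase to tighten: a joint eigenvector cannot lie in a parastatistical sector because that sector is \emph{isotypic} for an irrep of dimension $d_\lambda\geq 2$ (so a one-dimensional invariant subspace inside it would contradict the decomposition into copies of $\lambda$); the sector itself is not irreducible, so "by irreducibility" should be stated with that small refinement.
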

The full proof of this is deferred to the \hyperref[methods:sec:FormalismCI]{Methods} section, but in essence, it works by virtue of showing that complete invariance of a state $\rho_S$ is equivalent to the existence of a single purification $|\psi\rangle_{SA}$ of $\rho_S$ which is invariant under local permutations $U(\pi)_S \otimes \mathbb{I}_A$ on $S$. Furthermore, given any invariant purification, we find that its marginal on $S$ must be either a Bosonic or Fermionic state, thus ruling out invariant purifications for paraparticles.

Besides being a natural notion from the perspective of quantum information theory, another possible motivation for complete invariance comes from the idea that the universe should always be in a pure quantum state, and that all mixed states of subsystems $S$ should be understood as improper mixtures, arising from entanglement with the rest of the world $A$. This is in line with some versions of the many-worlds interpretation of quantum mechanics, and it resembles an attitude of quantum information theory that has been termed the ``church of the larger Hilbert space''. Furthermore, complete invariance expresses a strong compatibility of symmetries and compositions of systems: if we have a system $S$ in our lab, and we know that it is prepared in a completely invariant state (e.g.\ by measuring the projector onto the Bosonic or Fermionic subspace), we can be sure that \textit{nothing else in the universe will change under local permutations}, regardless of how $S$ is correlated or entangled with other systems $A$. Similarly as for complete positivity, a more detailed analysis of this notion could benefit from the category-theoretic methods of process theories, analyzing the compositional structure of quantum theory in a more systematic way.

While complete invariance is a natural requirement for fundamentally indistinguishable particles, it is not expected to hold in systems with emergent symmetries. This is illustrated by Example \ref{ExEmergent} in Supplementary Note 2 and Subsection \hyperref[SubsecFailure]{D} in Supplementary Note 3. Beyond permutations, it is natural to expect that complete invariance also holds for fundamental gauge symmetries in quantum field theory. In Subsection \hyperref[SuppSecCompleteInv]{E} in Supplementary Note 3, we prove a generalization of Theorem~\ref{main:th:CompleteInvariance}: complete invariance under a compact group $\mathcal{G}$ is equivalent to the system $S$ transforming under a one-dimensional representation of $\mathcal{G}$. Even though it is beyond our mathematical framework, note that Dirac quantization implements gauge symmetries via constraints such as $C|\psi\rangle=0$ (for example, the Gupta-Bleuler condition $C=\partial^\mu A_\mu^+$ in quantum electrodynamics~\cite{Tong}). The gauge transformations $G$ generated by $C$ hence satisfy $G|\psi\rangle=|\psi\rangle$ (or, in more general Yang-Mills theories, $G|\psi\rangle=e^{i\theta}|\psi\rangle$~\cite{Jackiw}), corresponding to one-dimensional representations, even if the gauge group is non-Abelian. Hence, if the mathematical framework can be suitably generalized, we expect complete invariance to hold here, confirming the natural intuition that it expresses ``nothing in the world to change whatsoever'' under gauge transformations.

\subsection*{Quantum Permutation Invariance}
\refstepcounter{section}\label{main:sec:QuantPermAsQRFs}

We now consider another extension of standard permutation invariance which is naturally motivated by quantum reference frames, as we explain further below: \textit{invariance under a quantum permutation group}. In contrast to the previous argument, this one rules out parastatistics in both first and second quantization.

A quantum permutation can be thought of as a collection of permutations, each one acting independently on a different branch of a superposition -- a coherently controlled, conditional permutation. For example, if we have $N=2$ particles, we might swap them if their distance is less than $d$, or otherwise do nothing (see Figure \ref{main:fig:qrf-invariance}), as implemented by
\begin{align}
\label{main:eq:ExampleDistanceSwap}
    V_+=U({\rm swap})\cdot P +\I\cdot(\I-P),
\end{align}
where $U({\rm swap})|x_1,x_2\rangle=|x_2,x_1\rangle$, and $P|x_1,x_2\rangle=|x_1,x_2\rangle$ if $|x_1-x_2|<d$ and $0$ otherwise. The two branches here are defined by the distance between a pair of particles. Generally, the branches can be defined by any projective measurement that is itself permutation-invariant. The projective action on the separate branches must be the same as that of the base representation $U(\pi)$, and if we do the same permutation on each branch, $\vec{\pi}=(\pi,\ldots,\pi)$, then we must obtain the projective action of $U(\pi)$. This leads to the following definition, with an arbitrary phase $\theta(\pi)$ depending on the permutation $\pi$:

\begin{definition}
\label{main:def:QuantumPerm}
A quantum permutation is a unitary transformation of the form
\begin{align}
\label{main:eq:QuantumPermDef}
V(\vec{\pi})=\sum_j e^{i\theta(\pi_j)}U(\pi_j)P_j,
\end{align}
where $\{P_j\}$ is a permutation-invariant projective measurement, i.e.\ $[U(\pi),P_j]=0$ for all $\pi\in S_N$ and all $j$. If $\vec{\pi}\mapsto V(\vec{\pi})$ is a projective representation of $S_N\times S_N\times\ldots\times S_N$, then we call the set $\mathcal{V}$ of all $V(\vec{\pi})$ a \textbf{quantum permutation group}. We call $\mathcal{V}$ maximal if the $P_j$ have finite and smallest possible rank under these requirements.
\end{definition}

Now, while all Bosonic states are invariant under $V_+$, a quantum permutation such as 
\begin{align}
\label{main:eq:ExampleDistanceSwapMinus}
    V_-=-U(\rm swap)\cdot P +\I\cdot(\I-P)
\end{align}
leaves all Fermionic states invariant. Moreover, we find that elements such as $V_+$ and $V_-$ are separately contained in different quantum permutation groups, each of which still leaves all Bosonic \textit{or} all Fermionic states invariant. As it turns out, there exists no such group for all paraparticle states, leading to our main result ruling them out: 

\begin{theorem}
\label{main:th:QuantPermInvRulesOutPara}
For any system $S$ of indistinguishable particles and any admissible $\{P_j\}$, there exist exactly two quantum permutation groups $\mathcal{{V}_+}$ and $\mathcal{V}_-$, respectively containing elements of the following forms:
\begin{itemize}
    \item $V_+(\vec{\pi})= \sum_j U(\pi_j)P_j$,
    \item $V_-(\vec{\pi})= \sum_j \sgn(\pi_j)U(\pi_j) P_j$.
\end{itemize} 
All Bosonic (Fermionic) states are invariant under $\mathcal{V}_+$ ($\mathcal{V}_-$). In contrast, any paraparticle state invariant under either group must be diagonal in the basis $\{P_j\}$. Hence, paraparticle systems invariant under any maximal quantum permutation group can neither be entangled nor interact unitarily with any ancillary system, in contrast to Bosonic and Fermionic systems.
\end{theorem}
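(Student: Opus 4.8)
The plan is to prove the statement in three stages: classify the admissible phases $\theta$ to show that exactly two quantum permutation groups exist, verify invariance of ordinary states by exhibiting fixed vectors, and rule out paraparticle coherences by a block decomposition; maximality then upgrades this to the operational corollary. For the first stage I would compute a product of two quantum permutations. Using $[U(\pi),P_j]=0$ and $P_jP_k=\delta_{jk}P_j$, the cross terms collapse, giving
\[
V(\vec\pi)V(\vec\sigma)=\sum_j e^{i[\theta(\pi_j)+\theta(\sigma_j)]}\,U(\pi_j\sigma_j)\,P_j .
\]
For $\vec\pi\mapsto V(\vec\pi)$ to be a projective representation of $S_N\times\cdots\times S_N$, this must equal a single scalar $\omega(\vec\pi,\vec\sigma)$ times $V(\vec\pi\vec\sigma)$. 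Since the operators $U(\tau)P_j$ live on the mutually orthogonal ranges of the $P_j$, multiplying by $P_j$ isolates each branch and forces $e^{i[\theta(\pi_j)+\theta(\sigma_j)-\theta(\pi_j\sigma_j)]}=\omega(\vec\pi,\vec\sigma)$ for every $j$ at once. Because the branches of the product group are independent and there are at least two of them (any genuinely conditional, in particular any maximal, measurement has $\geq 2$ outcomes), the $2$-cochain $c(\pi,\sigma):=\theta(\pi)+\theta(\sigma)-\theta(\pi\sigma)$ cannot depend on the branch, so it is globally constant; normalizing $\theta(e)=0$ gives $c\equiv 0$. Thus $\theta:S_N\to\mathrm{U}(1)$ is a homomorphism, and since $S_N$ has abelianization $\mathbb{Z}/2$ for $N\geq 2$ there are exactly two, $e^{i\theta}\equiv 1$ and $e^{i\theta}=\sgn$, yielding precisely $\mathcal{V}_+$ and $\mathcal{V}_-$ (both honest linear representations, as $\omega\equiv 1$).

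Next I would settle invariance. A symmetric vector satisfies $U(\pi)\ket{b}=\ket{b}$, and $P_j$ preserves the symmetric subspace since it lies in the commutant of $U$, so $V_+(\vec\pi)\ket{b}=\sum_j U(\pi_j)P_j\ket{b}=\sum_j P_j\ket{b}=\ket{b}$; hence every Bosonic state is fixed by $\mathcal{V}_+$, and the same computation with the $\sgn(\pi_j)$ factors cancelling in pairs shows every Fermionic state is fixed by $\mathcal{V}_-$. For the paraparticle case I would decompose an invariant $\rho$ into blocks $\rho_{jk}:=P_j\rho P_k$. As $U$ commutes with the $P_j$, each term $U(\pi_j)\rho_{jk}U(\pi_k)^\dagger$ stays in the $(j,k)$ block, so the invariance condition decouples into $U(\pi_j)\rho_{jk}U(\pi_k)^\dagger=\rho_{jk}$ (with prefactor $\sgn(\pi_j)\sgn(\pi_k)$ for $\mathcal{V}_-$). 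Using branch independence and setting $\pi_k=e$ gives $U(\pi)\rho_{jk}=\rho_{jk}$ (resp. $\sgn(\pi)\rho_{jk}$), so every column of an off-diagonal block $\rho_{jk}$, $j\neq k$, lies in the Bosonic (resp. Fermionic) subspace. But a paraparticle $\rho$ is supported on a sector $\mathcal{H}_\lambda$ orthogonal to both, which $P_j$ preserves; hence $\rho_{jk}=0$ and $\rho=\sum_j P_j\rho P_j$ is diagonal in $\{P_j\}$.

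Finally, for the operational corollary I would invoke maximality: the minimal permutation-invariant projectors resolve a basis of each multiplicity space, so diagonality in $\{P_j\}$ means no coherence between distinct multiplicity vectors. Applying the block argument to an extension with projectors $P_j\otimes\I_A$ forces any globally invariant $\rho_{SA}$ to be block diagonal, so $A$ can correlate only classically with the branch label while each block pins $S$ to a single multiplicity vector, leaving no room for entanglement or for a nontrivial invariant unitary coupling. I expect this last step to be the main obstacle: turning ``diagonal in $\{P_j\}$'' into the statement that no entanglement or unitary interaction survives requires pinning down what counts as an admissible ancillary coupling and a careful use of maximality, whereas the algebra of the first two stages is essentially forced once the cochain $c$ is shown to be branch-independent, which is itself the one delicate point.
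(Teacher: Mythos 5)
Your proposal is correct, and its first and third stages land on the same mathematical content as the paper, by routes that differ only in packaging. Stage one is essentially the paper's Lemma~\ref{methods:lem:proj_SNQ_rep}: the same product computation, the same observation that the scalar $e^{i[\theta(\pi_j)+\theta(\sigma_j)-\theta(\pi_j\sigma_j)]}$ must be the same in every branch, hence (using $Q\geq 2$ and the independence of the branches) a constant, and the same conclusion that after normalization $e^{i\theta(\cdot)}$ is a one-dimensional linear representation of $S_N$, i.e.\ trivial or sign. For stage three the paper argues representation-theoretically: it first shows (Lemma~\ref{methods:lem:inequivReps}) that for $\lambda\notin\{\lambda_{\rm Bos},\lambda_{\rm Ferm}\}$ the invariant subspaces $\mathcal{H}_{\lambda,j}$ with distinct $j$ carry pairwise \emph{inequivalent} irreps of $S_N^Q$ --- the proof there is exactly your specialization $\pi_k=\I$, which would force $U_\lambda(\pi)=\omega(\pi)\I$ --- and then invokes Schur's lemma (Lemma~\ref{supp:lem:LemSchurConsequence}) to obtain the invariant-state form~(\ref{methods:eq:QPermInvStatesFirstCase}). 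Your direct block decomposition $\rho_{jk}=P_j\rho P_k$ proves the vanishing of the off-diagonal blocks by hand and is a valid, more elementary and self-contained substitute; note only that to get the full ``diagonal'' form (the $\I_\lambda/d_\lambda$ factor on each surviving block) you must still apply the unconditional case $\vec\pi=(\pi,\ldots,\pi)$ to the diagonal blocks, which your appeal to maximality does implicitly.

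The step you flag as the main obstacle is, reassuringly, exactly the step the paper does not prove from first principles either. Your $P_j\otimes\I_A$ block argument already reproduces the paper's conclusion that every invariant extension of a paraparticle state has the separable form~(\ref{methods:eq:RhoSALambda}), a classical mixture over the branch label $j$ of states $\frac{\I_{\lambda,j}}{d_\lambda}\otimes\rho_A^{(\lambda,j)}$, so the no-entanglement claim is complete as you state it. For the no-unitary-interaction claim, the paper observes that all invariant states of $S$ are then diagonal in one fixed basis, so $S$ is effectively a classical system, and cites the theorem of Galley, Giacomini, and Selby~\cite{GGS} that any consistent coupling between a classical and a quantum system is fundamentally irreversible: no unitary evolution generated by a Hamiltonian $H_{SA}\neq H_S+H_A$ can preserve that classical structure. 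So the gap you anticipate is closed in the paper by a citation rather than by further analysis; with that reference added, your proof is complete.
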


Furthermore, if we require quantum permutation-invariance with respect to all possible projective measurements $\{P_j\}$ (which holds for Bosons and Fermions), then every paraparticle system must be completely uncorrelated with its environment. The proofs can be found in the \hyperref[methods:sec:FormalismQuantPerm]{Methods} section. As a direct consequence, one finds that in the case of only considering classical states of indistinguishable particles, i.e. diagonal states in a given basis, requiring this conditional kind of permutation-invariance does not imply more restrictions than standard permutation-invariance. The two notions are thus classically equivalent. Tables~\ref{methods:tab:summarytab1} and~\ref{methods:tab:summarytab2} show the forms of invariant states under standard, quantum and strong quantum permutation-invariance, and the possible correlations with ancillary systems.

For the remainder of this section, we will relate our results to the field of quantum reference frames (QRFs). Initiated decades ago~\cite{AharonovSusskindCSR,WWW,AharonovKaufherr,Rovelli1991} it has recently received a significant amount of attention in quantum information, quantum foundations, and quantum gravity research~\cite{Bartlett2007,GourSpekkens,Angelo,Palmer2014,Loveridge2017,Loveridge2018,Giacomini,HametteGalley,GiacominiBrukner2020,Trinity,KrummHoehnMueller,CastroRuiz2021,Hamette2021,Ahmad2022,HoehnKrummMueller,GiacominiBrukner2022,Ludescher2022,QRFIndefiniteMetric,GlowackiThesis,Carette2023,Kabel2024}. We will now argue that quantum permutations can be understood as QRF transformations. Since indistinguishable particles lack individual identity, there is no natural, physically preferred way to label them. Hence there exists a relabeling group of elements $U(\pi)$ preserving all physical predictions, and we can understand standard permutation invariance as the consequence of a \textit{lack of a fundamental reference frame for labeling}. A choice of labeling then means breaking permutation-invariance and  choosing a reference frame from which to describe the system: for example, we can identify the electron that is closer to Alice's (Bob's) laboratory as the ``first (second) particle''. 

Now, as a natural requirement strengthening this, we argue that there is also no physically preferred way to label particles -- or to identify labeling conventions -- \textit{across branches} of superpositions. For example, what is labeled as ``first particle" on one branch may be called ``second particle" on another. Hence there exists a \textit{quantum} relabeling group preserving all physical predictions, and choices of labeling on each branch constitute a choice of a \textit{quantum reference frame}.

QRF transformations in general are understood as coherently-controlled classical reference frame changes~\cite{HametteGalley}, where the controlling branches are themselves defined in a frame-independent way~\cite{Cepollaro}, or as state-dependent gauge transformations~\cite{Butterfield}. Hence, quantum permutations can be interpreted as QRF transformations. As abstract groups, they are $S_N\times S_N\times\ldots\times S_N$ (one independent choice of permutation per branch), but the specific way they act on a Hilbert space depends on the type of physical system (via $\mathcal{V}_+$ for Bosons, via $\mathcal{V}_-$ for Fermions). This is a well-known general phenomenon in quantum physics: for example, electrons carry a different representation of the rotation group than photons. Indeed, it is well-known that applying an inappropriate quantum permutation group, such as $\mathcal{V}_+$ via active permutations to Fermions, will violate invariance and lead to physically detectable relative phases~\cite{Roos}. Hence, elements of $\mathcal{V}_+$ (understood either as active transformations or passive relabellings) are symmetry transformations for Bosons, but not for Fermions (and vice versa for $\mathcal{V}_-$ and Fermions). Similarly, for other types of systems without fundamental permutation invariance, e.g.\ anyons, none of the two groups would be symmetry transformations, and our results do not apply.

\section*{Discussion}
\refstepcounter{section}\label{main:sec:ConsequencesAndOutlook}

One of the goals of the QRF research program is to obtain novel physical predictions by postulating some version of covariance of the physical laws under QRF transformations~\cite{GiacominiBrukner2020,GiacominiBrukner2022,QRFIndefiniteMetric,Kabel2024}, such as the one formulated in~\cite{QRFIndefiniteMetric}: \textit{``Physical laws retain their form under quantum coordinate transformations.''} (for details see Subsection \hyperref[supp:sec:ImplicationsQuantCovPrinciples]{J} in Supplementary Note 5). This extended covariance principle in the context of quantum gravity is but one example of a larger class of quantum generalizations of symmetry principles considered in the literature, including also, for example, proposals for a quantum version of the equivalence principle~\cite{HardyQEP,GiacominiBrukner2020,GiacominiBrukner2022}, or a notion of quantum conformal symmetries~\cite{Kabel2024}. Our result is an instance of this: postulating invariance under quantum permutations for fundamental particles predicts the empirically correct absence of parastatistics. While this supports the idea that such principles can be predictively powerful, it also motivates some caution: we show that whether or not a given quantum permutation is a QRF transformation depends on the type of system (here Bosonic or Fermionic) it acts on. Therefore, only \textit{some} transformations will in general preserve the physical predictions for a given system, and we suggest the following modification of the postulate:

Physical laws retain their form under a suitable representation of the quantum coordinate transformation group.

Our results rule out fundamental parastatistics defined by multidimensional representations of the symmetric group. They do not exclude the possibility of emergent parastatistics, and hence do not stand in opposition to recent results showing the detectability of parastatistical correlations in many-body and quasiparticle systems \cite{Toppan2021Parabosons,Wang2025, WangHazzard}. In Example \ref{ExEmergent} in Supplementary Note 2 and Subsection \hyperref[SubsecFailure]{D} in Supplementary Note 3, we show that the intuitive reason for this is that permutations $U(\pi)$ on the emergent system $S$ are typically accompanied by non-trivial permutation actions on the complement of $S$, violating both our stronger notions of invariance in a natural way.

In principle, our mathematical results apply to all unitary representations of the symmetric group, enforcing that $U(\pi)$ acts either trivially or with the sign representation on the physical states. If $U(\pi)$ simply permutes the tensor factors or modes (as assumed above), then the corresponding symmetric and antisymmetric subspaces describe Bosons and Fermions, such that our results rule out parastatistics. However, if $U(\pi)$ is a more exotic representation that e.g.\ couples the permutations to internal degrees of freedom, then these allowed states may still have an interpretation as particles with exotic exchange statistics. For example, this happens with $\mathbb{Z}_2\times\mathbb{Z}_2$-graded parastatistics~\cite{ToppanFirst}, where creation operators are exchanged with a grading-dependent phase $\hat a^\dagger \hat b^\dagger\mapsto (-1)^{\langle \alpha,\beta\rangle} \hat b^\dagger \hat a^\dagger$. Such systems transform trivially under exotic permutation actions and are therefore consistent with our principles of quantum permutation invariance and complete invariance as applied to those exotic permutation actions.

Our work raises a number of interesting follow-up questions. May complete invariance point towards a deeper principle related to composing subsystems consistently, in the presence of more abstract, structural properties? Can further transformation properties of fundamental particles be understood as consequences of quantum covariance principles? These future questions notwithstanding, we believe that our results provide a novel perspective on the nature of Bosons and Fermions and thus of the very building blocks of our universe.

\section*{Methods}
\label{methods:sec:methods}

\subsection*{Representation theory and permutation-invariance}
\refstepcounter{section}\label{methods:sec:RepTheoryPermInv}

Suppose that we have $N\geq 2$ and a complex, separable Hilbert space $\mathcal{H}_1$ describing particles or modes, depending on which quantization formalism we use. We assume that the dimension of $\mathcal{H}_1$ is at least two, but at most countably-infinite, thus the total Hilbert space is $\mathcal{H}=\mathcal{H}_1^{\otimes N}$. We then have some unitary representation $\pi\mapsto U(\pi)$ of the permutation group $S_N$ whose irreps are labelled by Young diagrams (also called Young frames~\cite{Simon}) $\lambda$. The two special cases are
\begin{align}
\label{methods:eq:YoungDiag}
\ytableausetup
{mathmode, boxframe=normal, boxsize=1em}
\lambda_{\rm Bos}=\underbrace{\ydiagram{2}\ldots\ydiagram{1}}_{\textstyle N\mbox{ boxes}}
\quad \mbox{ and }\quad \lambda_{\rm Ferm}=\left.\begin{array}{c}\ydiagram{1,1}\\ \vdots \\ \ydiagram{1}\end{array}\right\} N \mbox{ boxes}
\end{align}
denoting one-dimensional representations of $S_N$: the trivial representation and the sign representation, respectively. These correspond to Bosons and Fermions. All other  Young frames $\lambda$ correspond to irreps that are at least two-dimensional, and these describe irreps of $S_N$ associated with paraparticles~\cite{Green,Peres}. Since $S_N$ is a finite group, its representation decomposes as
\begin{equation}
U(\pi)=\bigoplus_\lambda U_\lambda(\pi)\otimes \I_{n_\lambda}, 
\label{methods:eq:UDecomp}
\end{equation}
with the Hilbert space decomposing as
\begin{equation}
\mathcal{H}=\bigoplus_\lambda \mathcal{M}_\lambda\otimes\mathcal{N}_\lambda.
\label{methods:eq:DecompHS}
\end{equation}
Here, $n_\lambda=\dim(\mathcal{N}_\lambda)$ (which may be infinite) denotes the number of copies of the irrep $\lambda$ and $U_\lambda(\pi)$ the representation matrix on the irrep subspace $\mathcal{M}_\lambda$. Since these abstract decompositions apply to all unitary representations of $S_N$ on separable Hilbert spaces, our results will apply broadly to all of them.

Let us now postulate permutation-invariance: only quantum states $\rho$ with $U(\pi)\rho U(\pi)^\dagger=\rho$ are allowed to describe the results of physical preparation procedures. Hence, quantum states must be described by density operators that commute with all $U(\pi)$ and are of the form
\begin{equation}
\rho=\bigoplus_{\lambda}p_\lambda\frac{\I_{\lambda}}{d_{\lambda}}\otimes \rho_{\lambda},
\label{methods:eq:Superselection}
\end{equation}
where $d_\lambda$ is the dimension of the irrep $\lambda$, $\{p_\lambda\}_{\lambda}$ is a probability distribution, and the $\rho_\lambda$ are density matrices. All decompositions above are the direct result of Lemma \ref{lem:direct_sum_tensor_isomorphism} in Supplementary Note 1.

\subsection*{Formalism of complete invariance}
\refstepcounter{section}\label{methods:sec:FormalismCI}

Complete invariance dictates that ``physics is invariant'' under local permutations of $S$, even in the presence of an arbitrary ancillary system. Given a permutation-invariant state $\rho_S$ on a system $S$ of indistinguishable particles, it is expressed by the condition that all possible extensions $\rho_{SA}$ with $\rho_S=\Tr(\rho_{SA})$ are invariant under local permutations, i.e. 
\begin{equation}
\big (U(\pi)_S\otimes\I_A \big ) \rho_{SA} \big (U(\pi)_S^\dagger\otimes\I_A\big ) = \rho_{SA},
\label{methods:eq:CompleteInvariance}
\end{equation}
for all permutations $\pi$ any ancillary system $A$.

The following lemma shows that this notion can be mathematically defined in different ways.
\begin{lemma}
\label{methods:lem:LemCompleteInvariance}
The following conditions are all equivalent, and can be used to define what it means that a quantum state $\rho_S$ is \textbf{completely invariant} under permutations:
\begin{itemize}
    \item[(i)] Eq.(\ref{methods:eq:CompleteInvariance}) holds for some purification $\rho_{SA}$ of $\rho_S$;
    \item[(ii)] Eq.(\ref{methods:eq:CompleteInvariance}) holds for all purifications $\rho_{SA}$ of $\rho_S$;
    \item[(iii)] Eq.(\ref{methods:eq:CompleteInvariance}) holds for all extensions $\rho_{SA}$ of $\rho_S$.
\end{itemize}
Moreover, all three are equivalent to
\begin{itemize}
    \item[(iv)] $\rho_S$ has full support on the $\lambda_{\rm Bos}$- or $\lambda_{\rm Ferm}$-subspace.
\end{itemize}
\end{lemma}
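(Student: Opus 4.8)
The plan is to prove the chain of implications (iii) $\Rightarrow$ (ii) $\Rightarrow$ (i) $\Rightarrow$ (iii), establishing the equivalence by a cycle. The implications (iii) $\Rightarrow$ (ii) and (ii) $\Rightarrow$ (i) are essentially trivial: every purification is a special case of an extension (take $A$ to be the purifying system and $\rho_{SA}$ pure), so (iii) implies (ii); and since a purification of $\rho_S$ exists whenever $\mathcal{H}_A$ is large enough, the universally quantified statement (ii) implies the existential one (i). The entire content of the lemma therefore lies in closing the loop with (i) $\Rightarrow$ (iii).

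For the hard direction, I would exploit the fact that all purifications of a fixed $\rho_S$ are related by isometries on the ancilla, and similarly that every extension $\rho_{SA}$ can be obtained from a canonical purification by a completely positive trace-preserving (or isometric-plus-partial-trace) map acting only on $A$. Concretely, fix the purification $|\psi\rangle_{SA_0}$ satisfying the invariance in (i). First I would show that invariance of $|\psi\rangle_{SA_0}$ as a state forces it to be invariant up to a phase as a vector, i.e.\ $(U(\pi)_S \otimes \I_{A_0})|\psi\rangle = c(\pi)|\psi\rangle$ for some phase $c(\pi)$; this follows because $U(\pi)_S \otimes \I_{A_0}$ commuting with the rank-one projector $|\psi\rangle\langle\psi|$ means $|\psi\rangle$ is an eigenvector. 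Then, given an arbitrary extension $\rho_{SA}$ of the same $\rho_S$, I would use the standard purification-uniqueness result: any purification $|\Phi\rangle_{SA'}$ of $\rho_{SA}$ is also a purification of $\rho_S$ on the enlarged ancilla $A' = A \otimes (\text{purifying space of } A)$, and hence is related to $|\psi\rangle_{SA_0}$ (tensored with a fixed reference state) by an isometry $W$ acting on the ancillary factors alone, $|\Phi\rangle = (\I_S \otimes W)|\psi\rangle_{SA_0}$.

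The key computation is then to push the invariance through this isometry. Since $W$ acts trivially on $S$, it commutes with $U(\pi)_S \otimes \I$, so
\begin{align}
\label{eq:myplan-pushthrough}
(U(\pi)_S \otimes \I)|\Phi\rangle = (\I_S \otimes W)(U(\pi)_S \otimes \I_{A_0})|\psi\rangle = c(\pi)(\I_S \otimes W)|\psi\rangle = c(\pi)|\Phi\rangle,
\end{align}
so $|\Phi\rangle$ is likewise a phase eigenvector and hence invariant as a state. Finally, $\rho_{SA}$ is recovered from $|\Phi\rangle\langle\Phi|$ by tracing out the extra purifying space, which commutes with the local action $U(\pi)_S \otimes \I$ on the remaining factors; invariance of $|\Phi\rangle\langle\Phi|$ therefore descends to invariance of $\rho_{SA}$, giving (iii).

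The main obstacle I anticipate is not the algebraic manipulation but the careful bookkeeping of Hilbert spaces and the infinite-dimensional subtleties: $\mathcal{H}_1$ may be countably infinite, and the ancilla $A$ is arbitrary, so I must ensure that purifications exist (the relevant reduced states have a well-defined, possibly infinite, Schmidt decomposition) and that the isometric relation between purifications holds in the separable-Hilbert-space setting rather than only the finite-dimensional one. I would handle this by invoking the general (separable) form of the purification theorem and the unitary/isometric freedom in purifications, taking care that the phase argument in the first step only uses that $|\psi\rangle\langle\psi|$ is a one-dimensional projector commuting with a unitary, which is dimension-independent. The only genuinely delicate point is phrasing \enquote{extension} broadly enough (any $\rho_{SA}$, possibly mixed, on any separable $A$) while still guaranteeing it admits a purification that factors through $|\psi\rangle_{SA_0}$.
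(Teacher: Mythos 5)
Your proof is correct, but it takes a genuinely different route from the paper's. The paper proves $(i)\Rightarrow(iii)$ representation-theoretically: by Schur's lemma (their Lemma~\ref{supp:lem:LemSchurConsequence}), any state invariant under all $U(\pi)_S\otimes\I_A$ has the block form $\bigoplus_\lambda p_\lambda\frac{\I_\lambda}{d_\lambda}\otimes\rho_{\lambda,A}$; purity then forces $d_\lambda=1$, so the marginal $\rho_S$ is fully supported on the Bosonic or Fermionic sector, on which $U(\pi)_S$ acts as the scalar ${\rm sgn}_\lambda(\pi)$ — whence every extension, being supported on that same sector, is automatically invariant. Your argument instead never touches the irrep decomposition: you extract the phase-eigenvector property $(U(\pi)_S\otimes\I)|\psi\rangle=c(\pi)|\psi\rangle$ from invariance of the rank-one projector, purify an arbitrary extension, relate that purification to $|\psi\rangle_{SA_0}$ by an isometry on the ancilla alone, and push the phase through; all steps (including the partial-trace step at the end) are sound, with only cosmetic caveats — the map relating purifications is in general a partial isometry isometric on the support of the ancilla marginal (which suffices, since it is only ever applied to $|\psi\rangle$ and its phase multiples), and your ``purifying space of $A$'' should read ``purifying space of $\rho_{SA}$''. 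The trade-off is instructive: your route is more elementary and strictly more general — it uses nothing about $S_N$, so it proves the equivalence $(i)\Leftrightarrow(ii)\Leftrightarrow(iii)$ for an arbitrary unitary representation of an arbitrary group, with no compactness or complete-reducibility assumption (which the paper's appendix generalization, Lemma~\ref{lem:completeinvgeneral}, does require). What the paper's proof buys in exchange is the physical payoff: its intermediate step is precisely the characterization that completely invariant states are exactly those supported on the Bosonic or Fermionic subspace, i.e.\ Theorem~\ref{main:th:CompleteInvariance} falls out as a byproduct, whereas your argument establishes the equivalence of the three conditions without identifying which states satisfy them; to rule out parastatistics you would still need a separate Schur-type argument.
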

\begin{proof}
Clearly, $(iii)\Rightarrow(ii)\Rightarrow(i)$, and we will now show that $(i)\Rightarrow(iv)$. We have the decomposition
\[
   U(\pi)_S\otimes\I_A = \bigoplus_\lambda U_\lambda(\pi)\otimes(\I_{n_\lambda}\otimes\I_A),
\]
and so Lemma \ref{lem:direct_sum_tensor_isomorphism} in Supplementary Note 1 implies that $\rho_{SA}=\bigoplus_\lambda p_\lambda \frac{\I_\lambda}{d_\lambda} \otimes\rho_{\lambda,A}$, where every $\rho_{\lambda,A}$ is a quantum state on $\mathcal{N}_\lambda\otimes A$. But if $\rho_{SA}$ is a pure state, then $p_\lambda>0$ is only possible if $d_\lambda=1$, leaving the two possibilities $\rho_{SA}=|\psi\rangle\langle\psi|_{\lambda_{\rm Bos},A}$ or $\rho_{SA}=|\psi\rangle\langle\psi|_{\lambda_{\rm Ferm},A}$. Hence, $\rho_S$ has full support on the $\lambda$-subspace, where either $\lambda=\lambda_{\rm Bos}$ or $\lambda=\lambda_{\rm Ferm}$.

Let us finally show that (iv)$\Rightarrow$(iii). Let $P_\lambda$ be the orthogonal projector onto $\cm_\lambda\otimes\cn_\lambda$, and let $\rho_{SA}$ be any extension of $\rho_S$. We have $1=\Tr(P_\lambda\rho_S)=\Tr((P_\lambda\otimes \I_A)\rho_{SA})$, and thus $(P_\lambda\otimes\I_A)\rho_{SA}(P_\lambda\otimes\I_A)=\rho_{SA}$. Using that $U(\pi)_S P_\lambda={\rm sgn}_\lambda(\pi)P_\lambda$, where ${\rm sgn}_\lambda(\pi)={\rm sgn}(\pi)$ if $\lambda=\lambda_{\rm Ferm}$ and ${\rm sgn}_\lambda(\pi)=1$ if $\lambda=\lambda_{\rm Bos}$, we obtain
\begin{align*}
&U(\pi)_S\otimes\I_A\rho_{SA}U(\pi)_S^\dagger\otimes\I_A \\
&= (U(\pi)_S\otimes\I_A)(P_\lambda\otimes\I_A)\rho_{SA}(P_\lambda\otimes\I_A)(U(\pi)_S^\dagger\otimes\I_A)\\
&=\rho_{SA},
\end{align*}
and so $\rho_{SA}$ satisfies~(\ref{methods:eq:CompleteInvariance}).
\end{proof}
This implies Thm.~\ref{main:th:CompleteInvariance} that rules out parastatistics in first quantization.

\subsection*{Formalism of quantum permutations}
\refstepcounter{section}\label{methods:sec:FormalismQuantPerm}

Starting from the definition given in the main text, we can give a more explicit form of the quantum permutations. Recall the decomposition of the total Hilbert space $\mathcal{H}=\bigoplus_\lambda \mathcal{M}_\lambda\otimes\mathcal{N}_\lambda$, on which the permutations act as $U(\pi)=\bigoplus_\lambda U_\lambda(\pi)\otimes\I_\lambda$. By Schur's Lemma, projective measurements commuting with all those unitaries are of the form
\begin{equation}
\{\I_\lambda\otimes P_{\lambda,j}\}_{\lambda,j},
\label{methods:eq:ProjMeasurement}
\end{equation}
and this will define a maximal quantum permutation group if the $P_{\lambda,j}$ are all rank-one projectors. Such maximally  finely-controlled quantum permutations are hence of the form
\begin{align}
V(\vec{\pi})&=\sum_{\lambda,j} e^{i\theta(\pi_{\lambda,j})}U(\pi_{\lambda,j}) (\I_\lambda \otimes P_{\lambda,j}) \nonumber \\&= \sum_{\lambda,j} e^{i\theta(\pi_{\lambda,j})}U_\lambda(\pi_{\lambda,j}) \otimes  P_{\lambda,j}
\label{methods:eq:QuantumPermutation}
\end{align}
with $\Tr(P_{\lambda,j})=1$. Note that we always have at least two particles (or modes) with associated Hilbert spaces of dimensions at least two, and so the number $Q$ of possible $(\lambda,j)$ pairs is at least two. In what follows, we will hence always assume that $Q\geq 2$. In the case of infinite-dimensional Hilbert spaces, we will restrict our attention to finite-rank projections for simplicity.

In the following lemma, let us now prove the first part of our main theorem: the existence of exactly two quantum permutation groups.

\begin{lemma}\label{methods:lem:proj_SNQ_rep}
The map $\vec{\pi}\mapsto V(\vec{\pi})$ in Eq.~(\ref{methods:eq:QuantumPermutation}) defines a quantum permutation group if and only if either $e^{i\theta(\pi_{\lambda,j})}=e^{i\theta}$ or $e^{i\theta(\pi_{\lambda,j})}=e^{i\theta}{\rm sgn}(\pi_{\lambda,j})$ for some fixed $\theta$ (which we will in the following, without loss of generality, set to zero). In both cases, if we set $\theta$ to zero, it is not only a projective, but a linear representation of $S_N^Q$.
\end{lemma}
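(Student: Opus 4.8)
The plan is to exploit the block structure of the explicit form~(\ref{methods:eq:QuantumPermutation}) to reduce the projective-representation condition of Definition~\ref{main:def:QuantumPerm} to a functional equation on the single phase function $f(\pi):=e^{i\theta(\pi)}$, and then invoke the classification of one-dimensional representations of $S_N$. First I would compute the product of two quantum permutations directly. Because the $P_{\lambda,j}$ form a projective measurement, the operators $U_\lambda(\tau)\otimes P_{\lambda,j}$ are supported on mutually orthogonal subspaces $\cm_\lambda\otimes(P_{\lambda,j}\cn_\lambda)$, so all cross terms drop out; using that each irrep multiplies honestly, $U_\lambda(\pi)U_\lambda(\sigma)=U_\lambda(\pi\sigma)$, this gives
\begin{align}
V(\vec\pi)V(\vec\sigma)=\sum_{\lambda,j} f(\pi_{\lambda,j})\,f(\sigma_{\lambda,j})\,U_\lambda(\pi_{\lambda,j}\sigma_{\lambda,j})\otimes P_{\lambda,j}.\nonumber
\end{align}

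Next I would impose that $\mathcal V$ be a projective representation of $S_N^Q$, i.e.\ $V(\vec\pi)V(\vec\sigma)=\omega(\vec\pi,\vec\sigma)\,V(\vec\pi\vec\sigma)$ for some phase $\omega(\vec\pi,\vec\sigma)\in U(1)$ independent of the block index $(\lambda,j)$. The operators $U_\lambda(\pi_{\lambda,j}\sigma_{\lambda,j})\otimes P_{\lambda,j}$ occurring on both sides are nonzero and again supported on orthogonal subspaces, so I can match the two expansions block by block and cancel the common operator to obtain
\begin{align}
\frac{f(\pi_{\lambda,j})\,f(\sigma_{\lambda,j})}{f(\pi_{\lambda,j}\sigma_{\lambda,j})}=\omega(\vec\pi,\vec\sigma)\qquad\text{for every }(\lambda,j).\nonumber
\end{align}

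The crucial step, which I expect to be the main obstacle, is to turn the block-independence of the right-hand side into a rigid constraint on $f$. Here the hypothesis $Q\geq2$ is essential: choosing two distinct pairs $(\lambda_1,j_1)\neq(\lambda_2,j_2)$ and exploiting that the $Q$ copies of $S_N$ can be varied independently, I can realize any quadruple $\pi_1,\sigma_1,\pi_2,\sigma_2\in S_N$ as the corresponding components of some $\vec\pi,\vec\sigma$; the displayed identity then forces $g(\pi,\sigma):=f(\pi)f(\sigma)/f(\pi\sigma)$ to satisfy $g(\pi_1,\sigma_1)=\omega(\vec\pi,\vec\sigma)=g(\pi_2,\sigma_2)$ for arbitrary independent arguments, so $g\equiv c$ is constant. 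Evaluating at $\pi=\sigma=e$ gives $c=f(e)$, whence $h:=f/f(e)$ obeys $h(\pi\sigma)=h(\pi)h(\sigma)$ with $h(e)=1$; that is, $h$ is a one-dimensional unitary representation of $S_N$. Since the abelianization of $S_N$ is $\mathbb{Z}/2$ for $N\geq2$, the only possibilities are $h\equiv1$ and $h=\sgn$, giving $f(\pi)=e^{i\theta}$ or $f(\pi)=e^{i\theta}\sgn(\pi)$ with $e^{i\theta}:=f(e)$ --- exactly the claimed dichotomy.

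Finally I would verify the converse and the linearity statement. Substituting either admissible form back into the product of the first display shows that $\omega(\vec\pi,\vec\sigma)=e^{i\theta}$ is a constant phase, hence a legitimate (coboundary) multiplier, so $\mathcal V$ is genuinely a projective representation; and fixing the gauge $\theta=0$ makes $\omega\equiv1$, so that $V(\vec\pi)V(\vec\sigma)=V(\vec\pi\vec\sigma)$ and the representation is in fact linear, which is the last assertion of the lemma. The remaining routine points are that each $V(\vec\pi)$ is unitary (it is block-diagonal with unitary blocks), so that speaking of a projective representation is justified, and that the cocycle condition on $\omega$ holds automatically by associativity --- neither needs separate work once the computation above is in place.
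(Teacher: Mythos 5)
Your proposal is correct and follows essentially the same route as the paper's proof: both reduce the projective-representation condition to the blockwise phase equation $f(\pi_{\lambda,j})f(\sigma_{\lambda,j})/f(\pi_{\lambda,j}\sigma_{\lambda,j})=\omega(\vec\pi,\vec\sigma)$, exploit $Q\geq 2$ and the independence of the block components to conclude that the multiplier is constant, and then identify $f$ (up to the constant phase) with a one-dimensional representation of $S_N$, i.e.\ trivial or sign. The only cosmetic differences are that you show constancy by varying two blocks independently rather than by the paper's chaining argument over pairs agreeing in one entry, and that you spell out the converse and the $\theta=0$ gauge-fixing explicitly, which the paper leaves implicit.
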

\begin{proof}
The map $\vec{\pi}\mapsto V(\vec{\pi})$ is a projective representation if and only if there are complex numbers $\omega(\vec{\sigma},\vec{\pi})$ such that $V(\vec{\sigma})V(\vec{\pi})=\omega(\vec{\sigma},\vec{\pi})V(\vec{\sigma}\vec{\pi})$. Direct calculation shows that this implies
\[
\omega(\vec{\sigma},\vec{\pi})=e^{i\theta(\sigma_{\lambda,j})}e^{i\theta(\pi_{\lambda,j})}e^{-i\theta(\sigma_{\lambda,j}\pi_{\lambda,j})}\mbox{ for all }\lambda,j.
\]
Suppose that $\vec{\sigma}'$ and $\vec{\pi}'$ are elements of $S_N^Q$ that agree with $\vec{\sigma}$ and $\vec{\pi}$ on at least one entry $(\lambda,j)$, i.e.\ there exists some $(\lambda,j)$ such that $\sigma_{\lambda,j}=\sigma'_{\lambda,j}$ and $\pi_{\lambda,j}=\pi'_{\lambda,j}$. Then it follows that $\omega(\vec{\sigma},\vec{\pi})=\omega(\vec{\sigma}',\vec{\pi}')$. Let $\vec{\sigma}'',\vec{\pi}''$ be arbitrary elements of $S_N^Q$. Then we can always find some pair $\vec{\sigma}',\vec{\pi}'$ that agree with the pair $\vec{\sigma},\vec{\pi}$ in at least one entry, and that also agrees with the pair $\vec{\sigma}'',\vec{\pi}''$ in at least one entry. Thus, $\omega(\vec{\sigma},\vec{\pi})=\omega(\vec{\sigma}'',\vec{\pi}'')=\omega$ is a constant that does not depend on $\vec{\sigma}$ or $\vec{\pi}$. The special case $\pi_{\lambda,j}=\I$ shows that $\omega=e^{i\theta}$ for $\theta:=\theta(\I)$, and hence $\vec{\pi}\mapsto e^{-i\theta}V(\vec{\pi})$ is a linear unitary representation. Furthermore, $\pi\mapsto e^{-i\theta}e^{i\theta(\pi)}$ is a linear one-dimensional representation of $S_N$, i.e.\ either the trivial or the sign representation. This shows that $V(\vec{\pi})$ is of the claimed form.
\end{proof}
Thus, for every choice of one-dimensional projectors $P_{\lambda,j}$ in the subspaces $\mathcal{M}_\lambda$, there are (up to a global phase) two maximal quantum permutation groups, as stated in Theorem~\ref{main:th:QuantPermInvRulesOutPara}: one where $e^{i\theta(\pi)}=1$ and thus
\begin{equation}
V_+(\vec{\pi})=\sum_{\lambda,j} U_\lambda(\pi_{\lambda,j})\otimes P_{\lambda,j},
\label{methods:eq:QPermutation1}
\end{equation}
and another one where $e^{i\theta(\pi)}={\rm sgn}(\pi)$ and thus
\begin{equation}
V_-(\vec{\pi})=\sum_{\lambda,j} {\rm sgn}(\pi_{\lambda,j})U_\lambda(\pi_{\lambda,j})\otimes P_{\lambda,j}.
\label{methods:eq:QPermutation2}
\end{equation}
Note that we would have obtained the exact same result if we had started with an even more general definition of quantum permutations. In contrast to Definition~\ref{main:def:QuantumPerm} where the representation $V$ of the symmetric group is fixed, we could even let the representation (and thus the associated complex phases) depend on the permutation-invariant variable, i.e.\ on the branch $j$,
\begin{align}
    V(\vec{\pi}):= \sum_j e^{i \theta_j(\pi_j)} U(\pi_j) P_j.
    \label{methods:eq:MoreGeneral}
\end{align}
In Subsection \hyperref[supp:sec:moreGenQP]{G} in Supplementary Note 4, we show that this more general definition leads to the same conclusions as Definition~\ref{main:def:QuantumPerm}.

Now, in order to derive the form of states invariant under quantum permutations and thus to show the rest of our main result Theorem \ref{main:th:QuantPermInvRulesOutPara}, we will first need to establish the in-equivalence of representations on invariant subspaces for quantum permutations in Lemma (\ref{methods:lem:inequivReps}) below. 

Recall the decomposition of the Hilbert space~(\ref{methods:eq:DecompHS}), and rewrite it slightly as
\begin{equation}
\mathcal{H}=\bigoplus_\lambda\bigoplus_j \mathcal{H}_{\lambda,j},
\label{methods:eq:DecompDifferently}
\end{equation}
where every $\mathcal{H}_{\lambda,j}$ is an irreducible subspace for the representation $U(\pi)$ of~(\ref{methods:eq:UDecomp}); concretely, $\mathcal{H}_{\lambda,j}={\rm im}(\I_\lambda\otimes P_{\lambda,j})$. It is immediate from~(\ref{methods:eq:QPermutation1}) and~(\ref{methods:eq:QPermutation2}) that the $\mathcal{H}_{\lambda,j}$ are invariant subspaces for the quantum permutation groups $V(\vec{\pi})$. Since subspaces invariant under all $V(\vec{\pi})$ must also be invariant under all $U(\pi)=e^{-i\theta(\pi)}V(\pi,\ldots,\pi)$, the spaces $\mathcal{H}_{\lambda,j}$ cannot be decomposed any further into smaller-dimensional irreducible subspaces.

Thus, for both representations $\pi\mapsto U(\pi)$ and $\vec{\pi}\mapsto V(\vec{\pi})$, we have an identical decomposition~(\ref{methods:eq:DecompDifferently}) into irreducible subspaces. However, for the former, every pair of subspaces $\mathcal{H}_{\lambda,j}$ and $\mathcal{H}_{\lambda,k}$ carries equivalent representations of the permutation group, which is not true for the group of quantum permutations:
\begin{lemma}
\label{methods:lem:inequivReps}
If $\lambda\neq \mu$, then $\mathcal{H}_{\lambda,j}$ and $\mathcal{H}_{\mu,k}$ carry inequivalent representations of the quantum permutation group $V(\vec{\pi})\simeq S_N^Q$. Moreover, if $j\neq k$, then $\mathcal{H}_{\lambda,j}$ and $\mathcal{H}_{\lambda,k}$ carry inequivalent representations of $S_N^Q$ in all cases except for the following two:
\begin{itemize}
\item $\lambda=\lambda_{\rm Bos}$ and $V(\vec{\pi})=V_+(\vec{\pi})$,
\item $\lambda=\lambda_{\rm Ferm}$ and $V(\vec{\pi})=V_-(\vec{\pi})$.
\end{itemize}
\end{lemma}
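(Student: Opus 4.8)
The plan is to make explicit the representation of the quantum permutation group $S_N^Q$ carried by each irreducible subspace $\mathcal{H}_{\lambda,j}$, and then to decide equivalence by comparing characters. This is legitimate because each such representation factors through the projection of $S_N^Q$ onto a single $S_N$ factor, hence through a finite group, and is therefore completely reducible and determined up to equivalence by its character. Restricting the formulas~(\ref{methods:eq:QPermutation1}) and~(\ref{methods:eq:QPermutation2}) to $\mathcal{H}_{\lambda,j}\cong\mathcal{M}_\lambda$, the element $\vec\pi=(\pi_{\mu,l})_{\mu,l}$ acts as $\vec\pi\mapsto U_\lambda(\pi_{\lambda,j})$ in the case $V_+$ and as $\vec\pi\mapsto\sgn(\pi_{\lambda,j})U_\lambda(\pi_{\lambda,j})$ in the case $V_-$. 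In either case the representation depends on $\vec\pi$ only through the entry $\pi_{\lambda,j}$, so its character $\chi_{\lambda,j}$ is a class function of that single coordinate, equal to $\chi_\lambda(\pi_{\lambda,j})$ for $V_+$ and to $\sgn(\pi_{\lambda,j})\chi_\lambda(\pi_{\lambda,j})$ for $V_-$, where $\chi_\lambda$ denotes the character of the irrep $U_\lambda$.

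The key structural observation is then that if $\chi_{\lambda,j}$ depends only on the $(\lambda,j)$-th coordinate and $\chi_{\mu,k}$ only on a \emph{different} coordinate $(\mu,k)$, the two can agree as functions on $S_N^Q$ only if both are constant: varying one coordinate while freezing the other forces each to be independent of its argument. I would therefore determine exactly when a single-coordinate character is constant. For $V_+$, constancy of $\chi_\lambda(\pi)$ forces $U_\lambda$ to be the trivial representation, i.e.\ $\lambda=\lambda_{\rm Bos}$. For $V_-$, constancy of $\sgn(\pi)\chi_\lambda(\pi)$ is the statement that $U_\lambda\otimes\sgn$ has trivial character, which happens precisely when $\lambda=\lambda_{\rm Ferm}$, since twisting the sign irrep by the sign representation returns the trivial one.

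Assembling these facts yields both claims. When $\lambda\neq\mu$ the pairs $(\lambda,j)$ and $(\mu,k)$ occupy distinct coordinates, so equivalence would require both $\chi_{\lambda,j}$ and $\chi_{\mu,k}$ to be constant; but constancy pins down a unique value of $\lambda$ (namely $\lambda_{\rm Bos}$ for $V_+$ and $\lambda_{\rm Ferm}$ for $V_-$), which cannot be attained by two distinct irreps simultaneously, so the representations are inequivalent. When $\lambda=\mu$ but $j\neq k$, the characters again live on distinct coordinates and hence agree if and only if both are constant, which by the previous paragraph occurs exactly in the two exceptional cases $(\lambda=\lambda_{\rm Bos},\,V=V_+)$ and $(\lambda=\lambda_{\rm Ferm},\,V=V_-)$ and in no other.

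I expect the main obstacle to be organizational rather than conceptual: correctly bookkeeping which coordinate of $S_N^Q$ governs a given subspace, and carrying the sign twist of the $V_-$ case cleanly through the character computation, so that the ``distinct coordinates force constant characters'' principle is applied without error. Once the characters are identified as single-coordinate class functions, the remainder is elementary character theory.
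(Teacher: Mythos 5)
Your proof is correct, but it runs on different machinery than the paper's. The paper argues directly with the intertwiner: assuming a unitary $W$ with $V_{\lambda,j}(\vec\pi)=W V_{\mu,k}(\vec\pi)W^\dagger$, it specializes to diagonal elements $\vec\pi=(\pi,\ldots,\pi)$ to get $U_\lambda(\pi)=WU_\mu(\pi)W^\dagger$ (contradicting $\lambda\neq\mu$ by Schur), and for $j\neq k$ it sets $\pi_k=\I$ while varying $\pi_j$ to force the operator identity $U_\lambda(\pi)=\omega(\pi)\I$, which pins down the trivial or sign representation. You replace the operator-level argument by a scalar one: each $V_{\lambda,j}$ has a character depending on a single coordinate of $S_N^Q$, equal characters on distinct coordinates force both to be constant, and a constant (possibly sign-twisted) character of an irrep forces $\lambda=\lambda_{\rm Bos}$ resp.\ $\lambda=\lambda_{\rm Ferm}$ by the norm-one property $\langle\chi_\lambda,\chi_\lambda\rangle=1$. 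The shared structural insight is the same in both proofs --- single-coordinate dependence plus independent variation of coordinates (your ``freeze one, vary the other'' is exactly the paper's $\pi_k=\I$ specialization) --- but your route buys a uniform treatment of both claims from one principle, at the price of needing the fact that the representations are determined by their characters. That fact deserves slightly more care than your justification gives it: you say each representation factors through a \emph{single} $S_N$ factor, but when comparing two representations living on \emph{different} coordinates you should pass to the common finite quotient $S_N\times S_N$ (projection onto the two relevant coordinates), through which both factor; equivalence over $S_N^Q$ then reduces to equivalence over this finite group, where character determinacy is standard. (Alternatively, invoke that finite-dimensional unitary representations are semisimple and that semisimple representations in characteristic zero are determined by their traces.) This is a one-line patch, not a gap; the paper's intertwiner argument sidesteps the issue entirely, which is what its more elementary approach buys.
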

\begin{proof}
Denote the action of $V(\vec{\pi})$ on the invariant subspace $\mathcal{H}_{\lambda,j}$ by $V_{\lambda,j}(\vec{\pi})$. Suppose that $\mu\neq\lambda$, but that the representations $\vec{\pi}\mapsto V_{\lambda,j}(\vec{\pi})$ and $\vec{\pi}\mapsto V_{\mu,k}(\vec{\pi})$ are equivalent. Then there is a unitary $W$ such that $V_{\lambda,j}(\vec{\pi})=W V_{\mu,k}(\vec{\pi}) W^\dagger$. The special case $\vec{\pi}=(\pi,\ldots,\pi)$ implies $U_{\lambda}(\pi)=W U_\mu(\pi)W^\dagger$, and this can only hold for all $\pi\in S_N$ if $\lambda=\mu$, which is a contradiction.

Now let $j\neq k$, and suppose that $\mathcal{H}_{\lambda,j}$ and $\mathcal{H}_{\lambda,k}$ carry equivalent representations of $S_N^Q$. Then there is some unitary $W$ such that $V_{\lambda,j}(\vec{\pi})=W V_{\lambda,k}(\vec{\pi})W^\dagger$ for all $\vec{\pi}$,  or equivalently, $\omega(\pi_j)U_\lambda(\pi_j)=\omega(\pi_k)W U_\lambda(\pi_k)W^\dagger$ for all $\pi_j,\pi_k\in S_N$, where by~\Cref{methods:lem:proj_SNQ_rep} $\omega(\pi)=1$ for all $\pi$ if $V(\vec{\pi})=V_+(\vec{\pi})$, or $\omega(\pi)={\rm sgn}(\pi)$ for all $\pi$ if $V(\vec{\pi})=V_-(\vec{\pi})$. For the special case $\pi_k=\I$, this implies $U_\lambda(\pi)=\omega(\pi)\I$ for all $\pi\in S_N$, and so $U_\lambda$ corresponds to the trivial representation acting on the Bosonic subspace in the case $V(\vec{\pi})=V_+(\vec{\pi})$, or the sign representation acting on the Fermionic subspace in the case $V(\vec{\pi})=V_-(\vec{\pi})$. In both cases, $U_\lambda$ is a one-dimensional representation acting on the Bosonic/Fermionic subspace respectively.
\end{proof}
Let us now analyze the consequences of quantum permutation invariance. We will denote the $N$ indistinguishable particles by $S$ (the system), and a possible additional quantum system (the ancilla) by $A$.  To consider $S$ in isolation, we can simply treat $A$ as a trivial quantum system of Hilbert space dimension one. Let us note that, unlike complete invariance, we are not imposing that the global invariant state is pure (or equivalently that every extension must be invariant). Rather, our goal is to determine the set of all invariant extensions.

First, let us consider the case that the quantum permutation group is the one of Eq.~(\ref{methods:eq:QPermutation1}), i.e.\  $V(\vec{\pi})=V_+(\vec{\pi})$. If we apply some quantum permutation on $S$ and the identity map on $A$, the resulting transformation can be decomposed as
\begin{equation}
V_{+}(\vec{\pi})_S\otimes\I_A=(\I_{\lambda_{\rm Bos}}\otimes \I_A)\oplus\bigoplus_{j,\lambda\neq\lambda_{\rm Bos}} V_{\lambda,j}(\vec{\pi})\otimes\I_A,
\label{methods:eq:QuantumPermutationFirstCase}
\end{equation}
and all the representations $\vec{\pi}\mapsto V_{\lambda,j}(\vec{\pi})$ in this equation are irreducible and pairwise inequivalent. Let us now implement the invariance of all physical predictions under $V_+(\vec{\pi})$ on the level of states, and postulate that all allowed states are invariant under all these quantum coordinate transformations. Using Schur's Lemma again (see Lemma \ref{lem:direct_sum_tensor_isomorphism} in Supplementary Note 1), the requirement $[V_+(\vec{\pi})_S\otimes\I_A,\rho_{SA}]=0$ for all $\vec{\pi}\in S_N^Q$ implies
\begin{align}
\label{methods:eq:QPermInvStatesFirstCase}
\rho_{SA}=p_{\rm Bos}\, \rho_{{\rm Bos},A}\oplus\bigoplus_{j,\lambda\neq\lambda_{\rm Bos}} p_{\lambda,j} \frac{\I_{\lambda,j}}{d_\lambda} \otimes \rho_A^{(\lambda,j)},
\end{align}
where $d_{\lambda}$ is the dimension of $\mathcal{H}_{\lambda,j}$ (which does not depend on $j$).
Furthermore, $p_{\rm Bos}$ and the $p_{\lambda,j}$ are probabilities that sum to one, $\rho_{\rm Bos,A}$ is a quantum state on $\mathcal{H}_{\rm Bos}\otimes A$, with $\mathcal{H}_{\rm Bos}$ the symmetric subspace of $S$, and the $\rho_A^{(\lambda,j)}$ are quantum states on $A$. This state describes a classical mixture of different types of statistics (for more details on how the $\lambda$ are related to types of paraparticles such as parabosons and parafermions, see Supplementary Note \hyperref[supp:sec:DescPara]{2}). If the quantum number is measured and found to be $\lambda=\lambda_{\rm Bos}$, then the post-measurement state $\rho_{\rm Bos,A}$ can describe any quantum state whatsoever, and it may in particular be pure and entangled. However, if another type of statistics $\lambda\neq \lambda_{\rm Bos}$ is found, then the post-measurement state will be
\begin{equation}
\rho_{SA}(\lambda)=\bigoplus_j p_j \frac{\I_{\lambda,j}}{d_\lambda}\otimes \rho_A^{(\lambda,j)},
\label{methods:eq:RhoSALambda}
\end{equation}
where $p_j=p_{\lambda,j}/\sum_j p_{\lambda,j}$. This is a separable state.

\begin{table*}[t]
    \centering
    \begin{tabular}{!{\vrule width 2pt}c!{\vrule width 2pt}c|c!{\vrule width 2pt}}
    \noalign{\hrule height 2pt}
    \hline
        Symmetry $V(g)$ & $\rho_S = V(g) \rho_S V(g)^\dagger$&   $\rho_{SA}$ pure, $\rho_{SA} = (V(g) \otimes \I_A) \rho_{SA} (V(g) \otimes \I_A)^\dagger$  \\
        \noalign{\hrule height 2pt}
        \hline
        $U(\pi), \pi \in S_N$ & $ \rho_S = \bigoplus_\lambda p_\lambda \frac{\I}{d_\lambda} \otimes \rho_\lambda$ &  $ \rho_S = \rho_\lambda$, $\lambda \in \{\lambda_{\rm Bos}, \lambda_{\rm Ferm}\}$ \\\hline
        $V(\vec{\pi}), \vec{\pi} \in S_N^{ Q}$ & $\rho_S = p_\gamma \rho_\gamma + \bigoplus_{\lambda \neq \gamma} p_\lambda \frac{\I}{d_\lambda} \otimes \rho_\lambda^{\rm diag}$, $\gamma \in \{\lambda_{\rm Bos}, \lambda_{\rm Ferm}\}$ &  $ \rho_S = \rho_\lambda$, $\lambda \in \{\lambda_{\rm Bos}, \lambda_{\rm Ferm}\}$ \\\hline
         $V(\vec{\pi};P), \vec{\pi} \in S_N^Q, P \in {\rm Proj}_{S_N}$ & $ \rho_S = p_\gamma \rho_\gamma + \bigoplus_{\lambda \neq \gamma} p_\lambda \frac{\I}{D_\lambda}$, $\gamma \in \{\lambda_{\rm Bos}, \lambda_{\rm Ferm}\}$  &  $ \rho_S = \rho_\lambda$, $\lambda \in \{\lambda_{\rm Bos}, \lambda_{\rm Ferm}\}$ \\\noalign{\hrule height 2pt}\hline
    \end{tabular}
    \caption{General form of invariant and completely invariant states for the different symmetries $\{U(\pi)\} \subset \{V(\vec{\pi})\} \subset \bigcup_P \{V(\vec{\pi}; P)\}$ considered in this paper, where $V(\vec{\pi};P)$ denotes either $V_+(\vec{\pi})$ or $V_-(\vec{\pi})$ defined relative to a projective measurement $P=\{P_j\}$, and $V(\vec{\pi})=V(\vec{\pi},P)$ for some fixed maximal $P$. We use the notation $\rho_\lambda^{\rm diag}=\sum_i p_{i,\lambda}P_{i,\lambda}$ for a quantum state diagonal in the basis of the measurement $\{P_{i,\lambda}\}$ controlling the quantum permutation, and ${\rm Proj}_{S_N}$ for the set of projective measurements where all projections commute with the representation of $S_N$. This table shows that without appeal to an ancillary system, one can effectively rule out parastatistics by appealing to invariance under quantum permutations $V(\vec{\pi})$ or $V(\vec{\pi};P)$. By appealing to an ancilla and the existence of a pure global state $\rho_{SA}$, invariance under the standard permutation group is enough to rule out parastatistics.}
    \label{methods:tab:summarytab1}
  \centering\bigskip
  \renewcommand{\arraystretch}{1.2}
  \begin{tabular}{!{\vrule width 2pt}c!{\vrule width 2pt}c|c|c!{\vrule width 2pt}}
  \noalign{\hrule height 2pt}
    \hline
    \multirow{2}{4cm}{ \centering Symmetry $V_S(g) \otimes \I_A$} &   Correlations with $\ch_A$ for $\lambda \in \{\lambda_{\rm Bos}, \lambda_{\rm Ferm}\}$ & \multicolumn{2}{c|}{Correlations with $\ch_A$ for $\lambda \not \in \{\lambda_{\rm Bos}, \lambda_{\rm Ferm}\}$}\\
    \cline{2-4}
    &  $\rho_{SA}$ arbitrary & $\rho_{SA}$ pure &  $\rho_{SA}$ arbitrary \\ \noalign{\hrule height 2pt}\hline
    $U(\pi), \pi \in S_N$  & Entanglement possible & No global state  & Entanglement possible \\ \hline
     $V(\vec{\pi}), \vec{\pi} \in S_N^{ Q}$ & Entanglement possible & No global state  &  Classical correlations only \\ \hline
     $V(\vec{\pi};P), \vec{\pi} \in S_N^Q, P \in {\rm Proj}_{S_N}$ & Entanglement possible & No global state  &  No correlations  \\\noalign{\hrule height 2pt} \hline
  \end{tabular}
   \caption{Allowed correlations with the environment when imposing invariance on system $S$ and ancilla $A$ for the different symmetries considered in this paper. Requiring that $\rho_{SA}$ is pure and invariant under the standard permutations $V(\pi)$ is enough to rule out parastatistics. If $\rho_{SA}$ is not required to be pure, then invariance under quantum permutations $\{V(\vec{\pi})\}$ or $\bigcup_P \{V(\vec{\pi},P)\}$ rules out parastatistics.}
   \label{methods:tab:summarytab2}
\end{table*}

However, we can say more. Due to~(\ref{methods:eq:RhoSALambda}), all possible $\rho_{SA}(\lambda)$ are diagonal in the same basis of $S$, which is determined by the projective measurement~(\ref{methods:eq:ProjMeasurement}). Thus, it is effectively a \textit{classical} system. It is well-known that classical systems cannot reversibly interact with quantum systems in a nontrivial way. While stochastic evolution of $SA$ under a Lindblad equation is possible~\cite{Oppenheim}, no unitary time evolution generated by any nontrivial Hamiltonian $H_{SA}\neq H_S+H_A$ is possible: any consistent coupling between a classical and a quantum system must be fundamentally irreversible~\cite{GGS}. Essentially, if the particle is not a Boson, it cannot interact reversibly with any ancillary system. (Up to the fineprint that paraboson Hilbert spaces carry a Bosonic sector which may be entangled with other systems, but which, as the wording indicates, behaves exactly like a Boson. For the corresponding parafermion case, see Example \ref{supp:ex:VolkovPara} in Supplementary Note 2).

The second case of the quantum permutation group (as in Eq.~\ref{methods:eq:QPermutation2})) can be treated analogously, starting from the decomposition
\begin{align}
\label{methods:eq:QuantumPermutationSecondCase}
V_-(\vec{\pi})_S\otimes\I_A=(\I_{\lambda_{\rm Ferm}}\otimes \I_A)\oplus\bigoplus_{j,\lambda\neq\lambda_{\rm Ferm}} V_{\lambda,j}(\vec{\pi})\otimes\I_A.
\end{align}
This establishes our main result.

Theorem~\ref{main:th:QuantPermInvRulesOutPara} also shows that classically, invariance of probability distributions under conditional permutations is equivalent to invariance under unconditional ones. This follows from the fact that Theorem~\ref{main:th:QuantPermInvRulesOutPara} does not yield any additional constraints in the case that all possible states and measurements are diagonal in the product basis.

If we assume the stronger form of quantum permutation invariance, we get an even stronger result:
\begin{theorem}
\label{methods:th:MainTheorem2}
Consider a system $S$ of indistinguishable particles, and assume strong quantum permutation invariance, i.e. invariance with respect to \textbf{all} maximal sets $\{P_{\lambda,j}\}$ of permutation-invariant projective measurements. Suppose that the particle type $\lambda$ and the quantum permutations $V(\vec{\pi})$ are among the following two cases:
\begin{itemize}
\item $\lambda=\lambda_{\rm Bos}$ and $V(\vec{\pi})=V_+(\vec{\pi})$ as in Eq.~(\ref{methods:eq:QPermutation1});
\item $\lambda=\lambda_{\rm Ferm}$ and $V(\vec{\pi})=V_-(\vec{\pi})$ as in Eq.~(\ref{methods:eq:QPermutation2}).
\end{itemize}
Then all quantum states of $S$ are allowed. Moreover, $S$ can be arbitrarily entangled with ancillary systems $A$.

In all other cases, however, and in particular if $\lambda\not\in\{\lambda_{\rm Bos},\lambda_{\rm Ferm}\}$, the state of $S$ must be maximally mixed:
\[
    \rho_S=\frac{\I_\lambda}{D_\lambda},
\]
where $D_\lambda=d_\lambda n_\lambda$ is the dimension of the paraparticle Hilbert subspace. Moreover, if we have an additional ancillary system $A$, then $S$ and $A$ must be uncorrelated:
\[
    \rho_{SA}=\frac{\I_\lambda}{D_\lambda}\otimes\rho_A,
\]
and no continuous interaction whatsoever between $S$ and $A$ is possible. If one of the multiplicity spaces is infinite-dimensional, i.e.\ if there is some $\lambda\not\in\{\lambda_{\rm Bos},\lambda_{\rm Ferm}\}$ with $n_\lambda=\infty$, then there does not exist any quantum state $\rho_S$ (or $\rho_{SA}$) with support on the corresponding subspace that satisfies strong quantum permutation invariance.
\end{theorem}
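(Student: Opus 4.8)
The plan is to reduce the statement to a per-sector analysis and then intersect the invariance constraints over all admissible measurements. Since the diagonal quantum permutation satisfies $V(\pi,\ldots,\pi;P)=e^{i\theta(\pi)}\bigoplus_\lambda U_\lambda(\pi)\otimes\I_{\cn_\lambda}=e^{i\theta(\pi)}U(\pi)$ for every $P$, strong quantum-permutation invariance already implies ordinary permutation invariance. Hence by Eq.~(\ref{methods:eq:Superselection}) any candidate state decomposes as $\rho_{SA}=\bigoplus_\lambda p_\lambda\,\tfrac{\I_{\cm_\lambda}}{d_\lambda}\otimes\rho_\lambda$ with $\rho_\lambda$ a state on $\cn_\lambda\otimes A$ (the ancilla taken trivial when $S$ is considered alone). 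It therefore suffices to fix a single sector $\lambda$ and determine which $\rho_\lambda$ survive invariance under $V(\vec\pi;P)$ for every \emph{maximal} $P=\{\I_\lambda\otimes P_{\lambda,j}\}$; as $P$ ranges over maximal measurements, the rank-one projectors $\{P_{\lambda,j}\}_j$ range over all orthonormal bases of the multiplicity space $\cn_\lambda$.

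I would first dispatch the two compatible cases. On the Bosonic sector $U_{\lambda_{\rm Bos}}\equiv 1$, so $V_+(\vec\pi)$ of Eq.~(\ref{methods:eq:QPermutation1}) restricts to $\sum_j P_{\lambda,j}=\I$; likewise on the Fermionic sector $V_-(\vec\pi)$ of Eq.~(\ref{methods:eq:QPermutation2}) restricts to $\sum_j\sgn(\pi_j)^2 P_{\lambda,j}=\I$. The group thus acts trivially for \emph{every} choice of $P$, imposes no constraint, and all states -- including arbitrary entanglement with $A$ -- are allowed, establishing the first assertion.

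For the main case, fix a paraparticle sector $\lambda\notin\{\lambda_{\rm Bos},\lambda_{\rm Ferm}\}$ (the remaining ``wrong pairings'', such as $\lambda_{\rm Bos}$ with $V_-$ or $\lambda_{\rm Ferm}$ with $V_+$, reduce to a nontrivial sign representation $\sum_j\sgn(\pi_j)P_{\lambda,j}$ on the multiplicity space and are handled identically). For a single fixed $P$, Eq.~(\ref{methods:eq:RhoSALambda}) and its $V_-$ analogue from Eq.~(\ref{methods:eq:QuantumPermutationSecondCase}) already force the $\lambda$-block to be block-diagonal in $\{P_{\lambda,j}\}$ with maximally mixed $\cm_\lambda$-part; projecting onto $S$ gives $\rho_\lambda=\tfrac{\I_{\cm_\lambda}}{d_\lambda}\otimes\sigma$ with $\sigma=\sum_j p_j P_{\lambda,j}$ diagonal in the basis of $\cn_\lambda$ selected by $P$. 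Strong invariance demands this for every maximal $P$, i.e.\ for every orthonormal basis of $\cn_\lambda$, so $\sigma$ commutes with every rank-one projector and hence with all bounded operators on $\cn_\lambda$; therefore $\sigma\propto\I_{\cn_\lambda}$ and $\rho_S=\I_\lambda/D_\lambda$. For the ancilla, ordinary invariance first fixes $\rho_\lambda=\tfrac{\I_{\cm_\lambda}}{d_\lambda}\otimes\tau_{\cn A}$, and the same over-all-bases argument places $\tau_{\cn A}$ in the commutant of $\cb(\cn_\lambda)\otimes\I_A$, which is $\I_{\cn_\lambda}\otimes\cb(A)$; thus $\tau_{\cn A}=\tfrac{\I_{\cn_\lambda}}{n_\lambda}\otimes\rho_A$ and $\rho_{SA}=\tfrac{\I_\lambda}{D_\lambda}\otimes\rho_A$.

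The remaining claims follow by completing this algebra. Requiring that any reversible dynamics retain the symmetry, i.e.\ that $H_{SA}$ commute with all $V(\vec\pi;P)\otimes\I_A$, places $H_{SA}$ (on each para sector) in the intersection of the commutants of $\{U_\lambda(\pi)\otimes\I\}$ and of $\cb(\cn_\lambda)\otimes\I_A$; choosing a transposition with non-scalar $U_\lambda(\tau)$ (which exists since $d_\lambda\geq2$), or the sign directly in the one-dimensional wrong-pairing cases, isolates each $P_{\lambda,j}$ and shows this intersection equals $\I_\lambda\otimes\cb(A)$, so $H_{SA}=\I_\lambda\otimes H_A$ carries no genuine coupling. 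Finally, the step ``commutes with all rank-one projectors $\Rightarrow$ scalar'' is purely algebraic and remains valid when $n_\lambda=\infty$: any trace-class $\sigma$ with this property must be a multiple of $\I_{\cn_\lambda}$, which is not trace-class, so $\Tr\sigma=1$ is impossible and no invariant state can have support on that sector. The step I expect to be most delicate is precisely this uniform passage from ``diagonal in every measurement basis'' to ``proportional to the identity'', carried out consistently for $S$ alone and for the extension $SA$: one must ensure that maximality lets the $P_{\lambda,j}$ sweep out all orthonormal bases of the \emph{multiplicity} space $\cn_\lambda$ (not of the full sector), that the ancilla is correctly separated via the commutant of $\cb(\cn_\lambda)\otimes\I_A$ rather than of the full algebra, and that the infinite-dimensional endpoint is closed by the non-normalizability of the maximally mixed state.
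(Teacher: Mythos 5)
Your proof is correct, and its overall strategy matches the paper's proof in Section~\ref{supp:sec:proofTh3}: reduce to a per-sector analysis via Schur's lemma, then intersect the invariance constraints as the maximal measurement sweeps all orthonormal bases of the multiplicity space $\cn_\lambda$. The difference is in how the uniqueness step is executed. The paper parametrizes the change of basis explicitly by unitaries $R_\lambda\in\U(n_\lambda)$, conjugates the fixed-basis quantum permutations to operators $U^{(\mathbf{R})}(\vec\pi)$, and extracts the conclusion by direct matrix computation: diagonality of $R_\lambda\big(\sum_j p_{\lambda,j}P_{\lambda,j}\big)R_\lambda^\dagger$ in the fixed basis for \emph{all} $R_\lambda$ forces $p_{\lambda,j}$ to be independent of $j$ (which already kills the $n_\lambda=\infty$ case), and specially chosen planar rotations force $\rho_A^{(\lambda,j)}$ to be independent of $j$. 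You replace this computation by two abstract operator-algebra facts: a density operator diagonal in every orthonormal basis has every unit vector as an eigenvector and is hence scalar, and an operator on $\cn_\lambda\otimes\ch_A$ commuting with $|v\rangle\langle v|\otimes\I_A$ for every unit vector $v$ lies in the commutant $(\cb(\cn_\lambda)\otimes\I_A)'=\I_{\cn_\lambda}\otimes\cb(\ch_A)$. This is cleaner, treats $S$ alone and $SA$ in one stroke, and closes the infinite-dimensional case the same way the paper does (a nonzero multiple of $\I_{\cn_\lambda}$ is not trace class). Two further points where you are more explicit than the paper: you spell out the ``wrong pairings'' ($\lambda_{\rm Bos}$ with $V_-$, $\lambda_{\rm Ferm}$ with $V_+$), which the paper absorbs silently into the cases $\lambda\neq\lambda_{\rm Bos}$ under $V_+$ (resp.\ $\lambda\neq\lambda_{\rm Ferm}$ under $V_-$); and you give a self-contained commutant argument that any symmetry-respecting Hamiltonian satisfies $H_{SA}=\I_\lambda\otimes H_A$, whereas the paper supports the no-interaction claim by the classicality of the invariant states together with the classical--quantum no-go result of~\cite{GGS}. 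Both routes are valid; yours buys generality and brevity, the paper's buys an explicit, elementary computation.
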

The proof of Theorem~\ref{methods:th:MainTheorem2} is given in Subsection \hyperref[supp:sec:proofTh3]{H} in Supplementary Note 4. Furthermore, Tables~\ref{methods:tab:summarytab1} and~\ref{methods:tab:summarytab2} below show the forms of invariant states for the different symmetries, i.e.\ under standard, quantum and strong quantum permutation-invariance, and the possible correlations with ancillary systems in each case.

\subsection*{Quantum reference frame descriptions}
\refstepcounter{section}\label{methods:sec:QRFstates}

The invariance of all physical predictions under permutations can be implemented in three different ways. First, we can have invariance on the level of states, by postulating that only quantum states $\rho$ with $U(\pi)\rho U(\pi)^\dagger=\rho$ are allowed to describe the results of preparation procedures, without any restriction on the observables. Second, we can implement invariance on the level of observables, requiring that only operators $A$ with $U^\dagger(\pi)A U(\pi)=A$ can be measured, without any restriction on the states. Third, we can implement invariance on the level of states and observables, requiring that both lie in the subalgebra of operators commuting with all $U(\pi)$. The three conventions are equivalent because
\begin{align}
\label{methods:eq:EquivInvConventions}
\Tr[\hat P(\rho)A]=\Tr[\rho \hat P(A)]=\Tr[\hat P(\rho)\hat P(A)]
\end{align}
holds, where $\hat P(X):=\frac 1 {|S_N|} \sum_{\pi\in S_N} U(\pi)X U(\pi)^\dagger$ projects every operator $X$ into the subalgebra of invariant operators. These three conventions are also available for other symmetries, including symmetry under quantum permutations.

Now, while we have been working with the first convention so far, the second one is sometimes chosen implicitly in the context of QRFs: only observables invariant under the symmetry are deemed measurable, but all quantum states are allowed as descriptions of preparation procedures. There is an algebra of observables that can be measured under a certain condition of invariance, for example the subalgebra of all Bosonic operators $\mathcal{A}_+$ together with invariance under the group of Bosonic quantum permutations $\mathcal{V}_+$. For any state $\rho \in \mathcal{S}(\mathcal{H})$, we then have $\Tr[\rho A_+]=\Tr[V_+(\vec{\pi})\rho V_+(\vec{\pi})^\dagger A_+]$ for all $V_+(\vec{\pi})\in \mathcal{V}_+$. This means that the set $\{V_+(\vec{\pi})\rho V_+(\vec{\pi})^\dagger|V_+(\vec{\pi})\in \mathcal{V}_+\}$ contains alternative, equally valid descriptions of one and the same quantum state. Choosing one description over another can be interpreted as choosing a quantum coordinate system. Since quantum permutation in $\mathcal{V}_+$ map from one quantum coordinate system to another, they are viewed as QRF transformations for Bosons. In Subsection \hyperref[supp:sec:QRFsApplBosFerm]{I} in Supplementary Note 5 we provide a short introduction and more details on QRFs. In~\cite{KrummHoehnMueller,HoehnKrummMueller}, it is demonstrated that the paradigmatic QRF transformations of ``jumping into the perspective of one of several particles'' under translation-invariance can be understood in exactly this way.

\section*{Data availability}
No datasets were generated or analyzed in this study.

\section*{Acknowledgments}

We are grateful to Andrea Di Biagio, Anne-Catherine de la Hamette, Borivoje Daki\'c, Bruna Sahdo, \v{C}aslav Brukner, David Gross, Esteban Castro-Ruiz, Hubert de Guise, Julian Maisriml, Lionel J.\ Dmello, Marina Maciel Ansanelli, Nicol\'{a}s Medina S\'{a}nchez, Robert W.\ Spekkens, Tanmay Singal, Tristan Malleville, Urs Schreiber, Viktoria Kabel, Y\`il\`e Y\={i}ng, Zhanna Kuznetsova and Zhiyuan Wang for helpful and stimulating discussions. We would like to express our particular gratitude to Francesco Toppan for insightful exchanges that helped us understand the relation of our results to $\Zl_2 \times \Zl_2$ graded parastatistics.

\section*{Author contributions}
M.P.M.\ designed the research project, based on calculations and ideas that have developed in joint discussions with T.G.\ and M.M. All three authors contributed to the proofs and calculations and to the preparation of the manuscript.


\section*{Funding}
Funded in whole or in part by the Austrian Science Fund (FWF) 10.55776/COE1 (Quantum Science Austria) and the European Union - NextGenerationEU.

\onecolumngrid

\newpage

\section*{Supplementary Note 1 -- Consequences of Schur's Lemma}

In the following we consider unitary representations of compact (but not necessarily Lie) groups $G$ on separable Hilbert spaces. We will always assume that $G$ is, as a topological space, a Hausdorff space.
We describe a consequence of Schur's Lemma~\cite[Proposition 5.8]{murnaghan} that is used several times in the main text. We make use of the fact that unitary representations of compact groups are isomorphic to the Hilbert space direct sum of irreducible representations~\cite[Theorem 7.8]{robert}, and irreducible representations of compact groups are finite-dimensional~\cite[Corollary 5.8]{robert}.

The following is well-known, but we include the proof in our notation for completeness. For the basic representation-theoretic facts that we are using, we refer the reader to the books by Barry Simon~\cite{Simon} and Alain Robert~\cite{robert}. Our two main cases of interest are the permutations and the quantum permutations, i.e.\ $\mathcal{G}=S_N$ and $\mathcal{G}=S_N^Q$, where $Q$ denotes the number of branches, i.e.\ independent copies of $S_N$. If $Q$ is not finite, but countably-infinite, then $S_N^Q$ is still a compact group in the product topology due to Tychonoff's Theorem. Moreover, it is a totally disconnected group and hence topologically a Hausdorff space~\cite{murnaghan}, i.e.\ a profinite group. The representation-theoretic results below hence apply to the case of $\mathcal{G}=S_N^\infty$, too.

\begin{lemma}\label{lem:direct_sum_tensor_isomorphism}
    Consider the unitary representation $U(g) = \bigoplus_{i\in N} U^i_\lambda(g)$, with $N$ a discrete set, of a compact group $G$ acting on $\ch \simeq \bigoplus_{i \in N} \ch_\lambda^i$, a direct sum of finite-dimensional Hilbert spaces $\ch_\lambda^i$, where each irreducible component is isomorphic, then we have the following (group representation) isomorphism:
    \begin{align}
        \ch \simeq \bigoplus_{i\in N} \ch_\lambda^i \simeq \cm_\lambda \otimes \cn_\lambda , 
    \end{align}
    where $\cm_\lambda \simeq \ch_\lambda$ and $\cn_\lambda \simeq \ell^2(N)$ (which is isomorphic to $\Cl^n$ for $N$ finite). The representation $U(g)$ acts as:
\begin{align}
    U(g) = \bigoplus_{i \in N}  U^i_\lambda(g) \simeq U_\lambda(g) \otimes \I_{\cn_\lambda} .
\end{align}
\end{lemma}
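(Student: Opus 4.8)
The plan is to make the abstract multiplicity decomposition concrete by assembling a family of intertwiners into a single unitary that is manifestly equivariant. Since, by hypothesis, every component $U^i_\lambda$ is isomorphic to one fixed irreducible representation $U_\lambda$ acting on a model space $\cm_\lambda\simeq\ch_\lambda$, the first step is to choose, for each $i\in N$, a \emph{unitary} intertwiner $W_i\colon\cm_\lambda\to\ch_\lambda^i$ with $W_iU_\lambda(g)=U^i_\lambda(g)W_i$ for all $g\in G$. That a unitary (and not merely invertible) intertwiner exists follows from Schur's Lemma together with the polar decomposition: given any invertible intertwiner $T$, the positive operator $T\ct T$ commutes with $U_\lambda$ (taking adjoints turns the intertwining relation for $T$ into one for $T\ct$), hence equals a scalar $c>0$ times the identity on the irreducible space $\cm_\lambda$, so the unitary part $W_i:=c^{-1/2}T$ is again an intertwiner.

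Next I would set $\cn_\lambda:=\ell^2(N)$ with orthonormal basis $\{\ket{i}\}_{i\in N}$ and define $\Phi\colon\cm_\lambda\otimes\cn_\lambda\to\ch$ on product vectors by $\Phi(\psi\otimes\ket{i}):=W_i\psi\in\ch_\lambda^i\subseteq\ch$, extended linearly. The core computation is that $\Phi$ is an isometry: because the summands $\ch_\lambda^i$ are mutually orthogonal and each $W_i$ is an isometry, one finds
\begin{align*}
\inn{\Phi(\psi\otimes\ket{i})}{\Phi(\phi\otimes\ket{j})}
=\inn{W_i\psi}{W_j\phi}
=\delta_{ij}\inn{\psi}{\phi}
=\inn{\psi\otimes\ket{i}}{\phi\otimes\ket{j}},
\end{align*}
so inner products are preserved. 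Surjectivity is immediate since the image of $\Phi$ contains every $\ch_\lambda^i$ (fix $\ket{i}$, vary $\psi$), and these span a dense subspace of $\ch$; together with isometry this shows $\Phi$ extends to a unitary.

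Finally I would verify the equivariance $\Phi\,(U_\lambda(g)\otimes\I_{\cn_\lambda})=U(g)\,\Phi$, which is a one-line check on product vectors:
\begin{align*}
\Phi\big((U_\lambda(g)\otimes\I_{\cn_\lambda})(\psi\otimes\ket{i})\big)
&= W_iU_\lambda(g)\psi
= U^i_\lambda(g)W_i\psi \\
&= U(g)\,\Phi(\psi\otimes\ket{i}),
\end{align*}
using the intertwining property of $W_i$ and the fact that $U(g)$ restricts to $U^i_\lambda(g)$ on the summand $\ch_\lambda^i$. Extending by linearity and continuity yields the claimed isomorphism of representations, $U(g)=\bigoplus_{i\in N}U^i_\lambda(g)\simeq U_\lambda(g)\otimes\I_{\cn_\lambda}$.

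The main obstacle I anticipate is not any single hard step but the care required when $N$ is infinite: one must confirm that the Hilbert space direct sum $\bigoplus_{i\in N}\ch_\lambda^i$ is genuinely isometric to the \emph{completed} tensor product $\cm_\lambda\otimes\ell^2(N)$, so that $\Phi$ extends continuously from the algebraic span to the full Hilbert space and the density argument for surjectivity survives passage to the completion. The representation-theoretic content---existence and unitarity of the $W_i$, and the scalarity of $T\ct T$---is furnished by Schur's Lemma and the finite-dimensionality of irreducible representations of compact groups recalled in the excerpt, so the remaining work is essentially bookkeeping about orthogonality and convergence.
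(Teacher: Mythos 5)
Your proof is correct and follows essentially the same route as the paper's: both assemble per-summand intertwiners (your $W_i\colon\cm_\lambda\to\ch_\lambda^i$, the paper's $\phi_i\colon\ch_\lambda^i\to\cm_\lambda$, just in opposite directions) into a single equivariant unitary onto $\cm_\lambda\otimes\ell^2(N)$. The only difference is that you explicitly justify what the paper leaves implicit --- the unitarity of the intertwiners via Schur's Lemma and polar decomposition, and the isometry/density argument needed to extend to the completed direct sum when $N$ is infinite --- which strengthens rather than alters the argument.
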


\begin{proof}
    Denote $\phi_i$ the representation isomorphism $\ch_\lambda^i \to \cm_\lambda$, where $\cm_\lambda$ carries the representation $U_\lambda(g)$, which exists since all $\ch_\lambda^i$ are isomorphic. Define $\Phi_i: \ch_\lambda^i \to \cm_\lambda \otimes \ch_i$ for $i \in N$ where $\ch_i \simeq \Cl$:
    \begin{align}
        \Phi_i: v^i \mapsto \phi_i(v^i) \otimes e_i,
    \end{align}
    where $v^i \in \ch_\lambda^i$ and $e_i$ a fixed unit vector in $\Cl$. The map $\Phi_i$ is a group representation isomorphism: $\Phi_i(U_\lambda^i(g) \bullet) = (U_\lambda(g) \otimes 1_i) \Phi_i(\bullet)$. Let $\Phi = \bigoplus_{i\in N}  \Phi_i: \ch \to \bigoplus_{i\in N}  (\cm_\lambda \otimes \ch_i)$. This map is also a group representation isomorphism. Moreover,
    \begin{align}
        \bigoplus_{i\in N}  (\cm_\lambda \otimes \ch_i) \simeq \cm_\lambda \otimes \bigoplus_{i\in N}  \ch_i = \cm_\lambda \otimes \cn_\lambda
    \end{align}
    where $\cn_\lambda =\ell^2(N)$. The action of $U(g)$ under this isomorphism is $U_\lambda(g) \otimes \I_{\cn_\lambda}$.
\end{proof}

\begin{lemma}
\label{LemInvarOperators}
    Given a unitary reducible representation $U(g) = \bigoplus_{i = 1}^n U^i_\lambda(g)$ acting on $\ch \simeq \bigoplus_{i = 1}^n \ch_\lambda^i$ a direct sum of separable Hilbert spaces $\ch_\lambda^i$, where each irreducible component is isomorphic, there exist unitary operators $W_{ij}:\ch_\lambda^i\to\ch_\lambda^j$ such that every operator $A$ with $[A,U(g)]=0$ can be written as
\begin{align}
    A = \sum_{ij} a_{ij} W_{ij}
\end{align}
for some $a_{ij}\in\mathbb{C}$. Moreover, the $W_{ij}$ are defined by the property $W_{ij}U_\lambda^j(g)=U_\lambda^i(g)W_{ij}$.
\end{lemma}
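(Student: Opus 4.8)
The plan is to combine the block-matrix structure of operators on $\ch=\bigoplus_{i=1}^n \ch_\lambda^i$ with the complex form of Schur's Lemma. Since $G$ is compact, each irreducible summand $\ch_\lambda^i$ is finite-dimensional, and the direct sum is finite, so the whole space is finite-dimensional and no analytic subtleties (convergence, unbounded operators, domains) arise.

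First I would write an arbitrary operator $A$ in block form relative to the decomposition, denoting by $A_{ij}:\ch_\lambda^j\to\ch_\lambda^i$ the component $A_{ij}=P_i A P_j$, where $P_i$ is the orthogonal projector onto $\ch_\lambda^i$. Because $U(g)=\bigoplus_k U_\lambda^k(g)$ is block-diagonal, the commutation relation $[A,U(g)]=0$ decouples block-by-block into $A_{ij}U_\lambda^j(g)=U_\lambda^i(g)A_{ij}$ for every $g\in G$ and every pair $(i,j)$. Thus each block is precisely an intertwiner between the irreducible representations $U_\lambda^j$ and $U_\lambda^i$.

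Next I would fix, once and for all, a unitary intertwiner $W_{ij}$ satisfying $W_{ij}U_\lambda^j(g)=U_\lambda^i(g)W_{ij}$ for each pair $(i,j)$. These exist because all the $U_\lambda^k$ are mutually equivalent unitary irreps: equivalence furnishes an invertible intertwiner $T$, and its polar part is again an intertwiner (the positive factor $\sqrt{T^\dagger T}$ commutes with $U_\lambda^j$ by Schur, since $T^\dagger T$ is an endomorphism of an irrep), so the unitary part of $T$ serves as $W_{ij}$. Schur's Lemma over $\comp$ then guarantees that $\mathrm{Hom}_G(U_\lambda^j,U_\lambda^i)$ is one complex-dimensional; since $W_{ij}$ is a nonzero element of this line, every intertwiner --- in particular $A_{ij}$ --- is a scalar multiple, $A_{ij}=a_{ij}W_{ij}$ with $a_{ij}\in\comp$. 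Reassembling, $A=\sum_{ij}A_{ij}=\sum_{ij}a_{ij}W_{ij}$, which is the claimed form.

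The one genuinely load-bearing step is the appeal to the complex Schur's Lemma to conclude that the intertwiner space between two copies of the same irrep is exactly one-dimensional; this is what collapses each block to a single scalar, and is where working over $\comp$ (rather than $\reals$, where the endomorphism algebra could be larger) is essential. Everything else --- the block decomposition, the unitarization of the chosen intertwiners, and the final reassembly --- is routine bookkeeping.
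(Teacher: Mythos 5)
Your proof is correct and follows essentially the same route as the paper's: decompose $A$ into blocks $A_{ij}=\Pi_i A\Pi_j$, observe that $[A,U(g)]=0$ forces each block to intertwine $U_\lambda^j$ and $U_\lambda^i$, and invoke Schur's Lemma to collapse each block to a scalar multiple of a fixed unitary intertwiner $W_{ij}$. The only difference is that you spell out details the paper leaves implicit --- the polar-decomposition construction of the unitary $W_{ij}$ and the one-dimensionality of the intertwiner space over $\comp$ --- which is a welcome elaboration rather than a different approach.
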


\begin{proof}
    Define the orthogonal projector $\Pi_i: \ch \to \ch_\lambda^i$. Then since $\sum_i \Pi_i = \I_{\ch}$ we can decompose $A$ as:
\begin{align}
    A = \sum_{i,j} \Pi_{i} A \Pi_j = \sum_{ij} A_{ij}.
\end{align}
This is a block decomposition of $A$, and $[A,U(g)]=0$ implies $A_{ij}U_\lambda^j(g)=U_\lambda^i(g)A_{ij}$. Hence, by Schur's Lemma, $A_{ij}$ is a multiple of a unitary operator $W_{ij}$ which is unique up to a constant which we can absorb into $a_{ij}$.
\end{proof}

\begin{lemma}\label{lem:inv_operator_tensor}
    Given a representation $U_\lambda(g) \otimes \I_{\cn_\lambda}$ acting on $\cm_\lambda \otimes \cn_\lambda$, any operator $A$ which commutes with $U_\lambda(g) \otimes \I_{\cn_\lambda}$ is of the form:
\begin{align}
    A = \I_{\cm_\lambda} \otimes A_{\cn_\lambda} . 
\end{align}
\end{lemma}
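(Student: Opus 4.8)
The plan is to reduce the statement to Schur's lemma by compressing $A$ along the passive tensor factor $\cn_\lambda$. Since $U_\lambda$ is an irreducible representation of the compact group $G$ on the \emph{finite-dimensional} space $\cm_\lambda$, its commutant on $\cm_\lambda$ consists only of scalar multiples of the identity; the trick is to extract, for each pair of vectors in $\cn_\lambda$, an operator on $\cm_\lambda$ that commutes with all $U_\lambda(g)$ and must therefore be such a scalar.

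Concretely, first I would fix $\phi,\psi\in\cn_\lambda$ and define a compressed operator $A_{\phi\psi}$ on $\cm_\lambda$ through its matrix elements $\langle m|A_{\phi\psi}|m'\rangle := \langle m\otimes\phi|A|m'\otimes\psi\rangle$. Writing out $[A,U_\lambda(g)\otimes\I_{\cn_\lambda}]=0$ at the level of these matrix elements and using unitarity of $U_\lambda(g)$ yields $[A_{\phi\psi},U_\lambda(g)]=0$ for every $g\in G$. By Schur's lemma there is a scalar $c(\phi,\psi)\in\comp$ with $A_{\phi\psi}=c(\phi,\psi)\,\I_{\cm_\lambda}$. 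The map $(\phi,\psi)\mapsto c(\phi,\psi)$ is sesquilinear, so once it is shown to be bounded it is represented by a unique bounded operator $A_{\cn_\lambda}$ on $\cn_\lambda$ via $c(\phi,\psi)=\langle\phi|A_{\cn_\lambda}|\psi\rangle$. Unwinding the definitions then gives $\langle m\otimes\phi|A|m'\otimes\psi\rangle=\langle m|m'\rangle\,\langle\phi|A_{\cn_\lambda}|\psi\rangle$ for all $m,m',\phi,\psi$, which is exactly the statement that $A=\I_{\cm_\lambda}\otimes A_{\cn_\lambda}$.

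The main obstacle is purely functional-analytic: $\cn_\lambda$ may be infinite-dimensional (it is only assumed separable), so I must ensure that the scalar function $c$ genuinely defines a bounded operator rather than merely a densely defined form. This follows from the estimate $|c(\phi,\psi)|=\|A_{\phi\psi}\|\le\|A\|\,\|\phi\|\,\|\psi\|$, which shows $c$ is a bounded sesquilinear form, so that the Riesz-type representation producing $A_{\cn_\lambda}$ applies. With that in hand the argument is complete. Alternatively, one could deduce the claim from Lemma~\ref{LemInvarOperators} by observing that in the tensor decomposition all intertwiners $W_{ij}$ between the isomorphic copies of $\cm_\lambda$ are, up to a phase, the canonical identifications, so that the coefficient matrix $(a_{ij})$ assembles precisely into the single factor $A_{\cn_\lambda}$; but the direct Schur argument above is cleaner and avoids choosing a basis of $\cn_\lambda$.
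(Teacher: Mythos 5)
Your proof is correct, but it takes a genuinely different route from the paper's. The paper derives this lemma from its Lemma~\ref{LemInvarOperators}: it works in the direct-sum picture $\bigoplus_i \ch_\lambda^i$, fixes the intertwiners $W_{ij}$, shows $W_{ki}W_{ij}=W_{kj}$ so that consistent bases $\{\ket{e_k,i}\}$ can be chosen across the copies, writes every invariant operator as $\sum_{ij}a_{ij}\sum_k\ketbra{e_k,i}{e_k,j}$, and then transports this through the isomorphism $\bigoplus_i\ch_\lambda^i\simeq\cm_\lambda\otimes\cn_\lambda$ to read off $\I_{\cm_\lambda}\otimes A_{\cn_\lambda}$. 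You instead argue directly on the tensor product: the compression $A_{\phi\psi}=(\I_{\cm_\lambda}\otimes\bra\phi)A(\I_{\cm_\lambda}\otimes\ket\psi)$ commutes with every $U_\lambda(g)$ (your computation of this step is sound), Schur's lemma on the finite-dimensional irrep space forces $A_{\phi\psi}=c(\phi,\psi)\I_{\cm_\lambda}$, and the bound $|c(\phi,\psi)|\le\|A\|\,\|\phi\|\,\|\psi\|$ makes $c$ a bounded sesquilinear form, hence of the form $\bra\phi A_{\cn_\lambda}\ket\psi$ by Riesz representation; agreement of matrix elements on the total set of product vectors, together with boundedness, then gives $A=\I_{\cm_\lambda}\otimes A_{\cn_\lambda}$. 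Your approach buys two things: it is basis-free, and it handles an infinite-dimensional multiplicity space $\cn_\lambda$ with explicit functional-analytic care (the paper's route, via the sum $\sum_{ij}a_{ij}W_{ij}$ over possibly infinitely many blocks, leaves convergence implicit). What the paper's route buys is reuse: Lemma~\ref{LemInvarOperators} and the compatible bases it constructs are the same ingredients used to realize the isomorphism of Lemma~\ref{lem:direct_sum_tensor_isomorphism} and are invoked again in the proof of Lemma~\ref{supp:lem:LemSchurConsequence}, so the block/intertwiner bookkeeping is done once and shared. Your closing remark about the alternative derivation from Lemma~\ref{LemInvarOperators} is essentially a sketch of the paper's actual proof, so you have correctly identified both routes.
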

\begin{proof}
    We make use of the isomorphism $\phi: \bigoplus_i \ch_\lambda^i \to \cm_\lambda \otimes \cn_\lambda$. According to Lemma~\ref{LemInvarOperators}, we have
    \[
    W_{ki}W_{ij}U_\lambda^j(g)=W_{ki}U_\lambda^i(g)W_{ij}=U_\lambda^k(g)W_{ki}W_{ij}
    \]
    and thus $W_{ki}W_{ij}=W_{kj}$. Thus, we can find bases $\{|e_k,i\rangle\}_k$ of every $\ch_\lambda^i$ such that $W_{ij}|e_k,i\rangle=|e_k,j\rangle$. Hence,
    every invariant operator on $ \bigoplus_i \ch_\lambda^i$ is of the form:
    \begin{align}
        A = \sum_{ij} a_{ij} \sum_k \ketbra{e_k,i}{e_k,j}.
    \end{align}
    Under the isomorphism $\phi$ this is mapped to:
\begin{align}
    (\phi \otimes \phi^*)(A) &= \sum_{ij} a_{ij} \sum_k \ket{e_k} \ket i \bra{e_k}\bra{j} \\ &= \sum_k \ketbra{e_k}{e_k} \otimes \sum_{ij} a_{ij} \ketbra{i}{j} \\ &= \I_{\cm_\lambda} \otimes A_{\cn_\lambda} .
\end{align}
This shows that $A$ is of the form as claimed.
\end{proof}

\begin{lemma}
\label{supp:lem:LemSchurConsequence}
A unitary representation $U$ of a  compact group $G$ on a separable Hilbert vector space $\ch$ induces a decomposition:
\begin{align}
    \ch  \simeq \bigoplus_{\lambda} \cm_\lambda \otimes \cn_\lambda ,
\end{align}
where $\lambda$ labels the irreducible representations of $G$. The action of $U(g)$ is
\begin{align}
    U(g) \simeq  \bigoplus_{\lambda} U_{\cm_\lambda} \otimes \I_{\cn_\lambda}.
\end{align}
All invariant operators $A$ are of the form:
    \begin{align}
    A = \bigoplus_{\lambda}  \I_{\cm_\lambda} \otimes A_{\cn_\lambda} . 
\end{align}
\end{lemma}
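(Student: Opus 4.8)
The plan is to assemble the statement from the three preceding lemmas together with the two standard facts quoted at the start of this appendix: that a unitary representation of a compact group on a separable Hilbert space decomposes as a Hilbert-space direct sum of irreducible subrepresentations, and that every such irreducible is finite-dimensional. First I would invoke these facts to write $\ch \simeq \bigoplus_i \ch^i$ with each $\ch^i$ finite-dimensional and irreducible, and then sort the summands by isomorphism type: for each irrep label $\lambda$ let $N_\lambda$ index those $\ch^i$ carrying a representation equivalent to $\lambda$, and regroup as $\ch \simeq \bigoplus_\lambda \big(\bigoplus_{i\in N_\lambda}\ch^i_\lambda\big)$.

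Next I would apply Lemma~\ref{lem:direct_sum_tensor_isomorphism} to each isotypic block $\bigoplus_{i\in N_\lambda}\ch^i_\lambda$ separately. That lemma is stated precisely for a direct sum of mutually isomorphic irreducibles and yields $\bigoplus_{i\in N_\lambda}\ch^i_\lambda \simeq \cm_\lambda \otimes \cn_\lambda$ with $\cm_\lambda \simeq \ch_\lambda$, $\cn_\lambda \simeq \ell^2(N_\lambda)$, and intertwining action $U_\lambda(g)\otimes\I_{\cn_\lambda}$; note that $N_\lambda$ may be countably infinite, so $\cn_\lambda$ may be infinite-dimensional, which the lemma already permits. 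Taking the direct sum over $\lambda$ then gives the claimed $\ch \simeq \bigoplus_\lambda \cm_\lambda\otimes\cn_\lambda$ and $U(g) \simeq \bigoplus_\lambda U_{\cm_\lambda}\otimes\I_{\cn_\lambda}$.

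For the form of the invariant operators I would argue in two stages. First, any $A$ with $[A,U(g)]=0$ must be block-diagonal with respect to the isotypic decomposition: writing $A_{\lambda\mu}$ for the block mapping the $\mu$-isotype into the $\lambda$-isotype, the intertwining relation forces $A_{\lambda\mu}$ to intertwine $U_\mu$ with $U_\lambda$, and by Schur's lemma for inequivalent irreducibles this block vanishes whenever $\lambda\neq\mu$. Hence $A = \bigoplus_\lambda A_{\lambda\lambda}$, with each $A_{\lambda\lambda}$ commuting with $U_\lambda(g)\otimes\I_{\cn_\lambda}$ on $\cm_\lambda\otimes\cn_\lambda$. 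Second, I would apply Lemma~\ref{lem:inv_operator_tensor} to each diagonal block to conclude $A_{\lambda\lambda} = \I_{\cm_\lambda}\otimes A_{\cn_\lambda}$, yielding the stated form $A = \bigoplus_\lambda \I_{\cm_\lambda}\otimes A_{\cn_\lambda}$.

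The main obstacle I anticipate is bookkeeping in the infinite-dimensional and infinite-multiplicity regime. Lemma~\ref{LemInvarOperators} is stated for finitely many isomorphic components, so the vanishing of the off-diagonal blocks between inequivalent $\lambda$ and the convergence of the block sums must be justified directly from Schur's lemma in the separable setting, for instance by working with the orthogonal projectors onto the isotypic components (which themselves commute with every $U(g)$ and so are left invariant by conjugation). I would also need to read ``direct sum'' as the completed Hilbert-space direct sum and interpret the operator identities in the strong operator topology when infinitely many $\lambda$ occur; these points are routine but are exactly where the compactness and separability hypotheses are doing the work.
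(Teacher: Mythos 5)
Your proposal is correct and follows essentially the same route as the paper's own proof: decompose $\ch$ into irreducibles, group them into isotypic blocks, apply Lemma~\ref{lem:direct_sum_tensor_isomorphism} to each block to obtain $\cm_\lambda\otimes\cn_\lambda$, then use Schur's lemma to kill the off-diagonal blocks of an invariant operator and Lemma~\ref{lem:inv_operator_tensor} on each diagonal block. The only (cosmetic) difference is that the paper writes the operator block decomposition at the level of individual irreducible subspaces $A_{\lambda,\lambda',i,j}$ rather than whole isotypic components, and your added care about the infinite-multiplicity case is a point the paper's proof glosses over rather than contradicts.
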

\begin{proof}
Since $U(g)$ is a finite-dimensional representation of a compact group, it follows that it decomposes into a direct sum of irreducible representations:
        \begin{align}
        \ch \simeq \bigoplus_\lambda \bigoplus_{i_\lambda} \ch_{i_\lambda} , 
    \end{align}
Applying~\Cref{lem:direct_sum_tensor_isomorphism} to each subspace $\ch_\lambda:=\bigoplus_{i_\lambda}\ch_{i_\lambda}$ gives:
\begin{align}
    \ch  &\simeq \bigoplus_{\lambda} \cm_\lambda \otimes \cn_\lambda, \label{eqMultSpace16}\\
    U(g) &\simeq \bigoplus_{\lambda} U_\lambda(g) \otimes \I_{\cn_\lambda} . \label{eqMultSpace17}
\end{align}

An operator $A: \bigoplus \ch_\lambda \to \ch_\lambda$ can be decomposed as:
\begin{align}
    A = \sum_{\lambda, \lambda'} \sum_{i,j} \Pi_{\ch_{\lambda'}^j } A\Pi_{\ch_\lambda^i}  = \sum_{\lambda, \lambda' , i ,j} A_{\lambda, \lambda' , i ,j}, 
\end{align}
where each $A_{\lambda, \lambda' , i ,j} $ is an operator from $\ch_\lambda^i$ to $\ch_{\lambda'}^j$. 

Invariance $[A,U(g)]=0$ implies $A_{\lambda,\lambda',i,j} U_{\lambda'}^j(g)=U_\lambda^i(g)A_{\lambda,\lambda',i,j}$. For $\lambda \neq \lambda'$, Schur's Lemma entails that $A_{\lambda, \lambda' , i ,j} = 0$. Hence $A=\sum_{\lambda,i,j} A_{\lambda,\lambda,i,j}$ is block-diagonal in the $\ch_\lambda$, and so $A_\lambda:=\sum_{ij} A_{\lambda,\lambda,i,j}$ satisfies $[A_\lambda,U_\lambda(g)\otimes\I_{\mathcal{N}_\lambda}]=0$. Using~\Cref{lem:inv_operator_tensor} on each $\ch_\lambda \simeq \cm_\lambda \otimes \cn_\lambda$ gives:
\begin{align}
    A = \bigoplus_{\lambda}  \I_{\cm_\lambda} \otimes A_{\cn_\lambda} . 
\end{align}
This proves the main result of this section.
\end{proof}

\section*{Supplementary Note 2 -- Descriptions of parastatistics}
\label{supp:sec:DescPara}

\subsection{Examples and classification}

We first illustrate a Bosonic, Fermionic and paraparticle representation in first quantization for the case of $N=3$ particles, as has been discussed in~\cite{Peres}. The action of standard permutations in this context is given by 
\begin{equation}
U(\pi)|x_1,\ldots,x_N\rangle=|x_{\pi^{-1}(1)},\ldots,x_{\pi^{-1}(N)}\rangle,
\end{equation}
which permutes the tensor factors associated to Hilbert spaces describing single particles.

\begin{example}
\label{supp:ex:PeresExample}
    For simplicity, let us only consider states of the form $|x_1,x_2,x_3\rangle$  where $x_1,x_2$ and $x_3$ are all distinct and orthonormal. First, note that the states
\begin{align}
    |b\rangle &=\frac{1}{|S_3|}\sum_{\pi \in S_3}U(\pi)|x_1,x_2,x_3\rangle, \\
    |f\rangle &=\frac{1}{|S_3|}\sum_{\pi \in S_3}\textnormal{sgn}(\pi)U(\pi)|x_1,x_2,x_3\rangle
\end{align}
fulfill $U(\pi)|b\rangle=|b\rangle$ and $U(\pi)|f\rangle=\textnormal{sgn}(\pi)|f\rangle$ for all $\pi \in S_3$. Hence, they are contained in the symmetric (Bosonic) and antisymmetric (Fermionic) subspaces of $\mathcal{H}$, on which $U(\pi)$ acts as the trivial and the sign representation, respectively.

Now, for $\omega=e^{i\frac{2\pi}{3}}$, let us also consider the orthonormal states
\begin{align}
|\psi_1\rangle&=\frac{1}{\sqrt{3}}(|x_1,x_2,x_3\rangle +\omega|x_2,x_3,x_1\rangle+\bar{\omega}|x_3,x_1,x_2\rangle),\\
|\psi_2\rangle&=\frac{1}{\sqrt{3}}(|x_2,x_1,x_3\rangle +\omega|x_3,x_2,x_1\rangle+\bar{\omega}|x_1,x_3,x_2\rangle).
\end{align}
For the permutation $\tau=(12)$ exchanging positions 1 and 2, we have $U(\tau)|\psi_1\rangle=|\psi_2\rangle$, while the cyclic permutation $\sigma=(123)$ gives $U(\sigma)|\psi_1\rangle=\omega|\psi_1\rangle$ and $U(\sigma)|\psi_2\rangle=\bar\omega|\psi_2\rangle$. Therefore, since every $\pi\in S_3$ is generated by the elements $\tau$ and $\sigma$, we find that $\mathcal{H}_{\rm Para}:=\textnormal{span}\{|\psi_1\rangle,|\psi_2\rangle\}$ is a two-dimensional invariant subspace under $U(\pi)$, which acts as an irrep $U_{\rm Para}(\pi)$ on $\mathcal{H}_{\rm Para}$. This is the simplest example of a parastatistics sector. The corresponding Young diagram is given by
\begin{align}
\lambda_{\rm Para}=\ydiagram{2,1}\,\,.
\end{align}

Moreover, we observe that $U(\pi)$ acts as the regular representation $U_r(\pi)$ on the invariant subspace $\mathcal{H}_r:=\textnormal{span}\{U(\pi)|x_1,x_2,x_3\rangle|\pi \in S_3\}$, where  $\ket{x_1, x_2, x_3}$ is a fixed initial state,  which contains the states $|b\rangle,|f\rangle,|\psi_1\rangle$ and $|\psi_2\rangle$ described there. Therefore, we must have 
\begin{align}
    U_r(\pi)=1 \oplus \textnormal{sgn}(\pi)\oplus U_{\rm Para}(\pi)\otimes\I_2,
\end{align}
and thus the orthogonal decomposition
\begin{align}
    \mathcal{H}_r=\comp|b\rangle \oplus \comp|f\rangle\oplus2\mathcal{H}_{\rm Para},
\end{align}
as $1,\sgn(\pi)$ and $U_{\rm Para}(\pi)$ are the only irreps of $S_3$. Furthermore, this tells us that $U_{\rm Para}(\pi)$ must appear twice. Indeed, a second instance of $\mathcal{H}_{\rm Para}$ is spanned by the orthonormal pair
\begin{align}
|\phi_1\rangle&=\frac{1}{\sqrt{3}}(|x_2,x_1,x_3\rangle +\bar{\omega}|x_3,x_2,x_1\rangle+\omega|x_1,x_3,x_2\rangle),\\
|\phi_2\rangle&=\frac{1}{\sqrt{3}}(|x_1,x_2,x_3\rangle +\bar{\omega}|x_2,x_3,x_1\rangle+\omega|x_3,x_1,x_2\rangle),
\end{align}
for which we also find $\langle\psi_i|\phi_j\rangle=0$, due to $1+\omega+\bar{\omega}=0$.

Moreover, given $U(\tau)|\psi_1\rangle=|\psi_2\rangle$ and $U(\tau)|\psi_2\rangle=|\psi_1\rangle$ for $\tau=(12)$, the states are related by a permutation and cannot be distinguished due to our postulate of permutation-invariance. Hence, they lead to the same physical predictions as the mixed state $\rho_\psi=\frac{1}{2}(|\psi_1\rangle \langle\psi_1|+|\psi_2\rangle \langle\psi_2|)$. This state is invariant, and is hence a valid description of preparation procedures according to permutation-invariance on the level of states. This is true for the pair $|\phi_1\rangle,|\phi_2\rangle$ as well, leading to $\rho_\phi$ there. However, although each pair cannot be distinguished, it is possible to distinguish the two pairs from each other, since $\rho_\psi$ and $\rho_\phi$ are both invariant and orthogonal to one another.

\end{example}

Any collection of $N$ Bosons (Fermions) is described by states contained in irreducible subspaces that correspond to the Young diagram of a single row (column) of $N$ boxes. Moreover, let us consider a collection of $N+1$ Bosons (Fermions), $N$ of which are localized in a lab and one very far away, such that it cannot be accessed by an observer in the lab. It is well-known that the representation obtained by restricting a representation of the group $S_{N+1}$ with Young diagram $\lambda$ to the subgroup $S_N$ is the direct sum of all representations with Young diagram $\lambda'$  obtained by removing a square from $\lambda$~\cite{Simon}.  Hence, given the corresponding Young diagram of a single row (column) of $N+1$ boxes, the only valid Young diagram obtained by removing one box is the single row (column) of $N$ boxes. Therefore, it is clear that the $N$ particles in the lab must behave as Bosons (Fermions) as well. In the same way, by considering the para-sector of three particles as in Example \ref{supp:ex:PeresExample} above, we can imagine a scenario where we only have access to two of the particles, and no access whatsoever to the third particle (this includes not having access to the outcomes of potential measurements that some other observer might have performed on the third particle). Since the mere existence of the third particle far away should not influence  the measurement outcome statistics in the lab, the accessible pair should therefore either behave as a pair of Bosons or Fermions. 

This is indeed shown in~\cite{Taylor1} and generalized in their classification of paraparticles \cite{Taylor2}: parabosons (parafermions) of order $p\in \nats$ correspond to the family of Young diagrams with a maximum number of $p$ rows (columns). Restricting to a subset of the particles corresponds to removing boxes from the Young diagrams, and this preserves the particle type if it is defined in this way. Note that parafermions as used here should not be confused with the notion of parafermions in the context of non-Abelian anyons \cite{osti_5929972,Fendley_2014}.

\subsection{Parastatistics in second quantization}
\label{supp:sec:ParaStatSecQuant}

It is second quantization in which parastatistics was first conceived of by Green \cite{Green}. There, Bosons (Fermions) are described by a Fock space $\mathcal{F}$ generated by creation operators $a_{k}^\dagger$ that fulfill the usual (anti)commutation relations, where $k$ denote modes such as momentum or position. However, in order to preserve the equations of motion, one can have more general, trilinear (anti)commutation relations \cite{Green, Messiah_Greenberg}:
\begin{align}
\label{supp:eq:TrilRel}
    [a_{k_1},[a_{k_2}^\dagger,a_{k_3}]_{\pm}]_-&=2\delta_{k_1k_2}a_{k_3},\\ [a_{k_1},[a_{k_2}^\dagger,a^\dagger_{k_3}]_{\pm}]_-&=2\delta_{k_1k_2}a^\dagger_{k_3}\pm2\delta_{k_1k_3}a^\dagger_{k_2}, \\ [a_{k_1},[a_{k_2},a_{k_3}]_{\pm}]_-&=0,  
\end{align}

with $[X,Y]_\pm:=XY\pm YX$ denoting anticommutators and commutators, respectively. These relations led to a whole new class of fields that were eventually named parabosons (parafermions), corresponding to the choice of anticommutators (commutators) above. Green's ansatz of solutions for the trilinear equations involves writing the parabosonic (parafermionic) operators as sums of $p$ distinct Bosonic (Fermionic) creation and annihilation operators:
\begin{align}
    a^{(\dagger)}_k=\sum_{\alpha=1}^pc^{(\dagger)}_{\alpha,k}
\end{align}
with
\begin{align}
    [c^\dagger_{\alpha,k_1},c_{\alpha,k_2}]_\mp &= \delta_{k_1,k_2}, \hspace{0.5cm} [c_{\alpha,k_1},c_{\alpha,k_2}]_\mp=0 \nonumber \\ [c^\dagger_{\alpha,k_1},c_{\beta,k_2}]_\pm &= 0, \hspace{0.5cm} [c_{\alpha,k_1},c_{\beta,k_2}]_\pm=0, \hspace{0.5cm} \alpha \neq \beta
\end{align}
Parabosonic (parafermionic) operators $a^{(\dagger)}_k$ of order $p$ are then identified with sums of $p$ Bosonic (Fermionic) operators $c^{(\dagger)}_{\alpha,k}$ which commute (anticommute) for $\alpha=\beta$ and anticommute (commute) for $\alpha \neq \beta$. Moreover, it can be shown \cite{Messiah_Greenberg} that they exhaust all possible solutions compatible with the trilinear relations. In addition to the compatibility with the equations of motion, the trilinear relations can also be derived by requiring invariance of the number operator under unitary changes of basis~\cite{Bialynicki-Birula}.

The $N$-particle subspace of the Fock space is spanned by vectors
\begin{align}
    |k_1,...,k_N\rangle_\mathcal{F} := a_{k_1}^\dagger...a_{k_N}^\dagger|0\rangle_\mathcal{F},
\end{align}
where $|0\rangle_\mathcal{F}$ is the vacuum state. By fixing the entries and order of such a tuple $K:=(k_1,...,k_n)$, one can define a representation $\bar U_K(\pi)$ of $S_N$ on the subspace $\mathcal{H}_K:=\textnormal{span}\{|k_{\pi(1)},...,k_{\pi(N)}\rangle_\mathcal{F}|\pi \in S_N\}$, which transforms creation operators $a_{k_i}^\dagger$ into $\bar U_K(\pi) a_{k_i}^\dagger\bar U_K(\pi)^\dagger=a_{k_{\pi(i)}}^\dagger$. It can then be shown \cite{Ohnuki} that this representation decomposes in the following way: For
parabosons (parafermions) of order $p$, on each of these subspaces, the representation contains exactly once each irrep whose Young diagram has no more than $p$ rows (columns). Thus, a connection between the parafields of second quantization and the family of Young diagrams associated to each paraparticle type in first quantization is established. Defining $\bar U_K(\pi)$ for every ordered tuple $K$ leads to a consistent definition of standard permutations $\bar U(\pi)$ in second quantization, known as \textit{particle permutations} (for more details, see Subsection \ref{supp:sec:StandPermSecQuant}).

\begin{example}
\label{supp:ex:VolkovPara}
    Following the Volkov model from \cite{LandshoffStapp}, we illustrate the case for the simplest parafermion of order 2. There, a product of three creation operators fulfills the relation $a_i^\dagger a_j^\dagger a_k^\dagger=-a_k^\dagger a_j^\dagger a_i^\dagger$ for $i,j,k \in \{1,2,3\}$. This relation halves the dimension of the space of states created from the vacuum $|0\rangle_\mathcal{F}$ down to 3, and the correspondence described above tells us that it will consist of irrep spaces associated to Young diagrams with a maximum of 2 columns. Indeed, on the one hand we find the state
    \begin{align}
        |f\rangle_{\mathcal{F}}= \frac{1}{\sqrt{3}}\left(a^\dagger_{1}a^\dagger_{2}a^\dagger_{3}
    + a^\dagger_{2}a^\dagger_{3}a^\dagger_{1}+ a^\dagger_{3}a^\dagger_{1}a^\dagger_{2} \right)|0\rangle_\mathcal{F},
    \end{align}
    which exhibits usual Fermionic statistics, i.e. it carries the one-dimensional sign representation under particle permutations. On the other hand we find the two-dimensional irrep space spanned by the states
    \begin{align}
        |\psi_{1}\rangle_\mathcal{F} &= \frac{1}{\sqrt{3}}\left(a^\dagger_{1}a^\dagger_{2}a^\dagger_{3}
        + \bar\omega a^\dagger_{2}a^\dagger_{3}a^\dagger_{1}+  \omega a^\dagger_{3}a^\dagger_{1}a^\dagger_{2} \right)|0\rangle_\mathcal{F}, \\
        |\psi_{2}\rangle_\mathcal{F} &= \frac{1}{\sqrt{3}}\left(a^\dagger_{2}a^\dagger_{1}a^\dagger_{3}
        + \bar \omega a^\dagger_{1}a^\dagger_{3}a^\dagger_{2}+ \omega a^\dagger_{3}a^\dagger_{2}a^\dagger_{1} \right)|0\rangle_\mathcal{F}\nonumber \\=&-\frac{1}{\sqrt{3}}\left(a^\dagger_{3}a^\dagger_{1}a^\dagger_{2}
        + \bar \omega a^\dagger_{2}a^\dagger_{3}a^\dagger_{1}+ \omega a^\dagger_{1}a^\dagger_{2}a^\dagger_{3} \right)|0\rangle_\mathcal{F},
    \end{align}
    with $\omega=e^{i\frac{2\pi}{3}}$. As in first quantization, for the permutation $\tau=(12)$ we have $\bar U(\tau)|\psi_1\rangle_\mathcal{F}=|\psi_2\rangle_\mathcal{F}$, while the cyclic permutation $\sigma=(123)$ gives $\bar U(\sigma)|\psi_1\rangle_\mathcal{F}=\omega|\psi_1\rangle_\mathcal{F}$ and $\bar U(\sigma)|\psi_2\rangle_\mathcal{F}=\bar\omega|\psi_2\rangle_\mathcal{F}$. A permutation-invariant and therefore physically allowed state is given by
    \begin{align}
        \rho_S=\frac{1}{2}\left(|\psi_1\rangle\langle\psi_1|_\mathcal{F}+|\psi_2\rangle\langle\psi_2|_\mathcal{F} \right).
    \end{align}
\end{example}
It might seem at first sight that the postulate of permutation-invariance renders parastatistics trivial: in the previous example, it enforced a uniform mixture on the subspace spanned by $|\psi_1\rangle$ and $|\psi_2\rangle$, and one might conjecture that this ``averages out'' all non-trivial consequences of parastatistics. However, in Example~\ref{supp:ex:detectPara} below, we show that this is not the case, and that interesting permutation-invariant physical behavior can be obtained, even if we have a single particle in each mode. Allowing situations with more than one particle per mode shows this in an even simpler way. Consider three parafermions in two modes. There are two orthogonal particle-permutation-invariant states
\begin{align}
    \ket{\psi_\pm}:=( a_1^\dagger  a_1^\dagger  a_2^\dagger \pm   a_2^\dagger a_2^\dagger a_1^\dagger) \ket 0,
\end{align}
and the other possible permutation-invariant combinations are:
\begin{align}
    (a_1^\dagger  a_2^\dagger a_1^\dagger \pm  a_2^\dagger \hat a_1^\dagger  a_2^\dagger) \ket 0 = 0 \\
    ( a_1^\dagger  a_2^\dagger a_2^\dagger \pm  a_2^\dagger  a_1^\dagger  a_1^\dagger) \ket 0 = \mp \ket{\psi_\pm}
\end{align}

Since it is possible to have two modes with three excitations, these particles are not Fermions. Moreover, since it is not possible to have three excitations in one mode, they  also behave differently from Bosons.

We have just seen an example for parafermionic states in second quantization. For a given triple of distinct modes and under the postulate of (particle) permutation-invariance, there is one physically allowed state. However, one must note that the multiplicity of possible states arises from having other degrees of freedom available, i.e. more modes, on which we can condition. This is in no way different for a permutation-invariant state of a pair of Bosons and Fermions, when there are only two modes available. In the following example, we will introduce an additional spin degree of freedom and implement a conditional permutation to show that paraparticles are still detectable, as well as different from Bosons and Fermions despite imposing permutation-invariance. 

\begin{example}
\label{supp:ex:detectPara}
Recall Example~\ref{supp:ex:VolkovPara} involving three parafermions in three distinct modes, and consider an additional qubit with basis states $\{|0\rangle,|1\rangle\}$. This qubit may describe an arbitrary external degree of freedom, or it may (similarly as in the experimental proposal by Roos et al.~\cite{Roos}) describe internal degrees of freedom of the particles, for example $|0\rangle=|\uparrow\uparrow\uparrow\rangle$ if all three spins are up, or $|1\rangle=|\downarrow\downarrow\downarrow\rangle$ if all three spins are down. We start with the permutation-invariant initial product state
\begin{align}
    \rho&=\frac{1}{4}\left(|\psi_1 \rangle \langle\psi_1|_\mathcal{F}+|\psi_2 \rangle \langle\psi_2|_\mathcal{F} \right)\otimes \left(|0\rangle +|1\rangle \right) \left(\langle 0 | +\langle 1 | \right).
\end{align}
We will restrict our considerations to the subspace spanned by $|\psi_1\rangle$and $|\psi_2\rangle$, because $\rho$ is fully supported on it. Similarly as in the Roos et al.~\cite{Roos} experiment, we perform a permutation that is coherently controlled by the qubit: if the qubit is in state $0$, we do nothing; if it is in state $1$, we do a cyclic permutation. However, whether we apply the cyclic permutation $\sigma=(123)$ or $\sigma^{-1}$ shall also be controlled by whether the particles are in state $|\psi_1\rangle$ or $|\psi_2\rangle$. That is, we apply the unitary transformation
\begin{align}
    \bar V&=\mathbb{I}\left(|\psi_{1} \rangle \langle \psi_{1}|_\mathcal{F}\otimes|0\rangle \langle 0 |+|\psi_{2} \rangle \langle \psi_{2}|_\mathcal{F}\otimes|0\rangle \langle 0 | \right) \nonumber \\&+\bar U(\sigma)|\psi_{1} \rangle \langle \psi_{1}|_\mathcal{F}\otimes|1\rangle \langle 1 |+\bar U(\sigma^{-1})|\psi_{2} \rangle \langle \psi_{2}|_\mathcal{F}\otimes|1\rangle \langle 1|.
\end{align}
This is a conditional permutation, but it is not a quantum permutation in the sense of Definition~1 in the main text: the projectors $|\psi_i\rangle\langle\psi_i|_{\mathcal{F}}\otimes|1\rangle\langle 1|$ do not commute with all permutations. However, they \textit{do} commute with those permutations $U(\pi)$ where $\pi$ is in the cyclic group $\mathbb{Z}_3\simeq\{\mathbb{I},\sigma,\sigma^{-1}\}$. Hence, we can regard it as a valid $\mathbb{Z}_3$ QRF transformation. It maps the initial state to $\rho':=\bar V \rho \bar V^\dagger$, which is
\begin{align}
    \rho'&=\frac{1}{4}\left(|\psi_1 \rangle \langle\psi_1|_\mathcal{F}+|\psi_2 \rangle \langle\psi_2|_\mathcal{F} \right)\otimes \left(|0\rangle +\omega|1\rangle \right) \left(\langle 0 | +\bar\omega\langle 1 | \right),
\end{align}
i.e.\ it induces a relative phase of $\omega=\exp\left(\frac{2\pi}{3} i\right)$ on the control qubit. This is clearly impossible for Bosons and Fermions. To see that it has this effect, note that $\bar V$ can be simplified to
\begin{align}
   \bar V=\left(|\psi_1 \rangle \langle\psi_1|_\mathcal{F}+|\psi_2 \rangle \langle\psi_2|_\mathcal{F} \right)\otimes\left(\strut |0\rangle\langle 0|+\omega |1\rangle\langle 1|\right),
\end{align}
which also shows that it is indeed a permutation-invariant transformation, mapping every permutation-invariant state to a permutation-invariant state.

It may seem surprising that we should regard this as an allowed operation: after all, it performs an action depending on whether the internal degree of freedom is $|\psi_1\rangle$ or $|\psi_2\rangle$, and permutation-invariance forbids the measurement of the corresponding projectors. However, these projectors are not used for \textit{measurement}, but for \textit{coherent control} which can be thought of as fundamentally erasing the corresponding outcome in the process. Hence, particle-permutation-invariant states of three parafermions, occupying three distinct modes, can induce relative phases of $\omega=\exp\left(\frac{2\pi}{3} i\right)$ via physically allowed conditional permutations which Bosons and Fermions cannot.
\end{example}

Both Examples \ref{supp:ex:VolkovPara} and \ref{supp:ex:detectPara} can be understood in the same way for parabosons when the relation  $a_i^\dagger a_j^\dagger a_k^\dagger=a_k^\dagger a_j^\dagger a_i^\dagger$ is fulfilled by the creation operators. Both relations are instances of trilinear relations that are more general than ordinary (anti)commutation relations, corresponding to parabosons (parafermions) of the lowest non-trivial order 2. In the case of parafermions, the order coincides with the maximum occupation number per mode: in contrast to ordinary Fermions, $(a_i^\dagger)^2 \neq 0$, but the relations clearly imply that $(a_i^\dagger)^3 = 0$. However, in the special arrangement that the operators have in the Fermionic state $|f\rangle_\mathcal{F}$ in Example \ref{supp:ex:VolkovPara}, setting any two of them equal will lead to 0, as expected.     

\subsection{Example of emergent parastatistics}
In this subsection, we give an example (following~\cite[Section 9]{Landsman:2013ooo}) that shows how parafermions can emerge from Bosons with additional degrees of freedom. In Subsection \hyperref[SubsecFailure]{D} of Supplementary Note 3, we will see that it demonstrates an intuitive reason for why and how complete invariance fails for such emergent systems.
\begin{example}
\label{ExEmergent}
    We begin with three distinguishable isospin doublets $\ch^{(3)} = (L^2(\Rl^3) \otimes \Cl^2)^{\otimes 3}$, which carry the following representation of $S_3$:
\begin{align}\label{eq:S3_iso}
    U^{(3)}(\pi)\psi_{a_1, a_2,a_3}(q_1,q_2,q_3)=\hspace{0.1cm}\psi_{a_{\pi(1)}, a_{\pi(2)},a_{\pi(2)}}(q_{\pi(1)},q_{\pi(2)},q_{\pi(3)}).
\end{align}    
We consider the subspace $\ch^{(3)}_{\lambda_0}$ of Bosonic isospin doublets spanned by symmetrized wavefunctions:
\begin{align}
\label{eq:isospin_S3}
    \psi_{a_1, a_2,a_3}(q_1,q_2,q_3) = \psi_{a_{\pi(1)}, a_{\pi(2)},a_{\pi(2)}}(q_{\pi(1)},q_{\pi(2)},q_{\pi(3)}).
\end{align}

Using Schur-Weyl for the action of $S_3$ given in~\Cref{eq:S3_iso} (i.e. permuting both spatial and spin labels) we have that the space $\ch^{(3)}$ decomposes as:
\begin{align}
    \ch^{(3)} \simeq \bigoplus_{\lambda} V_\lambda \otimes W_\lambda \simeq \bigoplus_{\lambda} \ch^{(3)}_\lambda ,
\end{align}
where $V_\lambda$ carries the irreducible representation of $S_3$ labeled by $\lambda$ and $W_\lambda$ a representation of $U(W_\lambda)$. Since $W_\lambda$ is infinite-dimensional, one needs to choose the correct topology on $W_\lambda$ to define $U(W_\lambda)$, as discussed in \cite[footnote 7]{Landsman:2013ooo}.

However we can further decompose $ \ch^{(3)}_{\lambda_0}$ by considering the fact that the spatial and spin components of $\ch^{(3)}$ carry separate actions of $S_3$.
We write $\ch^{(3)}  \simeq \ch^{(3),\spatial} \otimes \ch^{(3),\spin} $ where  $\ch^{(3),\spatial} \simeq  L^2(\Rl^3)^{\otimes 3}$ and $\ch^{(3),\spin}  \simeq (\Cl^2)^{\otimes 3}$. 
$\ch^{(3),\spatial}$ carries a representation $U^{(3),\spatial}(\pi)$ of $S_3$ (corresponding to permuting the spatial indices) and similarly  $\ch^{(3),\spin}$ carries a representation $U^{(3),\spin}(\pi)$ of $S_3$ (corresponding to permuting the spin indices).

The representation $U^{(3)}$ of $S_3$  given in \Cref{eq:S3_iso} corresponds to performing the same permutation $\pi$ on both  $\ch^{(3),\spatial}$ and $\ch^{(3),\spin}$, i.e.   $U^{(3)}(\pi) = U^{(3),\spatial}(\pi) \otimes U^{(3),\spin}(\pi) $. 

Using Schur-Weyl we can first decompose the spaces  $\ch^{(3),\spatial}$ and $\ch^{(3),\spin}$  under their respective $S_3$ actions:
\begin{align}
    \ch^{(3),\spatial} = L^2(\Rl^3)^{\otimes 3} &\simeq \bigoplus_{\lambda} V_\lambda^\spatial \otimes W_\lambda^\spatial \simeq \bigoplus_{\lambda} \ch^{(3),\spatial}_\lambda , \\
    \ch^{(3),\spin} =  (\Cl^2)^{\otimes 3} &\simeq \bigoplus_{\lambda} V_\lambda^\spin \otimes W_\lambda^\spin \simeq \bigoplus_\lambda \ch^{(3),\spin}_\lambda,
\end{align}
where $V_\lambda^\spatial, V_\lambda^\spin$ carry the irreducible representation of $S_3$ labeled by $\lambda$, $W_\lambda^\spatial$ a representation of $U(W_\lambda^\spatial)$ and $W_\lambda^\spin$ an irreducible representation of ${\rm SU}(2)$. 

Clearly for $\lambda_0$ the Bosonic representation, any state in $\ch^{(3),\spatial}_{\lambda_0} \otimes \ch^{(3),\spin}_{\lambda_0}$ transforms trivially under permutations and is therefore in $\ch^{(3)}_{\lambda_0}$. This is just the statement that the symmetric subspace  $\ch^{(3)}_{\lambda_0}$ of $\ch^{(3)} \simeq \ch^{(3),\spatial} \otimes \ch^{(3),\spin}$ contains the tensor product of the symmetric subspaces of the factors, namely $\ch^{(3),\spatial}_{\lambda_0} \otimes \ch^{(3),\spin}_{\lambda_0} \subset \ch^{(3)}_{\lambda_0}$. However, as is well-known, the symmetric subspace of a composite system also contains the tensor product of the antisymmetric subspaces of the factors: $\ch^{(3),\spatial}_{\lambda_1} \otimes \ch^{(3),\spin}_{\lambda_1} \subset \ch^{(3)}_{\lambda_0}$ where $\lambda_1$ labels the anti-symmetric subspace. Similarly for $\lambda_2$ the $(2,1)$ representation of $S_3$ there will be elements in $\ch^{(3),\spatial}_{\lambda_2} \otimes \ch^{(3),\spin}_{\lambda_2} $ which are invariant under $U^{(3),\spatial}_{\lambda_2}(\pi) \otimes U^{(3),\spin}_{\lambda_2}(\pi)$, i.e. which lie in $\ch^{(3)}_{\lambda_0}$

For  $\lambda_2 = (2,1)$ we  define  a permutation-invariant state in $ V_{\lambda_2 }^\spatial  \otimes V_{\lambda_2 }^\spin $, where $V_{\lambda_2 }^\spatial \simeq  V_{\lambda_2 }^\spin \simeq \Cl^2$ both transform under the $(2,1)$ irrep of $S_3$. From the well known fact that the singlet state is invariant under $U \otimes U$ we can define the following Bosonic state in $\ch^{(3)}_{\lambda_0} \subset \ch^{(3)}$:
\begin{align}\label{eq: inv_pur_21}
    \ket{\Psi} = \left(\frac{1}{\sqrt{2}}( \ket{e_0}_{V_{\lambda_2}^\spatial} \otimes \ket{e_1}_{V_{\lambda_2}^\spin} - \ket{e_1}_{V_{\lambda_2}^\spatial} \otimes \ket{e_0}_{V_{\lambda_2}^\spin})\right) \otimes \ket{\psi}_{W_{\lambda_2}^\spatial,W_{\lambda_2}^\spin},
\end{align}
where $\{\ket{e_i}_{V_{\lambda_2}^\spatial}\}_{i = 0,1}$  is a basis for $V_{\lambda_2}^\spatial$ (and similarly for $V_{\lambda_2}^\spin$) and $\ket{\psi}_{W_{\lambda_2}^\spatial,W_{\lambda_2}^\spin}$ an arbitrary state in $W_{\lambda_2}^\spatial,W_{\lambda_2}^\spin$ . This is clearly invariant since a simultaneous permutation $U^{(3)}(\pi) = U^{(3),\spatial}(\pi) \otimes U^{(3),\spin}(\pi)$ of the spatial and spin indices acts on $\ch_{\lambda_2}^{(3),\spatial} \otimes \ch_{\lambda_2}^{(3),\spin}$ as $U_{V_{\lambda_2}^\spatial}(\pi) \otimes  U_{V_{\lambda_2}^\spin}(\pi) \otimes \I_{W_{\lambda_2}^\spatial}  \otimes \I_{W_{\lambda_2}^\spin} $ and the singlet state over the first two factors is invariant under any $U \otimes U$.

Let us ignore the isospin degrees of freedom and trace out $V_{\lambda}^\spin \otimes W_{\lambda}^\spin$:
\begin{align}\label{eq:para_21}
   \rho_{\rm para}:={\rm Tr}_{V_{\lambda_2}^{\rm spin}W_{\lambda_2}^{\rm spin}}|\Psi\rangle\langle\Psi|=\frac 1 2 \mathbf{1}_{V_{\lambda_2}^{\rm spatial}} \otimes \rho_{W_{\lambda_2}^\spatial},
\end{align}
where $\rho_{W_{\lambda_2}^\spatial} =\Tr_{W_{\lambda_2^\spin}}(\ketbra{\psi}{\psi}_{W_{\lambda_2}^\spatial,W_{\lambda_2}^\spin}) $. This is indeed a parafermionic state of three particles with just spatial degrees of freedom, since $\sum_i  \ketbra{e_i}{e_i}_{V_{\lambda}^\spatial}$ is the identity on the space $V_{\lambda}^\spatial$ transforming under the $(2,1)$ irrep of $S_3$ and $V_{\lambda}^\spatial$ transforms trivially under $S_3$.
\end{example}

\pagebreak
\section*{Supplementary Note 3 -- Complete invariance}

\subsection{Failure of complete invariance for emergent paraparticles}
\label{SubsecFailure}

The effective paraparticles in Example~\ref{ExEmergent} above are permutation-invariant. But does it make sense to require complete invariance? The answer is in the negative, as we now show.

Given an agent Alice using the effective paraparticles of~\Cref{eq:para_21}, an adversary Eve without restriction would have access to the purification given in~\Cref{eq: inv_pur_21}. 

Alice can implement permutations only on the spatial components, and not the spin (since by construction of the effective paraparticles she does not have access to them). Her state $ \rho_{\rm para}$ is indeed invariant under $U_{V_{\lambda_2}^\spatial}(\sigma)$ for all $\sigma \in S_3$. 

The purification $\ket{\Psi}$ which Eve has access to is not invariant under $U_{V_{\lambda_2}^\spatial}(\sigma) \otimes \I_{W_{\lambda_2}^\spatial} \otimes  \I_{V_{\lambda_2}^\spin}(\sigma) \otimes \I_{W_{\lambda_2}^\spin} $ and therefore does not obey complete invariance. Remember that the requirement of being a Bosonic state meant invariance under a different condition, namely $U_{V_{\lambda_2}^\spatial}(\sigma) \otimes \I_{W_{\lambda_2}^\spatial} \otimes  U_{V_{\lambda_2}^\spin}(\sigma) \otimes \I_{W_{\lambda_2}^\spin} $.

We can see that the purification is not invariant under $U_{V_{\lambda_2}^\spatial}(\sigma)$ from the following facts about the $(2,1)$ irrep of $S_3$, namely that for $\sigma = (123)$:
\begin{align}
    U(\sigma) \ket{e_0} = \omega \ket{e_0},\\
    U(\sigma) \ket{e_1} = \bar \omega \ket{e_1},
\end{align}
where $\omega = e^{\frac{2 \pi i}{3}}$.
Therefore:
\begin{align}
    &U_{V_{\lambda_2}^\spatial}(\sigma) \otimes \I_{W_{\lambda_2}^\spatial} \otimes  \I_{V_{\lambda_2}^\spin}(\sigma) \otimes \I_{W_{\lambda_2}^\spin}  \ket{\Psi} \nonumber \\
    =&\frac{1}{\sqrt{2}}\left( \omega \ket{e_0}_{V_{\lambda_2}^\spatial} \otimes \ket{e_1}_{V_{\lambda_2}^\spin} - \bar \omega \ket{e_1}_{V_{\lambda_2}^\spatial} \otimes \ket{e_0}_{V_{\lambda_2}^\spin}\right) \otimes \ket{\psi}_{W_{\lambda_2}^\spatial,W_{\lambda_2^\spin}}\\ \nonumber
    \simeq& \frac{1}{\sqrt{2}}\left(  \ket{e_0}_{V_{\lambda_2}^\spatial} \otimes \ket{e_1}_{V_{\lambda_2}^\spin} - \omega \ket{e_1}_{V_{\lambda_2}^\spatial} \otimes \ket{e_0}_{V_{\lambda_2}^\spin}\right) \otimes \ket{\psi}_{W_{\lambda_2}^\spatial,W_{\lambda_2^\spin}} \nonumber \\ \not\simeq &\ket{\Psi}
\end{align}
where $\simeq$ denotes equivalence up to global phase.

The above example shows that given an effective system of paraparticles emerging from an underlying Bosonic system, the assumption of complete invariance fails exactly because the full underlying Bosonic state is not invariant under permutations acting solely on the parastatistical subsystem.

\subsection{Complete invariance for compact groups}
\label{SuppSecCompleteInv}

In this section, we show that our results on complete invariance do not only hold for the permutation group, but more generally for unitary representations of compact groups $G$ on separable Hilbert spaces. We will always assume that $G$ is, as a topological space, a Hausdorff space.

Since Schur's Lemma applies to unitary representations of arbitrary locally compact Hausdorff groups~\cite[Proposition 5.8]{murnaghan}, and since finite-dimensional representations of such groups are completely reducible, \Cref{supp:lem:LemSchurConsequence} and therefore all the results in this section also apply to \textit{finite-dimensional} representations of such more general groups.

\begin{definition}[Extension of a state]
    An extension of a state $\rho_S$ is a state $\rho_{SA}$ such that $\Tr_A(\rho_{SA}) = \rho_S$.
\end{definition}

\begin{definition}[Invariant  (weakly symmetric) state]
    An invariant (or weakly symmetric) state $\rho_S$ for a given symmetry $U(g)$ is such that $\rho_S = U(g) \rho_S U^\dagger(g)$ for all $g \in G$.
\end{definition}

By~\Cref{lem:inv_operator_tensor} (where $\ch_A$ is trivial) it follows that a weakly invariant state is of the form
\begin{align}
    \rho = \bigoplus p_\lambda \frac{\I_{\cm_\lambda}}{d_\lambda} \otimes \rho_{\cn_\lambda}.
\end{align}

\begin{definition}[Strongly symmetric state]
    A strongly symmetric state $\rho$ for a symmetry $U(g) \simeq \bigoplus_\lambda U_\lambda(g) \otimes \I_{\cn_\lambda}$ has full support on a one-dimensional irreducible representation of $G$:
    \begin{align}
        \rho = \ketbra{e_\lambda}{e_\lambda}_{\cm_\lambda} \otimes \rho_{\cn_\lambda},
    \end{align}
    where $\cm_\lambda$ carries a one-dimensional representation of $G$: $U_\lambda(g) = e^{i \theta_\lambda(g)}$.
\end{definition}

\begin{definition}[Invariant extension]
    Given a state $\rho_S$ an extension $\rho_{SA}$ is invariant iff 
    \begin{align}\label{eq:invariant_extension}
        (U_S(g) \otimes \I_A) \rho_{SA} (U_S^\dagger(g) \otimes \I_A) = \rho_{SA}.
    \end{align}
\end{definition}

It is immediate that the existence of an invariant extension $\rho_{SA}$ for a state $\rho_S$ implies that $\rho_S$ is invariant: 
\begin{align}
     &(U_S(g) \otimes \I_A) \rho_{SA} (U_S^\dagger(g) \otimes \I_A) = \rho_{SA}\\ &\implies U_S(g) \rho_S U_S^\dagger(g) = \rho_S.
\end{align}

\begin{lemma}[Invariant purification]\label{lem:invariant_purification}
    If a state $\rho_S = U_S(g) \rho_S U^\dagger_S(g)$ has an invariant purification $\ket{\psi}_{SA}$ then $\rho_S$ has support on a single one-dimensional irreducible  representation $\lambda$ of $G$:
    \begin{align}
        \rho_S = \ketbra{e_\lambda}{e_\lambda}_{\cm_\lambda} \otimes \rho_{\cn_\lambda^S}.
    \end{align}
\end{lemma}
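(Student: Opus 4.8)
The plan is to exploit the tensor-product structure of the representation together with the purity of the invariant extension, in close analogy with the proof of Lemma~\ref{methods:lem:LemCompleteInvariance}. First I would note that the representation on the composite system factorizes as $U_S(g)\otimes\I_A=\bigoplus_\lambda U_\lambda(g)\otimes(\I_{\cn_\lambda}\otimes\I_A)$, grouping the multiplicity space $\cn_\lambda$ together with the ancilla. This is a unitary representation of $G$ on $\ch_S\otimes\ch_A$ whose irreducible pieces are exactly the $\cm_\lambda$, each appearing with multiplicity space $\cn_\lambda\otimes\ch_A$. Since $\ket{\psi}_{SA}$ is invariant, $\ketbra{\psi}{\psi}_{SA}$ commutes with every $U_S(g)\otimes\I_A$, so by Lemma~\ref{supp:lem:LemSchurConsequence} it decomposes as
\begin{align}
    \ketbra{\psi}{\psi}_{SA}=\bigoplus_\lambda p_\lambda\,\frac{\I_{\cm_\lambda}}{d_\lambda}\otimes\rho_{\lambda,A},
\end{align}
where each $\rho_{\lambda,A}$ is a (normalized) state on $\cn_\lambda\otimes\ch_A$ and the $p_\lambda$ form a probability distribution.

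The key step is then a purity argument. A pure state is an extreme point of the state space and in particular rank one, so the block-diagonal form above forces all but one of the $p_\lambda$ to vanish; say only $p_{\lambda_0}>0$, giving $\ketbra{\psi}{\psi}_{SA}=\frac{\I_{\cm_{\lambda_0}}}{d_{\lambda_0}}\otimes\rho_{\lambda_0,A}$. But the left-hand side has rank one, while the right-hand side has rank at least $d_{\lambda_0}=\dim\cm_{\lambda_0}$ (the maximally mixed factor $\I_{\cm_{\lambda_0}}/d_{\lambda_0}$ already contributes this rank). Hence we must have $d_{\lambda_0}=1$, i.e. $\lambda_0$ labels a one-dimensional irreducible representation of $G$. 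On such a subspace $\cm_{\lambda_0}$ is spanned by a single vector $\ket{e_{\lambda_0}}$ with $U_{\lambda_0}(g)=e^{i\theta_{\lambda_0}(g)}$.

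Finally I would trace out $A$ to recover the claim about $\rho_S$. Since $\ketbra{\psi}{\psi}_{SA}=\ketbra{e_{\lambda_0}}{e_{\lambda_0}}_{\cm_{\lambda_0}}\otimes\rho_{\lambda_0,A}$ with $\rho_{\lambda_0,A}$ a state on $\cn_{\lambda_0}\otimes\ch_A$, taking the partial trace over $\ch_A$ yields
\begin{align}
    \rho_S=\ketbra{e_{\lambda_0}}{e_{\lambda_0}}_{\cm_{\lambda_0}}\otimes\rho_{\cn_{\lambda_0}^S},
\end{align}
where $\rho_{\cn_{\lambda_0}^S}=\Tr_A(\rho_{\lambda_0,A})$ is a state on the multiplicity space $\cn_{\lambda_0}$. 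This is precisely the asserted form, with support on the single one-dimensional irrep $\lambda=\lambda_0$. The main obstacle is making the rank comparison fully rigorous when $\cn_\lambda$ or $\ch_A$ is infinite-dimensional: there the maximally mixed factor $\I_{\cm_{\lambda_0}}/d_{\lambda_0}$ still has finite rank $d_{\lambda_0}$ because irreducible representations of a compact group are finite-dimensional (Corollary~5.8 of~\cite{robert}), so the rank-one constraint on the pure state again pins down $d_{\lambda_0}=1$; one only needs that the tensor factor $\rho_{\lambda_0,A}$ cannot reduce the rank of a tensor product below that of either factor, which holds in general.
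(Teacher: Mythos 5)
Your proposal is correct and follows essentially the same route as the paper's proof: both apply the Schur-lemma consequence (Lemma~\ref{supp:lem:LemSchurConsequence}) to the invariant state on $SA$, treating $\cn_\lambda\otimes\ch_A$ as the multiplicity space for $U_S(g)\otimes\I_A$, then use purity to force a single block with $d_\lambda=1$, and finally trace out $A$. Your explicit rank-counting argument (rank of a tensor product is the product of the ranks, and irreps of compact groups are finite-dimensional) merely spells out the step the paper states tersely as ``purity of $\rho_{SA}$ implies~\ldots'', so no substantive difference remains.
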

\begin{proof}
    A generic invariant state $\rho_S$ has the form
    \begin{align}
        \rho_S = \bigoplus_\lambda p_\lambda \frac{\I_{\cm_\lambda^S}}{d_\lambda} \otimes \rho_{\cn_\lambda^S}.
    \end{align}
        The invariant operators on $\ch_S \otimes \ch_A$ under $U_S(g) \otimes \I_A$ are of the form
    \begin{align}
     \bigoplus_\lambda \I_{\cm_\lambda^S}  \otimes A_{\cn_\lambda^S, \ch_A}.
    \end{align}

    A generic invariant extension $\rho_{SA}$ of $\rho_S$ therefore has the form
    \begin{align}
         \rho_{SA} = \bigoplus_\lambda p_\lambda \frac{\I_{\cm_\lambda^S}}{d_{\cm_\lambda^S}} \otimes \rho_{\cn_\lambda^S, \ch_A} ,
    \end{align}
    where
    \begin{align}
        \Tr_{\ch_A}( \rho_{\cn_\lambda^S, \ch_A}) =  \rho_{\cn_\lambda^S}.
    \end{align}
    Purity of $\rho_{SA} $ implies that
    \begin{align}
        \rho_{SA} =  \ketbra{e_\lambda}{e_\lambda}_{\cm_\lambda^S} \otimes \ketbra{\psi_\lambda}{\psi_\lambda}_{\cn_\lambda^S, \ch_A},
    \end{align}
    where $\lambda$ is a one-dimensional representation: $ \frac{\I_{\cm_\lambda^S}}{d_{\cm_\lambda^S}} = \ketbra{e_\lambda}{e_\lambda}_{\cm_\lambda^S} $.

    This implies that $\rho_S = \ketbra{e_\lambda}{e_\lambda}_{\cm_\lambda} \otimes \rho_{\cn_\lambda^S}$, where $\lambda$ is a one-dimensional irreducible representation.
    \end{proof}

\begin{lemma}\label{lem:completeinvgeneral}
The following conditions are all equivalent, and can be used to define what it means that a quantum state $\rho_S$ is \textbf{completely invariant} under  a representation $U(g)$, $g \in G$:
\begin{itemize}
    \item[(i)] \Cref{eq:invariant_extension} holds for some purification $\rho_{SA}$ of $\rho_S$;
    \item[(ii)] \Cref{eq:invariant_extension} holds for all purifications $\rho_{SA}$ of $\rho_S$;
    \item[(iii)] \Cref{eq:invariant_extension} holds for all extensions $\rho_{SA}$ of $\rho_S$.
\end{itemize}
Moreover, all three are equivalent to 
\begin{itemize}
    \item[(iv)]
    $\rho_S$ has full support on a single one-dimensional irreducible representation $\lambda$ of $G$.  
\end{itemize}
\end{lemma}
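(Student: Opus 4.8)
The plan is to follow the same skeleton as the proof of \Cref{methods:lem:LemCompleteInvariance}, now drawing on the compact-group result \Cref{lem:invariant_purification} in place of the explicit $S_N$ computation. The implications $(iii)\Rightarrow(ii)\Rightarrow(i)$ are immediate: every purification is a particular extension, so a statement about all extensions specializes to all purifications, and any single purification is covered by a statement about all of them. Hence the entire content lies in establishing $(i)\Rightarrow(iii)$.

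First I would assume that \Cref{eq:invariant_extension} holds for some purification $\rho_{SA}=\ketbra{\psi}{\psi}_{SA}$ of $\rho_S$. By \Cref{lem:invariant_purification}, the existence of an invariant purification forces $\rho_S$ to be supported on a single \emph{one-dimensional} irreducible representation $\lambda$, so that $\rho_S=\ketbra{e_\lambda}{e_\lambda}_{\cm_\lambda^S}\otimes\rho_{\cn_\lambda^S}$ with $U_\lambda(g)=e^{i\theta_\lambda(g)}$ a pure phase. This is the conceptual heart of the argument: it is precisely here that the one-dimensionality of the supporting irrep becomes essential, and everything that follows is a transfer step converting invariance of one purification into invariance of every extension.

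Next I would fix an arbitrary extension $\rho_{SA}$ of $\rho_S$ and let $P_\lambda$ denote the orthogonal projector of $\ch_S$ onto the block $\cm_\lambda^S\otimes\cn_\lambda^S$. Since $\rho_S$ has full support in this block, $1=\Tr(P_\lambda\rho_S)=\Tr\big[(P_\lambda\otimes\I_A)\rho_{SA}\big]$, and because $P_\lambda\otimes\I_A$ is a projector while $\rho_{SA}$ is a state, this forces $(P_\lambda\otimes\I_A)\rho_{SA}(P_\lambda\otimes\I_A)=\rho_{SA}$; that is, any extension is automatically supported on the $\lambda$-block of $S$ tensored with all of $A$. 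Using that $\lambda$ is one-dimensional, so that $U_S(g)P_\lambda=e^{i\theta_\lambda(g)}P_\lambda$, I would then insert the projectors and factor out the phases,
\begin{align*}
&(U_S(g)\otimes\I_A)\rho_{SA}(U_S^\dagger(g)\otimes\I_A)\\
&=(U_S(g)\otimes\I_A)(P_\lambda\otimes\I_A)\rho_{SA}(P_\lambda\otimes\I_A)(U_S^\dagger(g)\otimes\I_A)\\
&=e^{i\theta_\lambda(g)}e^{-i\theta_\lambda(g)}(P_\lambda\otimes\I_A)\rho_{SA}(P_\lambda\otimes\I_A)=\rho_{SA},
\end{align*}
so that $\rho_{SA}$ satisfies \Cref{eq:invariant_extension}. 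As the extension was arbitrary, this proves $(i)\Rightarrow(iii)$ and closes the cycle of equivalences.

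The main obstacle is entirely absorbed into \Cref{lem:invariant_purification}, where purity of the invariant extension is used to collapse the support onto a one-dimensional irrep; once that is in hand, the projector-and-phase manipulation above is routine and mirrors the $S_N$ case verbatim, with the sign/trivial dichotomy replaced by the single phase $e^{i\theta_\lambda(g)}$. The one point to keep in mind while writing it up is that the support argument must be run for a \emph{general} extension (not merely a purification), which is exactly what the trace identity $\Tr(P_\lambda\rho_S)=1$ delivers.
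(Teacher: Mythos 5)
Your proof is correct and follows essentially the same route as the paper's: both reduce $(i)\Rightarrow(iii)$ to \Cref{lem:invariant_purification}, which forces $\rho_S$ onto a single one-dimensional irrep, and then observe that any extension is automatically supported on that block, where $U_S(g)\otimes\I_A$ acts as a global phase. The only difference is stylistic: you make the support-transfer step explicit via the projector/trace argument (mirroring the paper's proof of \Cref{methods:lem:LemCompleteInvariance}), whereas the paper's appendix proof simply writes down the resulting form of the arbitrary extension directly.
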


\begin{proof}
    Clearly, $(iii)\Rightarrow(ii)\Rightarrow(i)$, and by \Cref{lem:invariant_purification} $(i)$  implies (iv), i.e. that  $\rho_S = \ketbra{e_\lambda}{e_\lambda}_{\cm_\lambda} \otimes \rho_{\cn_\lambda^S}$, where $\lambda$ is a one-dimensional irreducible representation.

  Let us finally show that (iv) $\Rightarrow$ (iii). An arbitrary extension $\rho_{SA}$ of $\rho_S$ is therefore of the form
  \begin{align}
      \rho_{SA} = \ketbra{e_\lambda}{e_\lambda}_{\cm_\lambda} \otimes \rho_{\cn_\lambda^S, \ch_A},
  \end{align}
  where $\Tr_{\ch_A}(\rho_{\cn_\lambda^S, \ch_A}) = \rho_{\cn_\lambda^S}$. The state $\rho_{SA}$ is invariant under $U_S(g) \otimes \I_A$ by construction, since the action of $U_S(g) \otimes \I_A$ on $\cm_\lambda^S \otimes \cn_\lambda^S \otimes \ch_A$ is
  \begin{align}
      U_\lambda(g) \otimes \I_{\cn_\lambda^S } \otimes \I_{\ch_A } = e^{i \theta(g)} \ketbra{e_\lambda}{e_\lambda}_{\cm_\lambda} \otimes \I_{\cn_\lambda^S } \otimes \I_{\ch_A },
  \end{align}
  since $\cm_\lambda^S$ is one-dimensional.
\end{proof}

This implies the following theorem:
\begin{theorem}
\label{TheoremCompleteInvariancegeneral}
A quantum state $\rho_S$ is completely invariant under some compact group $G$ with representation $U_S(g)$, $g\in G$, if and only if it is fully supported on a subspace transforming under a one-dimensional representation of $G$.
\end{theorem}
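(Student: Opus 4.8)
The plan is to read this theorem as the precise packaging of the two preceding lemmas, so that the proof reduces to establishing both directions of the equivalence together with a translation between the abstract phrase ``fully supported on a subspace transforming under a one-dimensional representation'' and the concrete normal form $\rho_S=\ketbra{e_\lambda}{e_\lambda}_{\cm_\lambda}\otimes\rho_{\cn_\lambda^S}$ produced by the lemmas. I would fix, once and for all, the meaning of \emph{completely invariant} to be condition $(i)$ of \Cref{lem:completeinvgeneral}, namely the existence of one invariant purification; by that lemma this is equivalent to $(ii)$ and $(iii)$, so nothing is lost in choosing it.

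For the forward direction, I would assume $\rho_S$ is completely invariant. Then it admits an invariant purification $\rho_{SA}=\ketbra{\psi}{\psi}_{SA}$, and as noted in the excerpt the existence of an invariant extension already forces $\rho_S$ itself to be invariant, so \Cref{lem:invariant_purification} applies verbatim and yields $\rho_S=\ketbra{e_\lambda}{e_\lambda}_{\cm_\lambda}\otimes\rho_{\cn_\lambda^S}$ for a one-dimensional irreducible representation $\lambda$. It then remains to observe that the support of $\rho_S$ lies inside $\ket{e_\lambda}\otimes\cn_\lambda^S\subseteq\cm_\lambda\otimes\cn_\lambda$, on which $U_S(g)$ acts as $U_\lambda(g)\otimes\I_{\cn_\lambda}=e^{i\theta_\lambda(g)}\I$; hence this support is a subspace transforming under the one-dimensional representation $\lambda$, as the theorem claims.

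For the converse, I would start from a state $\rho_S$ fully supported on a subspace $\mathcal{K}$ on which $U_S(g)$ acts as a scalar $e^{i\theta(g)}$. Using the isotypic decomposition of \Cref{supp:lem:LemSchurConsequence}, any such $\mathcal{K}$ must sit inside a component $\cm_\lambda\otimes\cn_\lambda$ with $d_\lambda=1$, so $\rho_S$ again takes the normal form $\ketbra{e_\lambda}{e_\lambda}_{\cm_\lambda}\otimes\rho_{\cn_\lambda^S}$. I would then re-run the computation already contained in the proof of $(i)\Rightarrow(iii)$ in \Cref{lem:completeinvgeneral}: because $\rho_S$ is supported on $\mathcal{K}$, a trace argument ($1=\Tr(P_{\mathcal{K}}\rho_S)=\Tr((P_{\mathcal{K}}\otimes\I_A)\rho_{SA})$) shows every extension is supported on $\mathcal{K}\otimes\ch_A$ and hence factorizes as $\ketbra{e_\lambda}{e_\lambda}_{\cm_\lambda}\otimes\rho_{\cn_\lambda^S,\ch_A}$; since $U_S(g)\otimes\I_A$ acts on this support as the global phase $e^{i\theta_\lambda(g)}$ times the identity, conjugation leaves every extension invariant. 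Thus $(iii)$ holds, and in particular $\rho_S$ is completely invariant.

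The genuinely load-bearing work is already done inside \Cref{lem:invariant_purification} --- specifically the step where purity of $\rho_{SA}$ collapses the block $p_\lambda\,\I_{\cm_\lambda}/d_\lambda$ into a rank-one projector and thereby forces $d_\lambda=1$. At the level of the theorem, therefore, the main ``obstacle'' is not an estimate but a bookkeeping one: making sure that the phrase ``transforming under a one-dimensional representation'' is matched \emph{exactly} to the condition $d_\lambda=1$ via \Cref{supp:lem:LemSchurConsequence}, and verifying the support-and-factorization step for \emph{arbitrary} (not merely pure) extensions in the converse, which is what upgrades invariance of a single purification to invariance of all extensions.
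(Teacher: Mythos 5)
Your proposal is correct and follows essentially the same route as the paper: the paper's proof is precisely the observation that \Cref{lem:invariant_purification,lem:completeinvgeneral} imply the theorem, and your write-up unpacks exactly that combination --- the forward direction via the invariant-purification lemma (including the needed remark that an invariant extension forces weak invariance of $\rho_S$), and the converse by re-running the $(i)\Rightarrow(iii)$ support-and-factorization argument for arbitrary extensions. Your added explicit trace argument for why extensions inherit the support is the same step the paper spells out in its Methods-section version (Lemma~\ref{methods:lem:LemCompleteInvariance}) and leaves implicit in the appendix, so nothing differs in substance.
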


Note that an invariant state $\rho_S$ with support on multiple irreducible subspaces, including subspaces with irreducible representations of dimension strictly greater than 1, may have \textit{some} extensions $\rho_{SA}$ which are invariant under the conjugate action of $U_S(g) \otimes \I_A$. But \Cref{lem:completeinvgeneral} tells us that this extension cannot be pure, and moreover that this cannot be the case for all extensions. 

Since complete invariance entails that a state $\rho_S$ must have full support on a subspace transforming according to a one-dimensional representation of $G$, we may wonder if standard invariance of a state $\rho_S$ can be motivated by appealing to an environment also. The following theorem shows that this is indeed the case.

\begin{theorem}\label{thm:invariance_purification}
A quantum state $\rho_S$ is (weakly) invariant under some compact group $G$, i.e.\ $\rho_S = U(g) \rho_S U_S^\dagger(g)$, if and only if there exists a purification $\ket{\psi}_{SA}$ of $\rho_S$ on some ancilla $\ch_A$ with some representation $U_A(g)$ such that $\big(U_S(g) \otimes U_A(g)\big) \ketbra{\psi}{\psi}_{SA} \big(U_S(g) \otimes U_A(g)\big)^\dagger = \ketbra{\psi}{\psi}_{SA}$. 
\end{theorem}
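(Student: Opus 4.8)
The reverse direction is immediate. Suppose such an invariant purification $\ket{\psi}_{SA}$ and ancilla representation $U_A(g)$ exist. Tracing out $A$ and using that the partial trace is unchanged by a unitary acting only on the traced-out factor, $\Tr_A\big((\I_S\otimes V)X(\I_S\otimes V^\dagger)\big)=\Tr_A(X)$, I obtain
\begin{align*}
\rho_S &= \Tr_A\big((U_S(g)\otimes U_A(g))\ketbra{\psi}{\psi}_{SA}(U_S(g)\otimes U_A(g))^\dagger\big)\\
&= U_S(g)\,\Tr_A\big(\ketbra{\psi}{\psi}_{SA}\big)\,U_S^\dagger(g) = U_S(g)\rho_S U_S^\dagger(g),
\end{align*}
so $\rho_S$ is invariant. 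The content of the theorem is therefore the forward direction: every invariant state admits a jointly invariant purification.

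For the forward direction, the plan is to purify $\rho_S$ block by block, placing the complex-conjugate representation on the ancilla. By \Cref{supp:lem:LemSchurConsequence} an invariant state decomposes as $\rho_S=\bigoplus_\lambda p_\lambda \tfrac{\I_{\cm_\lambda}}{d_\lambda}\otimes\rho_{\cn_\lambda}$, with $U_S(g)=\bigoplus_\lambda U_\lambda(g)\otimes\I_{\cn_\lambda}$. I purify the two factors of each block separately. For the maximally mixed factor $\I_{\cm_\lambda}/d_\lambda$ I introduce a copy $\widetilde{\cm}_\lambda$ of the irrep space and take the maximally entangled vector $\ket{\Phi_\lambda}=\tfrac{1}{\sqrt{d_\lambda}}\sum_i\ket{i}_{\cm_\lambda}\ket{i}_{\widetilde{\cm}_\lambda}$, whose marginal on $\cm_\lambda$ is $\I_{\cm_\lambda}/d_\lambda$. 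For the multiplicity factor $\rho_{\cn_\lambda}$ I take any purification $\ket{\chi_\lambda}\in\cn_\lambda\otimes\widetilde{\cn}_\lambda$ on a fresh ancilla $\widetilde{\cn}_\lambda$.

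The key representation-theoretic input is the standard identity $(U_\lambda(g)\otimes\overline{U_\lambda}(g))\ket{\Phi_\lambda}=\ket{\Phi_\lambda}$, where $\overline{U_\lambda}$ is the complex-conjugate (contragredient) representation; this follows from $\sum_i (U_\lambda)_{ji}\,\overline{(U_\lambda)_{ki}}=(U_\lambda U_\lambda^\dagger)_{jk}=\delta_{jk}$. I therefore set the ancilla to $\ch_A=\bigoplus_\lambda \widetilde{\cm}_\lambda\otimes\widetilde{\cn}_\lambda$ carrying $U_A(g)=\bigoplus_\lambda \overline{U_\lambda}(g)\otimes\I_{\widetilde{\cn}_\lambda}$ (trivial on the multiplicity ancillas), and define
\begin{align*}
\ket{\psi}_{SA}=\sum_\lambda \sqrt{p_\lambda}\,\ket{\Phi_\lambda}\otimes\ket{\chi_\lambda}.
\end{align*}
Since the spaces $\cm_\lambda\otimes\cn_\lambda$ sit in orthogonal summands of $\ch_S$ (and likewise the $\widetilde{\cm}_\lambda\otimes\widetilde{\cn}_\lambda$ on $A$), the cross terms vanish under $\Tr_A$, giving $\Tr_A\ketbra{\psi}{\psi}_{SA}=\sum_\lambda p_\lambda \tfrac{\I_{\cm_\lambda}}{d_\lambda}\otimes\rho_{\cn_\lambda}=\rho_S$, so $\ket{\psi}_{SA}$ is a genuine purification. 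Applying $U_S(g)\otimes U_A(g)$ fixes each $\ket{\chi_\lambda}$ (identity action) and each $\ket{\Phi_\lambda}$ (by the conjugate-representation identity), hence $\ket{\psi}_{SA}$ is invariant, as claimed.

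I expect the only genuinely delicate point to be the bookkeeping: confirming that the direct-sum structure of $\ch_S$ and $\ch_A$ makes the off-diagonal ($\lambda\neq\mu$) terms in the partial trace vanish and that the normalization by $\sqrt{p_\lambda}$ reproduces $\rho_S$, together with checking that $\overline{U_\lambda}$ is precisely the representation on the ancilla that annihilates the group action on each maximally entangled block. Compactness of $G$ enters only through \Cref{supp:lem:LemSchurConsequence}, which guarantees finite-dimensionality of each $\cm_\lambda$ and hence that $\ket{\Phi_\lambda}$ is a well-defined normalizable vector.
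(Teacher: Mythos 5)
Your proof is correct and follows essentially the same route as the paper's: both directions are handled identically, and for the forward direction the paper likewise invokes Lemma~\ref{supp:lem:LemSchurConsequence}, places the conjugate representation $\bar U_\lambda$ on an ancilla copy of each irrep space, and builds the invariant purification as $\bigoplus_\lambda \sqrt{p_\lambda}\,\ket{\Phi_\lambda}\otimes\ket{\chi_\lambda}$ with a maximally entangled vector on $\cm_\lambda\otimes\bar\cm_\lambda$ and an arbitrary purification of $\rho_{\cn_\lambda}$. If anything, your version is slightly more explicit than the paper's (you verify the identity $(U_\lambda\otimes\bar U_\lambda)\ket{\Phi_\lambda}=\ket{\Phi_\lambda}$, the vanishing of cross terms, and include the $1/\sqrt{d_\lambda}$ normalization that the paper leaves implicit), but the argument is the same.
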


\begin{proof}
    We first prove the $\Leftarrow$ direction. Consider a purification  $\ket{\psi}_{SA}$ of $\rho_S$ such that $\big(U_S(g) \otimes U_A(g)\big) \ketbra{\psi}{\psi}_{SA} \big(U_S(g) \otimes U_A(g)\big)^\dagger = \ketbra{\psi}{\psi}_{SA}$. Then:
    \begin{align}
        \rho_S &= \Tr_A (\ketbra{\psi}{\psi}_{SA}) \nonumber \\&= \Tr_A (\big(U_S(g) \otimes U_A(g)\big) \ketbra{\psi}{\psi}_{SA} \big(U_S(g) \otimes U_A(g)\big)^\dagger) \nonumber \\
        &= U_S(g) \Tr_A (\ketbra{\psi}{\psi}_{SA}) U_S^\dagger(g) = U_S(g)\rho_S U_S^\dagger(g).
    \end{align}

        For the $\Rightarrow$ direction we assume there exists a (weakly) invariant state $\rho_S =U(g) \rho_S U_S^\dagger(g)$. Then by~\Cref{supp:lem:LemSchurConsequence},
        \begin{align}
            \rho_S = \bigoplus_\lambda p_\lambda \frac{\I_{\cm_\lambda^S}}{d_\lambda} \otimes \rho_{\cn_\lambda^S}\,\,.
        \end{align}
        Let $\ch_A \simeq \bar \ch_S$ carry the conjugate representation $U_A(g) \simeq \bar U_S(g)$. Then 
        \begin{align}
            \ch_A \simeq \bar \ch_S \simeq \bigoplus_\lambda \bar \cn_\lambda^A \otimes \bar \cm_\lambda^A,
        \end{align}
        where $\cn_\lambda^A$ carries the representation $\bar U_\lambda$.
        
        Now define the state
        \begin{align}
            \ket{\psi}_{SA} = \bigoplus_\lambda  \sqrt{p_\lambda} \sum_{i_\lambda} \ket{i_\lambda}_{\cm_\lambda^S} \ket{i_\lambda}_{\bar \cm_\lambda^A} \otimes \ket{\psi^\lambda}_{\cn_\lambda^S, \bar \cn_\lambda^A}
        \end{align}
        with $\ket{\psi^\lambda}_{\cn_\lambda^S, \bar \cm_\lambda^A}$ a purification of $\rho_{\cn_\lambda^S}$, which always exists since $\dim(\cn_\lambda^S) = \dim(\bar \cn_\lambda^A)$. 

        This is a purification of $\rho_S$ which is invariant under $U_S(g) \otimes U_A(g)$ by construction.
\end{proof}

\section*{Supplementary Note 4 -- Structural details on quantum permutations}

\subsection{Applicability of our formalism in second quantization: particle permutations}
\label{supp:sec:StandPermSecQuant}

While the action of standard permutations in first quantization is given by 
\begin{equation}
\label{methods:eq:PlacePermAction}
U(\pi)|x_1,\ldots,x_N\rangle=|x_{\pi^{-1}(1)},\ldots,x_{\pi^{-1}(N)}\rangle,
\end{equation}
usually referred to as a \textit{label} or \textit{place} permutation, it can generally not be defined consistently this way on Fock states
\begin{align}
\label{methods:eq:FockState}
 |k_1,...,k_N\rangle_\mathcal{F} := a_{k_1}^\dagger...a_{k_N}^\dagger|0\rangle_\mathcal{F}
\end{align}
and the associated creation operators in second quantization. To see this, consider the parastatistical model from Example \ref{supp:ex:VolkovPara}: given the state $\left(a^\dagger_{2}a^\dagger_{1}a^\dagger_{3}+  a^\dagger_{2}a^\dagger_{3}a^\dagger_{1} \right)|0\rangle_\mathcal{F}$ then exchanging the positions of the first two creation operators gives $\left(a^\dagger_{1}a^\dagger_{2}a^\dagger_{3}+  a^\dagger_{3}a^\dagger_{2}a^\dagger_{1} \right)|0\rangle_\mathcal{F} = 0$, entailing that place permutations are not unitary.  Instead, standard permutations in second quantization, which are referred to as \textit{particle} permutations, act by permuting the \textit{modes}. For example, the particle permutation corresponding to exchanging modes 1 and 2 from above can be represented by a unitary operator $\bar U((12))$ that effectively exchanges the creation operators $a_1^\dagger$ and $a_2^\dagger$ and transforms $\left(a^\dagger_{2}a^\dagger_{1}a^\dagger_{3}+  a^\dagger_{2}a^\dagger_{3}a^\dagger_{1} \right)|0\rangle_\mathcal{F}$ into $\left(a^\dagger_{1}a^\dagger_{2}a^\dagger_{3}+  a^\dagger_{1}a^\dagger_{3}a^\dagger_{2} \right)|0\rangle_\mathcal{F} \neq 0$.

Following Stolt and Taylor~\cite{Taylor3} we define particle permutations by considering tuples $K=(k_1,\ldots,k_N)$ of modes, say momenta, that are pairwise distinct, $k_i\neq k_j$ for $i\neq j$, and we choose lexicographical order, assuming that $k_1<k_2<\ldots<k_N$. This is simply a convention to ensure we do not consider tuples $K\neq K'$ that have the same set of entries, but in different order. For our purposes, we are only concerned with the case of a single particle occupation per mode, as this suffices to derive all our results. We thus refer to \cite{Taylor3} for the construction with different occupation numbers of the modes. 

Now, for every $K$, we have a subspace $\mathcal{H}_K$ that is spanned by the state vectors $|k_{\pi(1)},\ldots,k_{\pi(N)}\rangle_{\mathcal{F}}$, for all permutations $\pi\in S_N$. This subspace carries a representation $\pi\mapsto \bar U_K(\pi)$ of $S_N$, which transforms creation operators $a_{k_i}^\dagger$ into $\bar U_K(\pi) a_{k_i}^\dagger\bar U_K(\pi)^\dagger=a_{k_{\pi(i)}}^\dagger$. The space $\mathcal{H}_K$ is of dimension less than or equal to $N!$, which in the case of Bosons or Fermions is one-dimensional.

The $\bar U_K(\pi)$ define reducible representations of $S_N$ and we get a decomposition
\begin{align}
\mathcal{H}_K=\bigoplus_\lambda \mathcal{H}_{\lambda,K},
\end{align}
where $\mathcal{H}_{\lambda,K}$ carries the $\lambda$-irrep of $S_N$, and the sum is over all admissible Young diagrams. That is, as shown in~\cite{Ohnuki}, each irrep $\lambda$ corresponding to an admissible Young diagram appears exactly once in this decomposition.

We can now consider an external system described by some Hilbert space $\mathcal{N}$, write
\begin{equation}
   \mathcal{H}_K\otimes \mathcal{N}=\bigoplus_\lambda \mathcal{H}_{\lambda,K}\otimes \mathcal{N}.
   \label{eqMultSpace}
\end{equation}
Alternatively, the particles could carry some internal degree of freedom, such as spin, say with total spin $s$. In this case, we would expect that the creation operators $a_k^\dagger$ are replaced by creation operators $a_{k,\sigma}^\dagger$, where $\sigma$ can take several different values (e.g.\ $\sigma\in\{-s,-s+1,\ldots,s\}$). Consider the special case that all particles are prepared in the same spin $\sigma$, as is the case in the experimental proposal by Roos et al.~\cite{Roos}. Then, for every fixed $\sigma$, the operators $\{a^\dagger_{k,\sigma}\}_k$ satisfy relations~(\ref{supp:eq:TrilRel}). Since states for different $\sigma$ are orthogonal, we obtain a particle Hilbert space that is also of the form~(\ref{eqMultSpace}), where $\mathcal{N}\simeq\mathbb{C}^{2s+1}$. More generally, for every choice of spins $\sigma_1,\sigma_2,\ldots,\sigma_N$, there will be a subspace spanned by vectors $|(k_{\pi(1)},\sigma_{\pi(1)}),\,\ldots,(k_{\pi(N)},\sigma_{\pi(N)}\rangle_{\mathcal{F}}$ which individually behaves like the paraparticle subspace without internal degrees of freedom described above. This also gives us a Hilbert space of the form~(\ref{eqMultSpace}), where now $\mathcal{N}\simeq \mathbb{C}^{(2s+1)^N}$. In all cases, we obtain a multiplicity space as in Eq.~(\ref{eqMultSpace16}) and Eq.~(\ref{eqMultSpace17}) that allows us to construct quantum permutations by conditioning on observables supported on it.

\subsection{More general quantum permutation groups}
\label{supp:sec:moreGenQP}

In principle, we could have started from a more general version in the definition of quantum permutations: instead of conditioning only the argument $\pi_{\lambda,j}$ of the representation map $\pi \mapsto e^{i\theta(\pi)}U(\pi)$ on $\lambda, j$, we could also condition the map itself on $\lambda,j$ as in Eq.~(\ref{methods:eq:MoreGeneral}) in the main text: $\pi \mapsto W_{\lambda,j}(\pi):=e^{i\theta_{\lambda,j}(\pi)}U(\pi)$. To do so, we will require that the \textit{unconditional} special case should still reproduce the standard permutations: applying the same permutation $\pi$ in every branch must projectively act like $U(\pi)$; that is, $V(\pi,\pi,\ldots,\pi)\rho V(\pi,\pi,\ldots,\pi)^\dagger = U(\pi)\rho U(\pi)^\dagger$.

As in the proof of Lemma \ref{methods:lem:proj_SNQ_rep} in the main text, $V(\vec{\sigma})V(\vec{\pi})=\omega(\vec{\sigma},\vec{\pi})V(\vec{\sigma} \vec{\pi})$ then implies
\begin{align}
\omega(\vec{\sigma},\vec{\pi})=e^{i\theta_{\lambda,j}(\sigma_{\lambda,j})}e^{i\theta_{\lambda,j}(\pi_{\lambda,j})}e^{-i\theta_{\lambda,j}(\sigma_{\lambda,j}\pi_{\lambda,j})}
\end{align}
for all $\lambda, j$. Assuming $Q>1$ without loss of generality, suppose that $\vec{\sigma}'$ and $\vec{\pi}'$ are elements of $S_N^Q$ that agree with $\vec{\sigma}$ and $\vec{\pi}$ on at least one entry $(\lambda,j)$, i.e.\ there exists some $(\lambda_0,j_0)$ such that $\sigma_{\lambda_0,j_0}=\sigma'_{\lambda_0,j_0}$ and $\pi_{\lambda_0,j_0}=\pi'_{\lambda_0,j_0}$. Then, we have
\begin{align}
\omega(\vec{\sigma},\vec{\pi})&=e^{i\theta_{\lambda_0,j_0}(\sigma_{\lambda_0,j_0})}e^{i\theta_{\lambda_0,j_0}(\pi_{\lambda_0,j_0})}e^{-i\theta_{\lambda_0,j_0}(\sigma_{\lambda_0,j_0}\pi_{\lambda_0,j_0})}\nonumber \\
&=e^{i\theta_{\lambda_0,j_0}(\sigma'_{\lambda_0,j_0})}e^{i\theta_{\lambda_0,j_0}(\pi'_{\lambda_0,j_0})}e^{-i\theta_{\lambda_0,j_0}(\sigma'_{\lambda_0,j_0}\pi'_{\lambda_0,j_0})} \nonumber \\
&=\omega(\vec{\sigma}',\vec{\pi}').
\end{align}
Let $\vec{\sigma}'',\vec{\pi}''$ be arbitrary elements of $S_N^Q$. Then we can always find some pair $\vec{\sigma}',\vec{\pi}'$ that agree with the pair $\vec{\sigma},\vec{\pi}$ in at least one entry, say $\lambda_0,j_0$, and that also agrees with the pair $\vec{\sigma}'',\vec{\pi}''$ in at least one entry, say $\lambda_1,j_1$. Thus,
\begin{align}
\omega(\vec{\sigma}',\vec{\pi}')&=e^{i\theta_{\lambda_1,j_1}(\sigma'_{\lambda_1,j_1})}e^{i\theta_{\lambda_1,j_1}(\pi'_{\lambda_1,j_1})}e^{-i\theta_{\lambda_1,j_1}(\sigma'_{\lambda_1,j_1}\pi_{\lambda_1,j_1})}\nonumber \\
&=e^{i\theta_{\lambda_1,j_1}(\sigma''_{\lambda_1,j_1})}e^{i\theta_{\lambda_1,j_1}(\pi''_{\lambda_1,j_1})}e^{-i\theta_{\lambda_1,j_1}(\sigma''_{\lambda_1,j_1}\pi''_{\lambda_1,j_1})} \nonumber \\
&=\omega(\vec{\sigma}'',\vec{\pi}''),
\end{align}
so $\omega(\vec{\sigma},\vec{\pi})=\omega(\vec{\sigma}'',\vec{\pi}'')=\omega$ is a constant that does not depend on $\vec{\sigma}$ or $\vec{\pi}$. The special case $\pi_{\lambda,j}=\I$ shows that $\omega=e^{i\theta}$ for $\theta:=\theta(\I)$, and hence $\vec{\pi}\mapsto e^{-i\theta}V(\vec{\pi})$ is a linear unitary representation. Furthermore, $\pi\mapsto e^{-i\theta}e^{i\theta_{\lambda,j}(\pi)}$ is a linear one-dimensional representation of $S_N$, i.e.\ either the trivial or the sign representation, depending on $\lambda,j$.

Now, even though this more general definition still qualifies as a projective representation of $S_N$, its action does not generally reduce to that of standard permutations  in the unconditional special case $\vec{\pi}=(\pi,\pi,\ldots,\pi)$. To see this, assume we have such a general map $\vec{\pi} \mapsto V(\vec{\pi})$. Then, setting $e^{i\theta}=1$, there exist pairs $\lambda_1,j_1$ and $\lambda_2,j_2$, where at least $\lambda_1 \neq \lambda_2$ or $j_1 \neq j_2$, such that $W_{\lambda_1,j_1}(\pi_{\lambda_1,j_1})=U(\pi_{\lambda_1,j_1})$ and $W_{\lambda_2,j_2}(\pi_{\lambda_2,j_2})=\sgn(\pi_{\lambda_2,j_2})U(\pi_{\lambda_2,j_2})$. For any $\pi \in S_N$, set $\vec{\pi}:=(\pi,...,\pi)$ and define
\begin{align}
    A(\pi):=U(\pi)^\dagger V(\vec{\pi})=V(\vec{\pi})U(\pi)^\dagger =\sum_{\lambda,j}s_{\lambda,j}(\pi)\I_{\lambda}\otimes P_{\lambda,j}, 
\end{align}
where $s_{\lambda,j}(\pi)\in \{1,\sgn(\pi)\}$. Furthermore, note that, since $P_{\lambda,j}$ is a rank-1 projector, for every $\lambda,j$ there exists an element $|\psi_{\lambda,j}\rangle$ such that $\I_\lambda\otimes P_{\lambda,j}|\psi_{\lambda,j}\rangle=|\psi_{\lambda,j}\rangle$. This implies
\begin{align}
    A(\pi)(|\psi_{\lambda_1,j_1}\rangle+|\psi_{\lambda_2,j_2}\rangle)=|\psi_{\lambda_1,j_1}\rangle+\sgn(\pi)|\psi_{\lambda_2,j_2}\rangle, 
\end{align}
showing that $A(\pi)\neq e^{i\theta(\pi)}\I$ for any $\theta(\pi)$, as it causes relative phases, and hence $V(\vec{\pi}) \neq e^{i\theta(\pi)}U(\pi)$. Therefore, $V(\vec{\pi})$ does not  projectively act like $U(\pi)$.

\subsection{Proof of Theorem~\ref{methods:th:MainTheorem2} in the main text}
\label{supp:sec:proofTh3}

Assuming the Bosonic case $U(\pi)$ (the Fermionic case $\sgn(\pi)U(\pi)$ works in the same way), recall the form of the quantum permutations~of equation (\ref{methods:eq:QPermutation1}) in the main text. They are given with respect to a fixed basis for each multiplicity space $\mathcal{N}_\lambda$. In order to get rid of this, we define 
\begin{align}
\label{eqU_R_definition}
U^{(\mathbf{R}
)}(\vec{\pi}):= \left(\bigoplus_\mu \I_{\mathcal{M}_\lambda} \otimes R_\mu\right)U(\vec{\pi})\left(\bigoplus_\nu \I_{\mathcal{M}_\lambda} \otimes R^\dagger_\nu\right)=\sum_{\lambda,j} U_\lambda(\pi_{\lambda,j}) \otimes R_\lambda P_{\lambda,j}R^\dagger_\lambda
\end{align}
for arbitrary unitaries $R_\lambda\in\text{U}(n_\lambda)$ and any $\lambda$. With this definition we have $U(\vec{\pi})=U^{(\I)}(\vec{\pi})$, i.e. $R_\lambda=\I_{\mathcal{N}_\lambda}$.

Now, we have seen that the requirement $[U^{(\I)}(\vec{\pi})_S\otimes\I_A,\rho_{SA}]=0$ for all $\vec{\pi}\in S_N^Q$ implies
\begin{align}
\label{eqRho_SA_form}
\rho_{SA}=p_{\rm Bos}\, \rho_{{\rm Bos},A}\oplus\bigoplus_{j,\lambda\neq\lambda_{\rm Bos}} p_{\lambda,j} \frac{\I_{\lambda,j}}{d_\lambda} \otimes \rho_A^{(\lambda,j)}
=p_{\rm Bos}\, \rho_{{\rm Bos},A}\oplus\bigoplus_{\lambda\neq\lambda_{\rm Bos}} \frac{\I_{\mathcal{M}_\lambda}}{d_\lambda} \otimes \sum_j p_{\lambda,j}P_{\lambda,j} \otimes \rho_A^{(\lambda,j)},
\end{align}
where $d_{\lambda}$ is the dimension of $\mathcal{H}_{\lambda,j}$, and the state of system $\mathcal{N}_\lambda$ is diagonal in the basis associated to the projectors $P_{\lambda,j}$. Demanding invariance with respect to all possible quantum permutations without a fixed basis means that we need to impose the stronger condition $[U^{(\mathbf{R})}(\vec{\pi})_S\otimes\I_A,\rho_{SA}]=0$ for all $\vec{\pi}\in S_N^Q$ and all $R_\lambda$. Therefore, we require
\begin{align}
(U^{(\mathbf{R})}(\vec{\pi})_S\otimes\I_A)\rho_{SA}(U^{(\mathbf{R})}(\vec{\pi})_S\otimes\I_A)^\dagger=\rho_{SA},
\end{align}
which, due to Eq.~(\ref{eqU_R_definition}), is equivalent to
\begin{align}
[U^{(\I)}(\vec{\pi})_S\otimes\I_A, \underbrace{(\oplus_\mu \I_{\mathcal{M}_\mu} \otimes R_\mu\otimes \I_A)\rho_{SA}(\oplus_\nu \I_{\mathcal{M}_\nu} \otimes R^\dagger_\nu\otimes \I_A}_{=:\rho^{(\mathbf{R})}_{SA}}]=0,
\end{align}
so $\rho^{(\mathbf{R})}_{SA}$ must be of the same form as $\rho_{SA}$ in Eq.~(\ref{eqRho_SA_form}) for all $R_\lambda$. We obtain
\begin{align}
\rho^{(\mathbf{R})}_{SA}=p_{\rm Bos}\,(R_{\rm Bos}\otimes \I_A) \rho_{{\rm Bos},A}(R^\dagger_{\rm Bos}\otimes \I_A)\oplus\bigoplus_{\lambda\neq\lambda_{\rm Bos}} \frac{\I_{\mathcal{M}_\lambda}}{d_\lambda} \otimes \sum_j p_{\lambda,j}R_\lambda P_{\lambda,j}R_\lambda^\dagger \otimes \rho_A^{(\lambda,j)},
\end{align}
and see that this immediately implies
\begin{align}
R_\lambda\left(\sum_j p_{\lambda,j}P_{\lambda,j}\right)R_\lambda^\dagger=\sum_j p'_{\lambda,j}P_{\lambda,j}
\end{align}
for some $p'_{\lambda,j}$ after tracing out the ancillary system $A$. This means that $R_\lambda(\sum_j p_{\lambda,j}P_{\lambda,j})R_\lambda^\dagger$ must be diagonal with respect to the $P_{\lambda,j}$-basis for all unitary conjugations by $R_\lambda$, which is only the case when $p_{\lambda,j}$ does not depend on $j$ for any $\lambda\neq \lambda_{\rm Bos}$. Due to $\text{dim}(\mathcal{N}_\lambda)=n_\lambda$, we have $p_{\lambda,j}=p_\lambda/n_\lambda$, such that $p_{\rm Bos}+\sum_{\lambda\neq \rm Bos}p_\lambda=1$. Note that this is only possible if $n_\lambda<\infty$. This already proves that in the case that at least one of the $N_\lambda$ for $\lambda\neq\lambda_{\rm Bos}$  is infinite-dimensional, there is no invariant state $\rho_{SA}$.

Let us continue with the case that all $n_\lambda<\infty$ for $\lambda\neq\lambda_{\rm Bos}$. With this, let us also make the claim that basis-independent invariance implies that for any $\lambda \neq \lambda_{\rm Bos}$, the states $\rho_A^{(\lambda,j)}$ cannot depend on $j$ either. To show this, we first note that, since $P_{\lambda,j}$ are rank-1 projectors of a projective measurement, we can find an orthogonal basis $\{|j\rangle_\lambda\}_j$ of $\mathcal{N}_\lambda$ such that $P_{\lambda,j}=|j\rangle\langle j|_\lambda$. Then, given unitaries $R_\lambda$ with $|j^{(\mathbf{R})}\rangle_\lambda:=R_\lambda|j\rangle_\lambda$, we have
\begin{align}
\rho^{(\mathbf{R})}_{SA}=p_{\rm Bos}\,(R_{\rm Bos}\otimes \I_A) \rho_{{\rm Bos},A}(R^\dagger_{\rm Bos}\otimes \I_A)\oplus\bigoplus_{\lambda\neq\lambda_{\rm Bos}} p_\lambda\frac{\I_{\mathcal{M}_\lambda}}{d_\lambda n_\lambda} \otimes \sum_j |j^{(\mathbf{R})}\rangle \langle j^{(\mathbf{R})}|_\lambda \otimes \rho_A^{(\lambda,j)}.
\end{align}
Now, if $[U^{(\I)}(\vec{\pi})_S\otimes\I_A,\rho^{(\mathbf{R})}_{SA}]=0$ holds, then it must follow that $\rho^{(\mathbf{R})}_{SA}$ has the form of Eq.~(\ref{eqRho_SA_form}) when expressed in the $P_{\lambda,j}$-basis. Due to $|j^{(\mathbf{R})}\rangle_\lambda=\sum_k(R_\lambda)_{kj}|k\rangle_\lambda$, we find
\begin{align}
\rho^{(\mathbf{R})}_{SA}=p_{\rm Bos}\,(R_{\rm Bos}\otimes \I_A) \rho_{{\rm Bos},A}(R^\dagger_{\rm Bos}\otimes \I_A)\oplus\bigoplus_{\lambda\neq\lambda_{\rm Bos}} p_\lambda\frac{\I_{\mathcal{M}_\lambda}}{d_\lambda n_\lambda} \otimes \sum_{k,l} |k\rangle \langle l|_\lambda \otimes \sum_j (R_\lambda)_{kj}(R^\dagger_\lambda)_{jl} \rho_A^{(\lambda,j)},
\end{align}
and thus, for all $R_\lambda$ we obtain the condition
\begin{align}
\sum_j (R_\lambda)_{kj}(R^\dagger_\lambda)_{jl} \rho_A^{(\lambda,j)}=0,
\end{align}
whenever $k\neq l$. In particular, let us now choose $R_\lambda$ to be rotations of the planes spanned by the first and the $i$-th coordinate axes by an angle $\alpha$. Then we have $(R_\lambda)_{11}=(R_\lambda)_{ii}=\cos{\alpha}$, $(R_\lambda)_{1i}=-(R_\lambda)_{i1}=\sin{\alpha}$ and all other entries being equal to those of the identity. With this, the condition above implies $\rho_{A}^{(\lambda,1)}=\rho_{A}^{(\lambda,i)}$ for any $i$, and as a consequence, $\rho^{(\lambda,j)}_A$ must not depend on $j$ which shows the claim.

Finally, a state $\rho_{SA}$ which is invariant under all possible quantum permutations must have the form
\begin{align}
\rho_{SA}=p_{\rm Bos}\, \rho_{{\rm Bos},A}\oplus\bigoplus_{j,\lambda\neq\lambda_{\rm Bos}} p_{\lambda} \frac{\I_{\lambda,j}}{d_\lambda n_\lambda} \otimes \rho_A^{(\lambda)}
=p_{\rm Bos}\, \rho_{{\rm Bos},A}\oplus\bigoplus_{\lambda\neq\lambda_{\rm Bos}} p_{\lambda} \frac{\I_{\mathcal{M}_\lambda\otimes\mathcal{N}_\lambda}}{d_\lambda n_\lambda} \otimes \rho_A^{(\lambda)},
\end{align}
with non-bosonic post-measurement states
\begin{align}
\frac{\I_{\mathcal{M}_\lambda\otimes\mathcal{N}_\lambda}}{d_\lambda n_\lambda} \otimes \rho_A^{(\lambda)}.
\end{align}
Tracing out the ancillary system $A$, we obtain the completely mixed state for any $\lambda\neq \lambda_{\rm Bos}$.

\section*{Supplementary Note 5 -- Quantum Reference Frames}

\subsection{Application to Bosons and Fermions}
\label{supp:sec:QRFsApplBosFerm}

In this subsection, we provide a swift introduction to Quantum Reference Frames (QRFs), thereby motivating the form of QRF transformations as conditional transformations, and we show how these can be applied in the context of Bosons and Fermions. In particular, we will see that the QRF transformation picture corresponds to the implementation of some symmetry \textit{on the level of observables}: only observables invariant under the symmetry are deemed measurable, but all quantum states are allowed as descriptions of preparation procedures. This is the convention of invariance that we adopt for this subsection (see the Methods section in the main text).

Let us begin with a simple illustrative example of QRF transformations. Consider three particles $A, B, C$ in one dimension subject to translation-invariance, following the analysis of Giacomini et al.~\cite{Giacomini}. Suppose that the state in position representation is a product state,
\begin{align}
|\psi_{|A}\rangle=|0\rangle_A\otimes \frac 1 {\sqrt{2}} \left(\strut|1\rangle_B+|2\rangle_B\right)\otimes |3\rangle_C
\end{align}
(to prevent some mathematical difficulties, let us assume that all positions are integers, like on a lattice). This has sometimes been said to describe the quantum state ``as seen by $A$'', because $A$ is placed in the origin. Under this intuition, we would obtain the state ``as seen by $C$'' by translating everything by $-3$,
\begin{align}
|\psi_{|C}\rangle=|-3\rangle_A\otimes \frac 1 {\sqrt{2}} \left(\strut|-2\rangle_B+|-1\rangle_B\right)\otimes |0\rangle_C.
\end{align}
But what would the state ``as seen by $B$'' look like? No proper translation can move $B$ to the origin. The idea is now to translate in superposition: one of the branches of $\psi_{|A}$ is translated by $-1$, and the other by $-2$, and this is done coherently, obtaining
\begin{align}
|\psi_{|B}\rangle=|0\rangle_B\otimes\frac 1 {\sqrt{2}}\left(\strut |-1\rangle_A\otimes|2\rangle_C+|-2\rangle_A\otimes|1\rangle_C\right).
\end{align}
This is typically interpreted as having the particles $A$, $B$ and $C$ themselves as quantum reference frames, i.e.\ quantum analogues of classical rods or clocks, and notions of entanglement being frame-dependent. Here we take a different perspective described in~\cite{KrummHoehnMueller,HoehnKrummMueller}: there is an algebra of observables $X$ that can be measured under a certain condition of translation-invariance, and we have
\begin{align}
\langle\psi_{|A}|X|\psi_{|A}\rangle=\langle\psi_{|B}|X|\psi_{|B}\rangle=\langle\psi_{|C}|X|\psi_{|C}\rangle
\end{align}
for all of them. This means that the $\psi_{|S}$ are alternative, equally valid descriptions of one and the same quantum state. Choosing one representation over another can be interpreted as choosing a quantum coordinate system; in this case, a system that aligns the origin with one of the particles. A \textit{QRF transformation} is a unitary transformation that maps from one quantum coordinate system to another. The starting point are the global translations $T_d$, translating the global state by distance $d$, acting as
\begin{align}
V(T_d)\left(\strut|a\rangle_A|b\rangle_B|c\rangle_C\right)=|a+d\rangle_A|b+d\rangle_B|c+d\rangle_C.
\label{eqRepOfTranslation}
\end{align}
QRF transformations are coherently controlled versions of this. For example, mapping from the quantum coordinate system relative to $A$ to the one relative to $B$ is achieved by the map
\begin{align}
U_{A\to B}=\sum_{d\in\mathbb{Z}} V(T_{-d})P_d,
\end{align}
where $\{P_d\}_{d\in\mathbb{Z}}$ is a projective measurement of the distance $d$ between particles $B$ and $A$, i.e.
\begin{align}
P_d|a\rangle_A|b\rangle_B|c\rangle_C=\left\{\begin{array}{cl} |a\rangle_A|b\rangle_B|c\rangle_C & \mbox{if }b-a=d\\
0 & \mbox{otherwise}.
\end{array}\right.
\end{align}
It implements the transformation from QRF $A$ to $B$, i.e.\ $U_{A\to B}|\psi_{|A}\rangle=|\psi_{|B}\rangle.$ It is a global translation, conditioned on a translation-invariant quantity, namely on the relative distance between $A$ and $B$. QRF transformations in general are understood as coherently controlled classical reference frame changes~\cite{HametteGalley}, where the controlling branches are themselves defined in a frame-independent way~\cite{Cepollaro}, or as state-dependent gauge transformations~\cite{Butterfield}.

The states $\psi_{|S}$ relative to the different subsystems $S$ are all physically equivalent to the invariant state $\rho:=\int_{\mathcal{U}} U|\psi_{|S}\rangle\langle\psi_{|S}|U^\dagger\, dU$, where $\mathcal{U}$ is the group of QRF transformations; see~\cite{KrummHoehnMueller} for details. This construction is closely related to the \textit{perspective-neutral framework}, where an invariant pure state $|\psi_{\rm phys}\rangle$ on a subspace $\mathcal{H}_{\rm phys}$ with in general redefined inner product takes this role~\cite{HoehnUniverse2019,Vanrietvelde2020,Hamette2021,Vanrietvelde2023}.

The similarity with our definition of quantum permutations, Definition~\ref{main:def:QuantumPerm} in the main text, is clear: we apply a symmetry transformation conditioned on an invariant quantity. Hence, quantum permutations can be interpreted as QRF transformations, or, slightly more accurately, quantum coordinate transformations. They transform between different quantum conventions for labeling the particles. As a result of Theorem \ref{main:th:QuantPermInvRulesOutPara} in the main text, we find that there exist exactly two groups of quantum permutations, Bosonic and Fermionic ones, depending on the choice of projectively equivalent representations. Bosonic (Fermionic) QRF transformations will leave all Bosonic (Fermionic) observables invariant, thus providing groups of transformations between different but physically equivalent descriptions of Bosonic (Fermionic) quantum states.   

For example, for $N=2$ Fermions, the quantum permutations are given by $V_-(\vec{\pi}) = \sum_{\lambda, j}  \sgn(\pi_{\lambda,j}) U(\pi_{\lambda,j}) (\I_\lambda \otimes P_{\lambda,j})$. Hence the exchange of the two particles, denoted $(1,2)$, is represented as $V_-(\vec{\pi} = \overrightarrow{(1,2)})$, where $\overrightarrow{(1,2)}_{\lambda,j} = (1,2)$ for all $\lambda,j$, which simplifies to $V(\overrightarrow{(1,2)})= \sgn((1,2))U((1,2)) = -U(\rm swap)$. The measurable observables are all the operators $X$ that are fully supported on the antisymmetric subspace. Suppose we have a quantum state that we describe as $|\psi_{\rm asc}\rangle=|1,4\rangle$, where by convention we decided to describe the particles in ascending order. Then we could equivalently decide to describe them in descending order via $|\psi_{\rm des}\rangle:=V(\overrightarrow{(1,2)})|\psi_{\rm asc}\rangle=-|4,1\rangle$, and then
\begin{align}
\langle\psi_{\rm asc}|X|\psi_{\rm asc}\rangle=\langle\psi_{\rm des}|X|\psi_{\rm des}\rangle
\end{align}
for all measurable observables $X$. The states $\ket{\psi_{\rm asc}}$ and $\ket{\psi_{\rm des}}$ are therefore physically equivalent descriptions. These are not permutationally invariant, however the state $\ket{\psi_{\rm phys}} = \frac{1}{\sqrt{2}}(\ket{1,4} - \ket{4,1})$ is invariant (not just as a state, but also as a vector) under $V((1,2))$ and is physically equivalent to the states $\ket{\psi_{\rm asc}}$ and $\ket{\psi_{\rm des}}$. From this simple example, it is not obvious why the minus sign is significant in the definition of $V(\overrightarrow{(1,2)})$. Indeed, for unconditional permutations (i.e.\ $\pi_{\lambda,j} = \pi_{\lambda',j'}$), the choice of representation $-U(\rm swap)$ occurring in $V_-(\vec{\pi})$ is equivalent up to a global phase (hence also physically equivalent) to the choice $U(\rm swap)$. The choice of representation $-U(\rm swap)$ becomes significant for conditional permutations as becomes evident in a slightly more involved example:
\begin{example}
\label{ExTwoFermions}
Suppose that we have $N=2$ Fermions, such that the measurable observables are those supported on the antisymmetric subspace. Representing the swap of the two particles by $e^{i\theta}U(\rm swap)$ with some phase $\theta$, consider the operator $X=|\{2,3\}\rangle\langle\{1,4\}|_-$, where $|\{x_1,x_2\}\rangle_-:=\frac 1 {\sqrt{2}}\left(\strut|x_1,x_2\rangle-|x_2,x_1\rangle\right)$, and a state $|\psi\rangle=\frac 1 {\sqrt{2}}\left(\strut |2,3\rangle+|1,4\rangle\right)$. We apply the following quantum permutation: we swap the two particles if their distance is larger than $2$, and otherwise we do nothing. This maps the state $|\psi\rangle$ to $|\psi'\rangle=\frac 1 {\sqrt{2}}\left(\strut |2,3\rangle+e^{i\theta}|4,1\rangle\right)$, and
\begin{align}
\langle\psi|X|\psi\rangle=\langle\psi'|X|\psi'\rangle
\end{align}
if and only if $e^{i\theta}=-1$, explaining the choice of phase above. It follows from Theorem~\ref{main:th:QuantPermInvRulesOutPara} in the main text that this equation is true for all $X$ on the antisymmetric subspace. Hence, this quantum permutation maps between physically equivalent descriptions of the quantum state.
\end{example}
In two upcoming publications, we will further elaborate on this usage of the internal quantum reference frame  formalism for Bosons and Fermions: we will show how it clarifies the question of entanglement of indistinguishable particles in a particularly transparent way~\cite{upcoming}, and  that it is a simple example of a natural generalization of the perspective-neutral framework~\cite{Vanrietvelde2020,Hamette2021} where internal quantum reference frames need not correspond to subsystems~\cite{Upcoming2}.

\subsection{Implications for quantum covariance principles}
\label{supp:sec:ImplicationsQuantCovPrinciples}

An ambitious goal of the internal quantum reference frames (QRF) research program is to obtain physical predictions for some scenarios that would ultimately be described by quantum gravity. As an example, consider a thought experiment by de la Hamette et al.~\cite{QRFIndefiniteMetric}. We have a large mass (for the sake of the argument, planet Earth) in a superposition of  locations relative to some reference system, and a test particle moving in the resulting gravitational field. We cannot appeal to the known physical laws to  predict the movement of the test particle, since we expect to obtain some sort of superposition of gravitational fields and spacetime geometries which would fall into the regime of quantum gravity. However, the authors of~\cite{QRFIndefiniteMetric} suggest that it makes sense to postulate an ``extended symmetry principle'' described as the ``covariance of dynamical laws under quantum coordinate transformations''. Based on this principle, one may attempt to transform into a quantum coordinate system where Earth's location and hence the resulting gravitational field are definite, but the test particle is in a superposition state. Based on the generalized covariance principle, one would then solve the quantum equations of motion of the test particle in the  corresponding classical gravitational field and transform back, obtaining a description of the movement of the test particle in a superposition of gravitational fields. 

The results of our work show a subtle difficulty of such approaches: quantum coordinate transformations do not necessarily preserve all physical predictions, but only some of them. Before we proceed, let us remark here that the interaction of the reference system with the gravitational field of the planet is considered to be negligible by the authors of \cite{QRFIndefiniteMetric}, which is a necessary assumption to establish their result. While, as a consequence of that, the scenario they consider is not affected by the subtleties we raise, it is nevertheless instructive to analyze it in more detail, in order to illustrate our point.

Recall that the QRF formalism can be interpreted as implementing some symmetry on the level of observables (in the terminology of Subsection \hyperref[supp:sec:QRFsApplBosFerm]{I} in Supplementary Note 5 and the \hyperref[methods:sec:QRFstates]{Methods} section in the main text), while allowing arbitrary non-invariant density operators to describe quantum states. As the authors discuss, their argumentation applies in a physical regime where it makes sense to speak  of the positions of the involved physical systems and the associated global translations. The initial state of a reference system $R$, the mass $M$ and the test particle $S$ reads
\begin{align}
|\psi\rangle_{RMS}^{(R)}=|0\rangle_R\frac 1 {\sqrt{2}}\left(|x_M^{(1)}\rangle_M + |x_M^{(2)}\rangle_M\right)|x_S\rangle_S,
\label{eqQRFIndefinite1}
\end{align}
where $x_T$ denotes the position of subsystem $T\in\{R,M,S\}$ and $|\psi\rangle_T$ is the state of subsystem $T$. Now they use the QRF transformations described in Subsection \hyperref[supp:sec:QRFsApplBosFerm]{I} in Supplementary Note 5 above to move into a quantum coordinate system where the large mass $M$ is in a definite position:
\begin{align}
|\psi\rangle_{MRS}^{(M)}=|0\rangle_M\frac 1 {\sqrt{2}}\left(e^{-iPx_M^{(1)}}|-x_M^{(1)}\rangle_R |x_S-x_M^{(1)}\rangle_S + e^{-iPx_M^{(2)}}|-x_M^{(2)}\rangle_R |x_S-x_M^{(2)}\rangle_S\right),
\label{eqQRFIndefinite2}
\end{align}
where $P=0$ in~\cite{QRFIndefiniteMetric}. Note that the transition from~(\ref{eqQRFIndefinite1}) to~(\ref{eqQRFIndefinite2}) is achieved by applying the representation of the translation group~(\ref{eqRepOfTranslation}) to the initial state branch by branch. However, there are many such representations $V_P$, one for every total momentum $P$:
\begin{align}
V_P(T_d)|a\rangle|b\rangle|c\rangle=e^{iPd} |a+d\rangle|b+d\rangle|c+d\rangle.
\end{align}
In~(\ref{eqQRFIndefinite2}) above, we give the general result if the representation according to some value of $P$ is chosen.

From this for $P=0$, the authors of~\cite{QRFIndefiniteMetric} use standard methods to calculate the time-evolved state $|\psi(t)\rangle_{MRS}^{(M)}$, under the assumption that the reference system $R$ is far away enough such that it does not feel the gravitational pull of $M$. Let us drop this assumption here. We obtain
\begin{align}
|\psi(t)\rangle_{MRS}^{(M)}=|0\rangle_M\frac 1 {\sqrt{2}}\left(e^{-i\Phi^{(1)}} e^{-i P x_M^{(1)}}|-x_M^{(1)}+\delta x^{(1)}(t)\rangle_R |\tilde x_S^{(1)}(t)\rangle_S+e^{-i\Phi^{(2)}} e^{-i P x_M^{(2)}}|-x_M^{(2)}+\delta x^{(2)}(t)\rangle_R|\tilde x_S^{(2)}(t)\rangle_S\right),
\end{align}
where $\tilde x_S^{(i)}(t)=x_S(t)-x_M^{(i)}$, and $\delta x^{(i)}(t)$ denotes the position of the reference system $R$ at time $t$ in branch $i$, where the movement originates in the gravitational pull from $M$. We expect that $\delta x^{(i)}\approx 0$ if $R$ is very far away from $M$ in branch $i$, which is the case considered in~\cite{QRFIndefiniteMetric}. Then the authors of~\cite{QRFIndefiniteMetric} use the inverse QRF transformation to map back into $R$'s quantum coordinate system. In our more general case, doing so yields
\begin{align}
|\psi(t)\rangle_{RMS}^{(R)}=|0\rangle_R\frac 1 {\sqrt{2}}\left(e^{-iP\delta x^{(1)}(t)} e^{-i\Phi^{(1)}} |x_M^{(1)}-\delta x^{(1)}(t)\rangle_M|\tilde x_S^{(1)}(t)+x_M^{(1)}-\delta x^{(1)}(t)\rangle_S\right.\nonumber\\
\left.+e^{-i P \delta x^{(2)}(t)} e^{-i\Phi^{(2)}}|x_M^{(2)}-\delta x^{(2)}(t)\rangle_M|\tilde x_S(t)+x_M^{(2)}-\delta x^{(2)}(t)\rangle_S\right).
\label{eqMSEntangled}
\end{align}
Hence, the extended symmetry principle predicts entanglement between the test particle $S$ and the mass $M$, and a relative phase of the corresponding quantum state~(\ref{eqMSEntangled}) of
\begin{align}
\Delta\varphi:=\Phi^{(2)}-\Phi^{(1)}+P\left(\delta x^{(2)}(t)-\delta x^{(1)}(t)\right).
\end{align}
Unless $\delta x^{(1)}(t)=\delta x^{(2)}(t)$ (which holds for the scenario in~\cite{QRFIndefiniteMetric}, where both are zero), the predicted relative phase depends on $P$, which labels the choice of representation of the translation group. Consequently, the total momentum $P$ attains physical significance. Let us contrast this with a physical postulate formulated in~\cite{QRFIndefiniteMetric}: \textit{``Physical laws retain their form under quantum coordinate transformations.''} What the calculation above shows is that this cannot literally be true for \textit{all} quantum coordinate transformations, because some predictions depend on the chosen representation, which is labelled by $P$. Therefore, we suggest the following modification of the postulate:

\textbf{Physical laws retain their form under a suitable representation of the quantum coordinate transformation group.}

The extended covariance principle of~\cite{QRFIndefiniteMetric} is but one example of a larger class of quantum generalizations of symmetry principles considered in the literature, including also, for example, proposals for a quantum version of the equivalence principle~\cite{HardyQEP,GiacominiBrukner2020,GiacominiBrukner2022}, or a notion of quantum conformal symmetries~\cite{Kabel2024}. Our results underline the predictive power of such principles (we use them to rule out parastatistics), but they also motivate some caution: only \textit{some} quantum coordinate transformations will in general preserve the physical predictions.

This is demonstrated by our results on the permutation group. For $N$ Fermions, we have shown that all physical predictions (that is, the expectation values of antisymmetric observables) are invariant under the sign representation \begin{equation}
V_-(\vec{\pi})=\sum_{\lambda,j} {\rm sgn}(\pi_{\lambda,j})U_\lambda(\pi_{\lambda,j})\otimes P_{\lambda,j},
\label{supp:eq:QPermutation2}
\end{equation}
but not under the standard representation 
\begin{equation}
V_+(\vec{\pi})=\sum_{\lambda,j} U_\lambda(\pi_{\lambda,j})\otimes P_{\lambda,j},
\label{supp:eq:QPermutation1}
\end{equation}
of the quantum permutation group. Indeed, the latter type of quantum coordinate maps leads to experimentally detectable physical consequences, as described in the scheme by Roos et al.~\cite{Roos} for the detection of the Fermionic exchange phase. Hence,  additional physical arguments are in general needed to obtain the correct gauge phases~\cite{Nauenberg,MarlettoVedral,Dobkowski} for the construction of quantum coordinate transformations.

\section*{Supplementary Note 6 -- Relation to other notions of symmetry in previous work}

In this section, we describe how our notions of complete invariance and of invariance under quantum permutations (and their generalizations to other groups) are related to other notions of symmetry that have appeared in the literature: to strong/weak symmetry, gauge symmetries, and the notions of coherent vs.\ incoherent twirling for quantum reference frames.

\subsection{Strong versus weak and exact versus average symmetry}

The requirement that a density operator $\rho_S$ has full support on a subspace transforming under a one-dimensional representation of a compact group $G$, i.e.\ a subspace of the form $\cm_\lambda \otimes \cn_\lambda$ with $\cm_\lambda \simeq \Cl$, is sometimes known as \emph{strong symmetry} or \emph{exact symmetry}~\cite{Sala,Lessa_2024}. In this case every element $\ket{\psi_i}$ in an ensemble  $\{\ket{\psi_i}\}_i$ corresponding to the density operator $\rho_S = \sum_i p_i \ketbra{\psi_i}{\psi_i}_S$ is such that $\ket{\psi_i}$ transforms under the same one-dimensional irreducible  representation of $G$ for every element in the ensemble: $U_S(g) \ket{\psi_i}_S = e^{i \theta(g)} \ket{\psi_i}_S$ for all $\ket{\psi_i}_S \in \{\ket{\psi_i}_S\}_i$. This is equivalent to the requirement $U_S(g) \rho_S = e^{i \theta(g)} \rho_S$.

This notion is contrasted to the standard notion of invariance of a density operator, which just requires that the density operator is invariant: $U(g) \rho_S U^\dagger(g) = \rho_S$, and does not require individual elements in its decomposition to be invariant. Thus the density operator, interpreted as a proper mixture, is invariant as a statistical object, even though the individual elements of the ensemble need not be invariant. This standard invariance is sometimes known as \emph{weak symmetry} or \emph{average symmetry}, in contrast to the strong/exact symmetry described above.

In this language, Theorem \ref{main:th:CompleteInvariance} in the main text (generalized to arbitrary compact groups in~\Cref{TheoremCompleteInvariancegeneral}) establishes that complete invariance of $\rho_S$ is equivalent to exact symmetry of $\rho_S$.

The distinction between exact and average symmetry was first applied to Lindbladians in the literature on open quantum systems. Definition 4 of~\cite{Ma_average_2023}  states that for a strongly symmetric state $\rho_S = U_S(g) \rho_S$ there exists a purification $\ket{\psi}_{SA}$ which is invariant (as a ray) under $U_S(g) \otimes \I_A$. Our ~\Cref{TheoremCompleteInvariancegeneral} establishes the opposite direction of this statement also, showing that strong symmetry entails that every purification  (and indeed every extension) of the state is completely invariant.

The novel aspect of the present work, with regards to the notion of complete invariance, is its application to permutation invariance, thereby ruling out parastatistics. In the open systems literature it is typically applied to issues such as spontaneous symmetry breaking, typically involving $\Zl_2$ or $\U(1)$ symmetry.

\subsection{Complete invariance from local gauge symmetry}

At first glance, the requirement of complete invariance, namely invariance of an extension $\rho_{SA}$ under $\left(U_S(g) \otimes \I_A \right) \bullet \left(U_S^\dagger(g) \otimes \I_A\right)$, may seem too strong, or even arbitrary. Could it not be the case that the ancilla space $\ch_A$ carries a non-trivial representation $U_A(g)$, and could one not require invariance of $\rho_{SA}$ under $\left(U_S(g) \otimes U(g)_A \right)\bullet \left(U_S^\dagger(g) \otimes U_A^\dagger(g)\right) $ instead?

The latter type of invariance occurs in the case of spacetime symmetries, for example translation invariance, where physics is expected to remain invariant under a global translation of every system.  This is an instance of a global symmetry, which is typically understood to be distinct from a gauge symmetry. Global symmetries act as $\big(U_S(g) \otimes U_A(g) \big)\bullet \big(U_S^\dagger(g) \otimes U_A^\dagger(g)\big)$, and they map between physically equivalent descriptions of the quantum state under a lack of classical external reference frame for $G$ in the quantum information approach~\cite{Bartlett2007}. As can be seen immediately by tracing over $A$, invariance of a purification $\ket{\psi}_{SA}$ of $\rho_S$ under this symmetry implies that $\rho_S$ has weak symmetry under $U(g)$, i.e. $U_S(g) \rho_S U_S^\dagger(g) = \rho_S$. In~\Cref{thm:invariance_purification} (see also~\cite{Ma_average_2023}) we show the converse direction: for any invariant state $\rho_S$ one can find an ancilla $\ch_A$ and purification $\ket{\Psi}_{SA}$ such that $\ket{\Psi}_{SA}$ is invariant under $U_S(g) \otimes U_A(g)$. Thus, the requirement of weak symmetry of a state $\rho_S$ can be seen to follow from the existence of a purification which is invariant under a global symmetry.

A gauge symmetry is local if it acts non-trivially only on the system of interest, like e.g.\ the $\U(1)$ gauge symmetry of quantum electrodynamics. In the case of local gauge invariance, one expects invariance of the global quantum state under $\big(U_S(g) \otimes \I_A \big)\bullet \big(U_S^\dagger(g) \otimes \I_A\big)$. Permutation symmetry is a symmetry that acts locally on a system $S$ of $N$ particles.
Thus the requirement of complete invariance under permutations follows from the understanding of permutation invariance as a gauge symmetry and not a global symmetry.

\subsection{Global states in various approaches to quantum references frames}

There are a number of different frameworks for the study of quantum reference frames, including the perspective-neutral framework~\cite{HoehnUniverse2019,Vanrietvelde2020,Hamette2021,Vanrietvelde2023}, the operational framework~\cite{Miyadera2016,Loveridge2017,Loveridge2018,Loveridge2020,GlowackiThesis,Carette2023,Glowacki2024}, and the  quantum information-theoretic approach~\cite{Kitaev2004,Bartlett2007,GourSpekkens,GourMarvianSpekkens2009,MarvianThesis,MarvianSpekkens2014}, amongst others. In all these approaches there exists a representation $U(g)$ of the symmetry group $G$ of interest. In the perspective-neutral approach, the global state is assumed to transform under a trivial representation of $G$, i.e.\ $U(g) \ket \psi = \ket \psi$, whereas in the operational and the  quantum information-theoretic frameworks, the global state is assumed to be weakly symmetric under $U(g)$, i.e.\ $U(g) \rho U^\dagger(g) = \rho$. 

The requirement of  transforming trivially under the representation $U(g)$ is in fact equivalent to being strongly symmetric as we now show. Any state which is strongly symmetric and transforms under $U(g)$ as $e^{i \theta(g)}$ transforms trivially under the projectively equivalent representation $e^{ - i\theta(g)}U(g)$. Hence requiring strong symmetry under $U(g)$ is equivalent to requiring that the states transform trivially under some representation $V(g)$ which is projectively equivalent to $U(g)$.

Any non-invariant state $\rho$ can be mapped to a weakly symmetric state $\tilde \rho$  which gives the same probabilities on all invariant observables via the incoherent, or weak, twirl: $\tilde \rho = \int_{g \in G} U(g) \rho U^\dagger(g) dg$. A non-invariant state $\rho$ can be mapped to a strongly symmetric state $\rho_{\rm phys}$ which gives the same probabilities on all observables with full support on the trivial subspace via the coherent, or strong, twirl: $\rho_{\rm phys} = \int_{g \in G}\int_{g' \in G} U(g) \rho U^\dagger(g') dgdg'$. The action of the coherent twirl on pure states is $\ket \psi \mapsto \int_g U(g) \ket{\psi} dg$.

Given a physical scenario with a symmetry $G$, imposing the additional requirement of complete invariance (or not) allows one to single out the perspective-neutral framework (or not) as the relevant approach. As discussed above, this requirement could be motivated by an understanding of $U(g)$ as a gauge symmetry, or by appealing to the church of the larger Hilbert space. An explicit argument showing that the coherent twirl, but not the incoherent twirl, fulfills the requirements needed to define a gauge theory can be found in~\cite[Section II. B. 3]{HoehnKotechaMele}. In contrast, the incoherent twirl is defended in~\cite{Poulin_2005} by appeal to Bayesian probability.

The generalization of Theorems \ref{main:th:QuantPermInvRulesOutPara} and \ref{methods:th:MainTheorem2} in the main text beyond the quantum permutation group, showing how invariance under quantum permutations gives rise to the perspective-neutral subspace, will be given in upcoming work~\cite{Upcoming2}.

\end{document}